\newtheorem{theorem}{Theorem}[section]
\newtheorem{lemma}[theorem]{Lemma}
\newtheorem{observation}[theorem]{Observation}
\newtheorem{claim}[theorem]{Claim}
\newtheorem{definition}[theorem]{Definition}
\newcommand{\eps}{\varepsilon}
\newenvironment{lemma-repeat}[1]{\begin{trivlist}
\item[\hspace{\labelsep}{\bf\noindent Lemma \ref{#1} }]\em }%
{\end{trivlist}}
\newenvironment{theorem-repeat}[1]{\begin{trivlist}
\item[\hspace{\labelsep}{\bf\noindent Theorem \ref{#1} }]\em }%
{\end{trivlist}}
\newcommand*\samethanks[1][\value{footnote}]{\footnotemark[#1]}
\newcommand{\Ltop}{L_{top}}
\newcommand{\Wmax}{W}
\newcommand{\InIS}{\texttt{InIS}}
\newcommand{\NotInIS}{\texttt{NotInIS}}
\newcommand{\weightUpdate}{weightUpdate}
\newcommand{\removed}{removed}
\newcommand{\addedToIS}{addedToIS}
\newcommand{\Wait}{\texttt{Wait}}
\newcommand{\Dvi}{D_{v,i}}
\newcommand{\Dui}{D_{u,i}}
\newcommand{\Dnvi}{D_{N(v),i}}
\newcommand{\MIS}{\texttt{MIS}(G)}
\begin{document}
\begin{titlepage}
	\title{Distributed Approximation of Maximum Independent Set and Maximum Matching }
	\author{Reuven Bar-Yehuda\thanks{Technion, Department of Computer Science, \texttt{\{reuven, ckeren\}@cs.technion.ac.il, gregory.schwartzman@gmail.com}. Supported in part by the Israel Science Foundation (grant 1696/14).}
		\and Keren Censor-Hillel\samethanks \and Mohsen Ghaffari\thanks{ETH Zurich, \texttt{ghaffari@mit.edu}.} \and Gregory Schwartzman\samethanks[1]}
		\date{}
	\maketitle
	
	\begin{abstract}
		We present a simple distributed $\Delta$-approximation algorithm for maximum weight independent set (MaxIS) in the $\mathsf{CONGEST}$ model which completes in $O(\MIS\cdot \log\Wmax)$ rounds, where $\Delta$ is the maximum degree, $\MIS$ is the number of rounds needed to compute a maximal independent set (MIS) on $G$, and $\Wmax$ is the maximum weight of a node. 
		Plugging in the best known algorithm for MIS gives a randomized solution in $O(\log n \log\Wmax)$ rounds, where $n$ is the number of nodes.
		We also present a deterministic $O(\Delta +\log^* n)$-round algorithm based on coloring.
		
		We then show how to use our MaxIS approximation algorithms to compute a $2$-approximation for maximum weight matching without incurring any additional round penalty in the $\mathsf{CONGEST}$ model. We use a known reduction for simulating algorithms on the line graph while incurring congestion, but we show our algorithm is part of a broad family of \emph{local aggregation algorithms} for which we describe a mechanism that allows the simulation to run in the $\mathsf{CONGEST}$ model without an additional overhead.
		
		Next, we show that for maximum weight matching, relaxing the approximation factor to ($2+\eps$) allows us to devise a distributed algorithm requiring $O(\frac{\log \Delta}{\log\log\Delta})$ rounds for any constant $\eps>0$. For the unweighted case, we can even obtain a $(1+\eps)$-approximation in this number of rounds. These algorithms are the first to achieve the
provably optimal round complexity with respect to dependency on $\Delta$.
	\end{abstract}
	\thispagestyle{empty}
	
\end{titlepage}

\section{Introduction}

We address the fundamental problems of approximating the maximum independent set and the maximum matching of a graph in the classic distributed $\mathsf{CONGEST}$ model~\cite{Peleg:2000}. In this model, the $n$ nodes of the graph communicate in synchronous rounds, by sending one $O(\log n)$-bit message per round along links of the graph. Table~\ref{tab: results} summarizes our contributions. Below, we elaborate on our results, the challenges, and how we overcome them.
\begin{table}[htbp]
\centering
%
%
%
%
%
%

\begin{tabular*}{\linewidth}{@{}l@{\extracolsep{\fill}}c@{}c@{}c@{}r@{}}
	\toprule
	Problem       &  Approximation & Prev. Results  & Our Results & notes \\
	\midrule
	
	MaxIS~|~MWM    &
	$\Delta$~|~2&
	---&
	$O(\MIS \log\Wmax )$ &
	rand.\\
	
	MaxIS~|~MWM    &
	$\Delta$~|~2&
	--- &
	$O(\Delta +\log^* n )$  &
	det.\\

	MWM  &
	$2+\eps$ &
	$O(\log n)$ &
	$O(\log\Delta/\log\log\Delta)$ &
	rand.\\
	
	MCM  &
	$1+\eps$ &
	$O(\log n)$ &
	$O(\log\Delta/\log\log\Delta)$ &
	rand.\\
	
%
	
	
	\bottomrule
\end{tabular*}
\caption{Summary of results for the $\mathsf{CONGEST}$ model. Here $n$ denotes the number of nodes, $\Delta$ is their maximum degree, and $W$ is the maximum weight.
}
\label{tab: results}
\end{table}

\vspace{-14pt}
\subsection{Our Results, Part I: Better Approximations}
\paragraph{$\Delta$-approximation algorithms for maximum weight independent set.} We present a simple distributed $\Delta$-approximation algorithm for maximum weight independent set (MaxIS), where $\Delta$ is the maximum degree, which completes in $O(\MIS\cdot \log\Wmax)$ rounds, where $\MIS$ is the number of rounds needed to compute a maximal independent set (MIS) on $G$, and $\Wmax$ is the maximum weight of a node. As standard, we assume that $\Wmax$ is at most polynomial in $n$, so that the weight of each edge can be described in one message. Our algorithm adapts the \emph{local ratio} technique~\cite{BarYehudaE1985} for maximization problems \cite{bar2001unified} to the distributed setting in a novel, yet simple, manner. Roughly speaking, in the simplest form of this technique, one repeatedly picks a node $v$ and reduces its weight from every $u\in N(v)$, where $N(v)$ is the set of neighbors of $v$. Every neighbor $u\in N(v)$ whose weight becomes less than or equal to zero is removed from the graph, while $v$ is added to a stack. We repeat this process with the induced graph until no nodes remain. We then begin popping nodes from the stack, adding them to the independent set if they have no neighbors in the set. This yields a $\Delta$-approximation.

The challenge in translating this framework to the distributed setting is that if we allow all nodes to perform weight reductions simultaneously, then the above does not hold. For example, consider a star graph where the weight of the center is larger than the weight of any of its neighbors but smaller than their sum. After a single iteration the weights of all the nodes become negative, and no node gets selected. However, we show that if we first compute an independent set and then go on to perform weight reductions we achieve a $\Delta$-approximation factor, while allowing using the power of parallelism. At each iteration we find an MIS, and the nodes chosen to the MIS perform weight reductions. This process is repeated until no nodes with positive weight remain. Nodes are then added to the independent set in reverse order of removal while maintaining the independence constraints. To analyze the running time, our main technique is to group the nodes into $\log \Wmax$ \emph{layers} based on their weight. At each iteration, all of the nodes from the topmost layer move to lower layers.

This results in a round complexity of $O(\MIS\cdot\log \Wmax)$ in the $\mathsf{CONGEST}$ model. Our algorithm is deterministic apart from using a black-box algorithm to find an MIS at each iteration. Whether our algorithm is randomized or deterministic depends on the MIS algorithm it uses as a black-box.

We also present a deterministic coloring-based algorithm running in $O(\Delta+ \log^* n)$ rounds. Here we first color the graph using $\Delta+1$ colors, and then use each color group as an independent set to perform weight reductions as in the previous algorithm. 

\paragraph{$2$-approximation algorithms for maximum weighted matching.} We use a known reduction  to simulate algorithms on the line graph \cite{kuhn2005price}, our MaxIS $\Delta$-approximation algorithm gives a $2$-approximation for maximum weight matching. Simulating an execution on the line graph in a naive fashion results in a $O(\Delta)$ multiplicative overhead in the $\mathsf{CONGEST}$ model. We show our algorithm is part of a broad family of \emph{local aggregation} algorithms for which we describe a mechanism which allows the simulation to run in the $\mathsf{CONGEST}$ model without added overhead.

Our deterministic coloring-based algorithm has a favorable running time compared to the algorithm presented in \cite{even2015distributed} with parameters that result in a $2$-approximation. Our randomized algorithm improves upon the $(2+\epsilon)$-approximation factor of \cite{lotker2008improved}. Using the maximal matching algorithm of \cite{BarenboimEPS16} on the original graph as an MIS algorithm on the line graph we get a running time of $O((\log \Delta + \log^4\log n)\cdot \log \Wmax)$ \footnote{Note that $\Delta$ and $n$ are the parameters of the original graph and not the line graph.}, with high probability, for the $\mathsf{LOCAL}$ model, and using Luby's classical MIS algorithm\cite{luby1986simple}, we get an $O(\log n \cdot \log \Wmax)$ algorithm\footnote{Here the MIS algorithm is executed on the line graph, so we get $O(\MIS) = O(\log n^2) = O(\log n)$.} for the $\mathsf{CONGEST}$ model. For constant values of $\Wmax$, this is $O(\log n)$ rounds.

\subsection{Our Results, Part II: Faster Approximations}
\paragraph{Approximations with Optimal Time-Complexity:} We provide two approximations algorithms for maximum matching that achieve the optimal round complexity of $O(\log\Delta/\log \log \Delta)$: The first achieves a $(2+\epsilon)$-approximation of maximum weight matching, and the second a $(1+\epsilon)$-approximation of maximum cardinality matching, for any constant $\eps>0$.


These two algorithms improve upon the $O(\log n)$-round algorithms of Lotker et al.\cite{lotkerMatchingImproved} for the same problems and same approximation guarantees. Furthermore, these two algorithms are the first constant-approximation algorithms that achieve an optimal round complexity, matching the $\Omega(\log\Delta/\log \log \Delta)$ lower bound of Kuhn et al.~\cite{kuhn2006price}. We note that this lower bound holds for any constant approximation, and so long as $\log\Delta \leq \sqrt{\log n}$. 

\paragraph{Method Outline}
A key ingredient in both of the above fast algorithms is an improvement of the \emph{nearly-maximal independent set} algorithm of Ghaffari~\cite{Ghaffari2016}. A nearly-maximal independent set is an independent set for which each node in the graph is in the set or has a neighbor in the set with probability at least $1-\delta$ for a small $\delta$.
The main result of \cite{Ghaffari2016} is a maximal independent set algorithm with round complexity of $O(\log\Delta)+2^{O(\sqrt{\log\log n})}$. The central building block in that result was finding a nearly-maximal independent set in $O(\log\Delta)$ rounds. Here, we provide an improved nearly-maximal independent set algorithm with a round complexity of $O(\log\Delta/\log\log \Delta)$. This algorithm builds upon the techniques of \cite{Ghaffari2016}, but with some crucial modifications. The modification is partially inspired by the ideas of the recent vertex-cover approximation algorithm of Bar-Yehuda et al.~\cite{bar2016distributed}, of balancing two types of progresses. While this improvement does not allow us to improve upon Ghaffari's MIS algorithm, it helps us in obtaining our fast maximum matching approximation algorithms, as we discuss next. 

For the $(2+\epsilon)$-approximation, this improved nearly-maximal independent set algorithm is essentially enough. We run it on the line graph of the network graph, and argue that it gives an $(2+\epsilon)$-approximation of the maximum unweighted matching. To argue that the algorithm works in the $\mathsf{CONGEST}$ model, even when run on the line graph of the network graph, we use the property that this nearly-maximal independent set algorithm is a local aggregation algorithm. Then, we extend this approximation algorithm to the \emph{weighted} case, using techniques of~\cite{lotker2009distributed,lotkerMatchingImproved}.

For the unweighted $(1+\epsilon)$-approximation, our goal is to use the general framework of Hopcroft and Karp~\cite{HopcroftKarp1973}, in which we repeatedly search for short non-intersecting augmenting paths and augment the matching with them, hence improving its size. However, in our setting, this does not work as is and poses significant challenges. One key challenge is that, to have the desired approximation factor, we need a much stronger near-maximality guarantee. It does not suffice to have a low probability for each short augmenting path to remain; we need to show that each node has a low probability of having a remaining augmenting path. To overcome the obstacles, first we show how to find a nearly-maximal matching in low-rank hypergraphs and how to modify the algorithm for obtaining the ($1+\epsilon$)-approximation guarantee in the $\mathsf{LOCAL}$ model. 

Making the algorithm suitable for the $\mathsf{CONGEST}$ model is even more demanding, in part because here we cannot explicitly work with the structure of the intersections between short augmenting paths; instead, we need to have a new variant of the near-maximal independent set algorithm that works \emph{on the fly}. At a high level, we first address bipartite graphs, and show how to find a nearly-maximal independent set of short augmenting paths in them. Since the augmenting paths are not known explicitly, an interesting aspect here will be a variant of the dynamic probability adjustments in the algorithm of\cite{Ghaffari2016}. Now, various nodes of a path might decide differently regarding whether to raise or lower its probability. However, we will prove that still the net effect provides a sufficient move in the right direction. We complete by generalizing this from bipartite graphs to all graphs, using an idea of Lotker et al.~\cite{lotkerMatchingImproved}, which essentially transforms the problem into randomly chosen bipartite subgraphs of it.

\subsection{Related Work} 
The maximum independent set problem is known to be NP-hard, as it is complementary to the maximum clique problem, which is one of Karp's 21 NP-hard problems \cite{Karp72}.
In the sequential setting, an excellent summary of the known results is given by \cite{bansal2015approximating}, which we overview in what follows. For general graphs, the best known algorithm achieves a $O(n\log^2 \log n/ \log^3 n)$-approximation factor \cite{feige2004approximating}. Assuming $NP \nsubseteq ZPP$, \cite{haastad1996clique} shows that no $(n^{1-\epsilon})$-approximation exists for every constant $\epsilon > 0$.

When the degree is bounded by $\Delta$, a simple $(\Delta+2)/3$-approximation is achieved by greedily adding the node with minimal degree to the independent set and removing its neighbors \cite{halldorsson1997greed}. The best known approximation factor is $O(\Delta \log \log \Delta / \log \Delta)$ \cite{alon1998approximating, halldorsson1998approximations, halperin2002improved, halldorsson2000approximations, karger1998approximate}.
Conditioned on the Unique Games Conjecture, there exist a $\Omega(\Delta / \log^2 \Delta)$-approximation bound \cite{austrin2009inapproximability}, where $\Delta$ is constant or some mildly increasing function of $n$. Assuming $P\neq NP$, a bound of $\Omega(\Delta / \log ^4 \Delta)$ is given in \cite{chan2013approximation}.

As for the distributed case, \cite{lenzen2008leveraging, czygrinow2008fast} give a lower bound of $\Omega(\log^* n)$ rounds for any deterministic algorithm approximating MaxIS, while \cite{czygrinow2008fast} provide randomized and deterministic approximations for planar graphs. In \cite{bodlaenderbrief}, an $O(1/\epsilon)$-round $\mathsf{LOCAL}$ randomized algorithm for $O(n^{\epsilon})$-approximation is presented for the unweighted case, along with a matching lower bound.

Maximum matching is a classical optimization problem, for which the first polynomial time algorithm was given by Edmonds \cite{edmonds1965maximum, edmonds1965paths} for both the weighted and unweighted case. In the distributed setting, the first algorithm for computing an approximate maximum matching was given in \cite{wattenhofer2004distributed}, where a $5$-approximation factor is achieved w.h.p for general graphs, in $O(\log^2 n)$ rounds.
In \cite{lotker2009distributed} a randomized $(4+\epsilon)$-approximation for the weighted case is given, running in $O(\log n)$ rounds for constant $\epsilon>0$. This was later improved in \cite{lotker2008improved} to achieve a $(2+\epsilon)$-approximation in $O(\log \epsilon^{-1} \log n)$ rounds. In \cite{even2015distributed} a deterministic $(1+\epsilon)$-approximation is given, in $\Delta^{O(1/\epsilon)} + O(1/(\epsilon^2)) \cdot \log^* n$ rounds for the unweighted case, and $\log(\min \{ 1/ w_{min} , n / \epsilon \})^{O(1 / \epsilon)} \cdot (\Delta^{O(1/\epsilon)} + \log^* n)$ rounds for the weighted case, where the edge weights are in $[w_{min}, 1]$. In \cite{CzygrinowH03} a deterministic $(1+\epsilon)$-approximation is given, which finishes after $O(\log^{D(1/\eps)} n)$ rounds, where $D(1/\eps)$ is some function of $1/\eps$.
Due to \cite{kuhn2006price}, every algorithm achieving a constant approximation to the maximum matching problem requires $\Omega(\min\{\log \Delta/ \log\log \Delta, \sqrt{\log n / \log \log n}\})$ rounds.

The first distributed algorithm that uses the local ratio technique is due to~\cite{patt2008distributed}. The local ratio technique was also used in \cite{bar2016distributed} to compute a distributed $(2+\epsilon)$-approximation for weighted vertex cover. In \cite{panconesi2008fast}, a similar technique of weight grouping is used in the primal-dual framework for scheduling.

\section{MaxIS approximation}
\label{sec:MaxIS}
We begin, in Subsection~\ref{sec:sequentialMaxIS}, by showing the idea behind the use of local ratio for approximating MaxIS. This is done by presenting a sequential meta-algorithm and analyzing its correctness. Then, in Subsection~\ref{sec:distributedMaxIS}, we show how to implement this algorithm in the $\mathsf{CONGEST}$ model, and prove the claimed round complexity.

\subsection{Sequential MaxIS approximation via local ratio}
\label{sec:sequentialMaxIS}

Here we provide a sequential $\Delta$-approximation meta-algorithm to be used as the base for our distributed algorithm. The correctness of the algorithm is proved using the local ratio technique for maximization problems \cite{bar2001unified}.
We assume a given weighted graph $G=(V,w,E)$, where $w : V \rightarrow \mathbb{R}_+$ is an assignment of weights for the nodes and the degree of each node is bounded by $\Delta$. A simple $\Delta$-approximation local ratio algorithm exists for the problem \cite{bar2004local}. We rely on the following local ratio theorem for maximization problems \cite[Theorem 9]{bar2004local} in our proof.

\begin{theorem}
	Let $C$ be a set of feasibility constraints on vectors in $\mathbb{R}^n$. Let $w,w_1,w_2 \in \mathbb{R}^n$ be vectors such that $w=w_1+w_2$. Let $x\in \mathbb{R}^n$ be a feasible solution (with respect to $C$) that is $r$-approximate with respect to $w_1$ and with respect to $w_2$. Then $x$ is $r$-approximate with respect to $w$ as well.
\end{theorem}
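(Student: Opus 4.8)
The plan is to unwind the definition of $r$-approximation for a maximization problem and then apply a single, simple inequality relating the optima under $w$, $w_1$, and $w_2$. Recall that for an objective vector $c$, a feasible vector $y$ is \emph{$r$-approximate} if $c \cdot y \geq \mathrm{OPT}_C(c)/r$, where $\mathrm{OPT}_C(c) = \max\{\, c \cdot z : z \text{ satisfies } C \,\}$ is the optimal value of the maximization problem under objective $c$ (here $r \geq 1$). The key observation is that $\mathrm{OPT}_C(\cdot)$ is superadditive in the objective: for any splitting $c = c_1 + c_2$ we have $\mathrm{OPT}_C(c) \leq \mathrm{OPT}_C(c_1) + \mathrm{OPT}_C(c_2)$. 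Indeed, if $z^{\ast}$ attains $\mathrm{OPT}_C(c)$, then $z^{\ast}$ is feasible, so $c_1 \cdot z^{\ast} \leq \mathrm{OPT}_C(c_1)$ and $c_2 \cdot z^{\ast} \leq \mathrm{OPT}_C(c_2)$, whence $\mathrm{OPT}_C(c) = c \cdot z^{\ast} = c_1 \cdot z^{\ast} + c_2 \cdot z^{\ast} \leq \mathrm{OPT}_C(c_1) + \mathrm{OPT}_C(c_2)$.

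With this in hand, the proof is a two-line chain of inequalities. Since $x$ is feasible and $r$-approximate with respect to both $w_1$ and $w_2$, we have $w_1 \cdot x \geq \mathrm{OPT}_C(w_1)/r$ and $w_2 \cdot x \geq \mathrm{OPT}_C(w_2)/r$. Adding these and using $w = w_1 + w_2$ gives
\[
  w \cdot x \;=\; w_1 \cdot x + w_2 \cdot x \;\geq\; \frac{\mathrm{OPT}_C(w_1) + \mathrm{OPT}_C(w_2)}{r} \;\geq\; \frac{\mathrm{OPT}_C(w)}{r},
\]
where the final step is the superadditivity observation above. Hence $x$ is $r$-approximate with respect to $w$, as claimed.

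There is no genuine obstacle here; the only points requiring care are (i) fixing the approximation convention so that the direction of every inequality is consistent — for maximization, a larger objective value is better, and $\mathrm{OPT}_C$ is \emph{superadditive} rather than subadditive, which is the mirror image of the covering/minimization version of the local ratio theorem — and (ii) recording explicitly that feasibility of $x$ is exactly what makes the statement "$x$ is $r$-approximate'' meaningful and licenses comparing $w \cdot x$ against $\mathrm{OPT}_C(w)$. If instead one adopts the alternative convention with $r \leq 1$, where $r$-approximate means $c \cdot y \geq r \cdot \mathrm{OPT}_C(c)$, the identical argument goes through verbatim with $/r$ replaced by $\cdot\, r$ throughout.
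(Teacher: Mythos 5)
The paper does not prove this theorem itself; it is quoted as Theorem~9 of the Bar-Noy--Bar-Yehuda--Freund--Naor--Schieber reference (\texttt{bar2004local}) and used as a black box. Your proof is the standard argument for the maximization version of the local ratio theorem, and it is correct: the only ingredients are the linearity of the inner product, the definition of $r$-approximation, and the fact that a maximizer $z^{\ast}$ for $w$ is feasible and hence gives lower bounds $w_i \cdot z^{\ast} \leq \mathrm{OPT}_C(w_i)$ for $i=1,2$, yielding $\mathrm{OPT}_C(w) \leq \mathrm{OPT}_C(w_1) + \mathrm{OPT}_C(w_2)$. Note also that your argument nowhere assumes the weight vectors or their optima are nonnegative, so it covers the case the paper flags explicitly, namely that $w_1$ may take negative entries.

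One terminological slip worth fixing: the inequality $\mathrm{OPT}_C(c_1 + c_2) \leq \mathrm{OPT}_C(c_1) + \mathrm{OPT}_C(c_2)$ that you state and use is \emph{subadditivity}, not superadditivity. For maximization $\mathrm{OPT}_C(\cdot)$ is subadditive, while for the covering/minimization version it is superadditive (there the optimal value of the sum is \emph{at least} the sum of the optima, which is the direction needed for the standard local-ratio lower bound). Your parenthetical remark has these two labels swapped. The mathematics in your chain of inequalities is unaffected, since you apply the correct inequality; only the name attached to it needs to be corrected.
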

In our case the vector $w$ is the weight vector representing the weight function of $G(V,w,E)$, $x$ is a binary vector indicating which nodes are chosen to the solution and the set of constraints $C$, is the set of independence constraints. We call the graph with weight vector $w_1$ the \emph{reduced graph}  and the graph with weight vector $w_2$ the \emph{residual graph}.

As standard practice with the local ratio technique, the splitting of the weight vector into $w_1,w_2$ is done such that any $r$-approximate solution to the reduced graph can be easily transformed into an $r$-approximate solution to the residual graph, while keeping it an $r$-approximate solution for the reduced graph. This allows us to apply weight reductions iteratively, solving each subproblem while maintaining the constraints. It is important to note that the theorem also holds if the weights in the reduced graph take negative values.

For the specific problem of MaxIS, we note that picking some node $v\in V$ and reducing the weight of $v$ from every $u \in N(v)$ splits the weight vector $w$ into two vectors, $w_1$ and $w_2$. Where $w_2(v)=w(v)$ for every $u\in N(v)$ and zero for every other node, and $w_1=w-w_2$.
Note that any $\Delta$-approximate solution for the reduced graph can be easily turned into a $\Delta$-approximate solution for the residual graph. This is done by making sure that at least some $u \in N(v)$ is in the solution: If this is not the case, we can always add one $u\in N(v)$ to the solution without violating the independence constraints. This only increases the value of our solution, making it $\Delta$-approximate for both the residual and the reduced graphs.

The above solution is sequential by nature. Implementing it directly in the distributed setting will require $O(n)$ rounds. We notice that if two nodes are in different neighborhoods of the graph then this process can be performed by both of them simultaneously without affecting each other. This observation forms the base for our distributed implementation.

We expand this idea by taking any independent set $U \subseteq V$ and for every $v \in U$ reducing the weight of $v$ from every $u \in N(v)$ in parallel. Next, solve the problem for the reduced graph. If for some $v \in U$, every $u\in N(v)$ is not in the solution for the reduced graph, we add $v$ to the solution for the reduced graph. This yields a $\Delta$-approximate solution for the problem.
For the sake of simplicity let $V=[n]$. Let $w_2$ be the weight vector of the residual graph after performing weight reductions as described above for some independent set $U \subseteq V$. By definition $w_2[v] = \sum_{u \in U \cap N(v)} w[u]$. The weight of the reduced graph is given by $w_1 = w - w_2$. Let $x \in \{0,1\}^n$ be some $\Delta$-approximate solution for the reduced graph. The cost of the solution $x$ is $\sum_{v}w[v]x[v]$. Let $x'\in \{0,1\}^n$ be defined as follows:
\begin{align}
	x'[u] =
	\begin{cases}
	1 & u\in U \wedge \forall v\in N(u), x[v]=0 \\
	x[u] & \text{otherwise}
	\end{cases}
\end{align}
 We prove the following lemma (See appendix A.1).
\begin{lemma}
	\label{lem:ind-set-lr}
	$x'$ is a $\Delta$-approximate solution for both the reduced graph and the residual graph.
\end{lemma}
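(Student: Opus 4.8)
The plan is to invoke the local ratio theorem quoted above with $r=\Delta$, with $C$ the set of independence constraints, and with the decomposition $w=w_1+w_2$: once we know that $x'$ is feasible and $\Delta$-approximate with respect to $w_1$ and, separately, with respect to $w_2$, that theorem yields the conclusion (and, beyond the lemma itself, that $x'$ is a $\Delta$-approximation with respect to $w$). Feasibility is immediate: $x$ is an independent set, and $x'$ differs from it only by raising some coordinates of $U$ from $0$ to $1$. A vertex $u\in U$ is raised only when $x$ selects no neighbor of $u$ -- hence $x'$ selects no neighbor of $u$ either, since $x'$ and $x$ agree outside $U$ and $U$ contains no neighbor of $u$ -- and any two raised vertices both lie in the independent set $U$, so they are non-adjacent. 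For $\Delta$-approximation with respect to $w_1$ it suffices to show $w_1\cdot x'\ge w_1\cdot x$, since $x$ is already $\Delta$-approximate for $w_1$; and this holds because $x'$ can exceed $x$ only on coordinates in $U$, where $w_1[u]\ge 0$ (the vertices of $U$ are pairwise non-adjacent, so $u$'s weight is decreased, if at all, only by its own reduction and by at most $w[u]$).

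The heart of the proof is $\Delta$-approximation with respect to the residual graph $w_2$. The identity I would rely on is that the residual weight is a nonnegative combination of closed neighborhoods of the reducing vertices, $w_2=\sum_{u\in U}w[u]\,\mathbf 1_{N[u]}$ with $N[u]=N(u)\cup\{u\}$: each $u\in U$ charges its weight $w[u]$ to every vertex of its closed neighborhood. Two bounds then suffice. First, for \emph{every} independent set $S$ we have $\sum_{v\in S}w_2[v]=\sum_{u\in U}w[u]\,|S\cap N[u]|\le\Delta\sum_{u\in U}w[u]$, because $S\cap N[u]$ is either the singleton $\{u\}$ (if $u\in S$) or a subset of $N(u)$, so it has size at most $\Delta$; in particular the residual optimum is at most $\Delta\sum_{u\in U}w[u]$. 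Second, by construction each $u\in U$ has itself or a neighbor of it selected by $x'$: if $x'[u]=0$ then, by the definition of $x'$, some neighbor $v$ of $u$ has $x[v]=1$, hence $x'[v]=1$. So $|\{v\in N[u]:x'[v]=1\}|\ge 1$ for every $u\in U$, giving $\sum_v w_2[v]\,x'[v]=\sum_{u\in U}w[u]\,|\{v\in N[u]:x'[v]=1\}|\ge\sum_{u\in U}w[u]$. Combining the two bounds, $x'$ captures at least a $1/\Delta$ fraction of the residual optimum, i.e.\ it is $\Delta$-approximate for $w_2$, which establishes the lemma.

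The one step I expect to be delicate is this last one, and specifically the reconciliation of the rule ``when $u\in U$ is uncovered, add $u$ \emph{itself}'' with the fact that residual weight appears to live on the neighbors of $U$. The point is that the reduction also charges $w[u]$ to $u$ -- equivalently, the vertices of $U$ are removed from the reduced graph and hand their full weight to the residual -- so $u$ does carry residual weight $w[u]$; and because $U$ is independent, the star decomposition $w_2=\sum_{u\in U}w[u]\,\mathbf 1_{N[u]}$ is an exact identity with overlaps bounded cleanly by $\Delta$. The rest is routine bookkeeping against the local ratio theorem.
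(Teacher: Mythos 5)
Your proof is correct and follows essentially the same approach as the paper's: decompose $w$ into $w_1$ and $w_2$, note that $w_1 \cdot x' = w_1 \cdot x$ because $w_1[u]=0$ for $u\in U$, bound the residual optimum from above by $\Delta\sum_{u\in U}w[u]$ via the independence constraint on each closed neighborhood, and bound $w_2\cdot x'$ from below by $\sum_{u\in U}w[u]$ because every $u\in U$ has itself or a neighbor selected by $x'$. Your writeup is somewhat more explicit than the paper's (you spell out feasibility of $x'$, write $w_2$ cleanly as $\sum_{u\in U}w[u]\mathbf 1_{N[u]}$, and handle the $u\in S$ versus $u\notin S$ cases separately rather than appealing to the bound $|N(u)|-1$), but these are presentational refinements of the same argument.
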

%
\paragraph{Overview of Algorithm~\ref{alg:seq-maxis}:} The pseudocode is give in Algorithm~\ref{alg:seq-maxis} in appendix A. Using Lemma~\ref{lem:ind-set-lr} we construct a meta-algorithm that at each iteration picks an independent set $U \subseteq V$, reduces the weights of the elements in $U$ from their neighborhood and calls itself recursively with the reduced weights. This implicitly splits the graph into the reduced graph and the residual graph. A recursive call returns a $\Delta$-approximate solution for the reduced graph which is turned into a $\Delta$-approximate solution for both graphs by adding all nodes in the independent set that do not have neighbors in the returned solution. According to the local ratio theorem the final solution is a $\Delta$-approximation. Currently we are only interested in the correctness of the algorithm, thus it does not matter how the set $U$ is picked.
\RestyleAlgo{boxruled}
\LinesNumbered
\newcommand{\SeqAlgMaxIS}
{
\begin{algorithm}[htbp]
	\label{alg:seq-maxis}
	\caption{\texttt{SeqLR}($V,E,w$) - Sequential LR algorithm for maximum independent set}
	\footnotesize
	\If{$V=\emptyset$}{Return $\emptyset$}
	\ForEach{$v \in V$}{
	\If{$w(v) \leq 0 $}
	{
		$V = V \setminus \{v \}$\\
		$E = E \setminus \{(v,u) \mid u \in V \}$	
	}
	}
	Let $U \subseteq V$ be an independent set\\
	Let $w_1=w$ \\
	\ForEach{$u \in U$}{
		\ForEach{$v \in N(u)$}
		{
			$w_1(v) = w(v) - w(u)$	\\
		}		
	}
	$R =$ \texttt{SeqLR}($V,E,w_1$)\\
	$U = U \setminus \bigcup_{v \in R} N(v)$
	
	Return $R \cup U$\\
\end{algorithm}
}
The recursive step of Algorithm~\ref{alg:seq-maxis} returns a $\Delta$-approximate solution for the reduced graph which is then turned into a $\Delta$-approximate solution for the residual graph. Correctness follows from Lemma~\ref{lem:ind-set-lr} combined with a simple inductive argument. In the next section we implement this algorithm in a distributed setting.

\subsection{Distributed MaxIS approximation via local ratio}
\label{sec:distributedMaxIS}

In this section we implement Algorithm~\ref{alg:seq-maxis} in the distributed setting. We present an algorithm which iteratively finds independent sets and finishes after $\log \Wmax$ iterations. This yields a $\Delta$-approximation in $O(\MIS \log \Wmax)$ rounds, where $\MIS$ is the running time of a black-box MIS algorithm used. The algorithm that wraps the MIS procedure is deterministic, while the MIS procedure may be random. If the MIS procedure is random and finishes after $T$ rounds w.h.p then our algorithm requires $O(T\log \Wmax)$ rounds w.h.p. This holds for the $\mathsf{CONGEST}$ model. 

From now on we assume that all node weights are integers in $[\Wmax]$.
The sequential meta algorithm can be implemented distributedly, by having each node in the set perform weight reductions independently of other nodes. The key questions left open in the transition to the distributed setting is how to select our independent set at each iteration and how many rounds we need. Iteratively running the MIS procedure and performing weight reductions does not guarantee anything with regard to the number of nodes removed at each iteration or to the amount of weight reduced.

\newcommand{\DistAlgMaxIS}
{
\begin{algorithm}[htbp]
	\label{alg:dist-maxis}
	\caption{A distributed $\Delta$-approximation for weighted MaxIS, code for node $v$}
	
	//$w(v)$ is the initial weight of $v$, $w_v(v)$ changes during the run of the algorithm\\
	$w_v(v) = w(v) $\\
	$\ell_v(v) = \lceil \log w_v \rceil$\\
	$status = waiting$\\
	\While{true}
	{
		\ForEach{$reduce(x)$ received from $u\in N(v)$}
		{
			$w_v(v) = w_v(v) - x$\\
			$N(v) = N(v) \setminus \{ u \}$\\
			\If{$w_v(v) \leq 0$}
			{
				Send $\removed(v)$ to all neighbors\\
				return \NotInIS \\					
			}			
		}
		\ForEach{$\removed(u)$ received from $N(v)$}
		{
			$N(v) = N(v) \setminus \{ u \}$\\
		}
		$\ell_v(v) = \lceil \log w_v(v) \rceil$\\
		Send $\weightUpdate(v, w_v(v))$ to all neighbors\\		
		
		\ForEach{$\weightUpdate(u, w')$ received from $N(v)$}
		{
			$w_v(u) = w'$\\
			$\ell_v(u) = \lceil \log w_v(u) \rceil$\\
		}
		\If{$status = waiting$}{			
			\If{$\forall u\in N(v), \ell_v(u) \leq \ell_v(v)$}
			{
				$status(v) = ready$\\
				
				\While{$\exists u\in N(v), \ell_v(u) = \ell_v(v) $ and $status(u) \neq ready$}
				{
					\Wait
				}
				$v$ starts running MIS algorithm\\
			}
			
			\If{$v$ in MIS}
			{
				Send $reduce(w_v(v))$ to all neighbors\\
				$w_v(v) = 0$\\
				$status = candidate$
			}
			\Else
			{
				$status = waiting$
			}
	}
	\ElseIf{$status=candidate$}
	{
		\If{$N(v)=\emptyset$}
		{
			Send $\addedToIS(v)$ to all neighbors\\
			Return \InIS \\			
		}
		\If{$\addedToIS(u)$ received from $N(v)$}
		{
			Send $\removed(v)$ to all neighbors\\
			Return \NotInIS \\		
		}
	}
		
	}
\end{algorithm}
}

\paragraph{Overview of the distributed algorithm.} The pseudocode is give in Algorithm~\ref{alg:dist-maxis} in appendix A. The algorithm works by dividing the nodes into \emph{layers} according to their weights. The $i$-th layer is given by $L_i = \{v \mid 2^{i-1}< w(v) \leq 2^i \}$. During the algorithm each node keeps track of the weights (and layers) of neighboring nodes and updates are sent regarding weight changes and node removals. We divide the algorithm into two stages: the \emph{removal stage} and the \emph{addition stage}.

In the removal stage we find an independent set in the graph and perform weight reductions exactly as in the sequential meta algorithm. When finding the MIS, nodes in higher layers are prioritized over nodes in lower layers. A node cannot start running the MIS algorithm as long as it has a neighbor in a higher level. The most important thing to note here is that nodes in the topmost level never need to wait. A node who is selected to the MIS during the removal stage is a \emph{candidate node}. A node whose weight becomes zero or negative without being added to the MIS is said to be a \emph{removed node}. Removed nodes output \NotInIS~and finish, while candidate nodes continue to the addition stage. Both candidate and removed nodes are deleted from the neighborhood of their neighbors.
\par
In the addition stage, a candidate node $v$ remains only with neighbors with higher weights. We say these nodes have \emph{precedence} over the node $v$. A node $v$ may add itself to the solution only if it has no neighboring nodes which have precedence over it. After a node is added to the solution, all of its neighbors become removed. This corresponds line 13 in the sequential meta algorithm.

The correctness of the distributed algorithm follows directly from the correctness of the sequential meta algorithm. We are only left to bound the number of rounds. Let us consider the communication cost of the removal stage. We define the topmost layer to be $\Ltop=L_j$ where $j=arg\,max_{i} \, L_i \neq \emptyset$.
Note that nodes in $\Ltop$ never wait to run the MIS, and that after the MIS finishes for $\Ltop$, the weight of every $v \in \Ltop$ is reduced by at least a factor of two, emptying that layer. This can repeat at most $\log \Wmax$ times. \par
We assume a black-box MIS algorithm that finishes after $\MIS$ rounds with probability at least $1-p$. 
We now arrive at the main theorem for this section (See appendix A.1 for proof).

\begin{theorem}
	\label{thm:maxis-main}
	The distributed MaxIS approximation algorithm (Algorithm~\ref{alg:dist-maxis} in appendix A) finishes within $O(\MIS\cdot \log \Wmax)$ rounds with probability at least $1-p\log \Wmax$ in the $\mathsf{CONGEST}$ model.
	\footnote{The MIS algorithm is always executed on the entire graph $G$. Thus, its success probability does not change as we move between levels.}
\end{theorem}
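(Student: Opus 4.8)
\noindent\textit{Proof proposal.}
The plan is to bound the two stages of Algorithm~\ref{alg:dist-maxis} separately — the removal stage, which repeatedly calls the black-box MIS and performs weight reductions, and the addition stage, which is the distributed analogue of unwinding the recursion stack of Algorithm~\ref{alg:seq-maxis} — and then to check that every message fits in $O(\log n)$ bits, so that the round count is the same in $\mathsf{CONGEST}$ as it would be in $\mathsf{LOCAL}$. The workhorse throughout is the layer partition $L_i=\{v\mid 2^{i-1}<w_v(v)\le 2^i\}$ of the surviving nodes: since the weights are integers in $[\Wmax]$ and are only ever decreased, at most $\lceil\log\Wmax\rceil+1$ layers are ever nonempty, and once a layer becomes empty it never becomes nonempty again.

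For the removal stage, call a \emph{phase} an outer-loop iteration in which the nodes of the current topmost nonempty layer $\Ltop=L_j$ invoke the MIS algorithm. The crux is that \textbf{every phase empties its topmost layer}. First, no $v\in\Ltop$ waits: every neighbor of $v$ lies in a layer $\le j$, so $v$ passes the readiness test, and the inner \Wait-loop finishes in $O(1)$ rounds because each neighbor of $v$ in layer $j$ also belongs to $\Ltop$ and likewise becomes ready. Second, when the MIS call returns, by maximality $v$ either joined the MIS — whereupon it sets $w_v(v)=0$, becomes a candidate, and leaves $\Ltop$ — or it has a neighbor $u$ in the MIS; but such a $u$ is \texttt{ready}, has the neighbor $v$ in layer $j$, and (since no layer above $j$ is occupied) therefore lies in $\Ltop$ itself, so $u$'s $reduce(w_u(u))$ message shrinks $w_v(v)$ by more than $2^{j-1}\ge w_v(v)/2$, pushing $v$ below layer $j$ (possibly to a nonpositive value, making it \removed). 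Hence each of the at most $\lceil\log\Wmax\rceil+1$ layers is topmost for at most one phase, so there are at most $\lceil\log\Wmax\rceil+1$ phases. A phase costs $\MIS$ rounds for the black-box call plus $O(1)$ rounds of message passing, and between consecutive phases only $O(1)$ ``idle'' outer iterations occur — the time needed to propagate the $reduce$, $\removed$, and $\weightUpdate$ messages and recompute readiness. Thus the removal stage finishes in $O(\MIS\cdot\log\Wmax)$ rounds, conditioned on every MIS call terminating within $\MIS$ rounds.

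The addition stage mirrors unwinding a stack whose $i$-th level is the set of candidates created in phase $i$. Candidates of a common phase lie in one independent set and are hence pairwise non-adjacent, so each level can be resolved in parallel: once all candidates of phases $>i$ have either joined the solution (broadcasting $\addedToIS$, which removes their neighbors) or been removed, every candidate of phase $i$ needs only $O(1)$ more rounds to test whether any surviving neighbor precedes it and, if not, join the solution and broadcast $\addedToIS$. Unwinding over the $\le\lceil\log\Wmax\rceil+1$ levels takes $O(\log\Wmax)$ rounds, which is absorbed into the $O(\MIS\cdot\log\Wmax)$ bound (as $\MIS\ge 1$). Correctness of the output follows from Lemma~\ref{lem:ind-set-lr} and the correctness of Algorithm~\ref{alg:seq-maxis}, so only the round and probability bounds are at issue.

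Finally, every control message — $reduce(x)$, $\weightUpdate(v,w')$, $\removed(v)$, $\addedToIS(v)$ — carries only node identifiers and node weights, and since $\Wmax=\mathrm{poly}(n)$ each fits in $O(\log n)$ bits; together with the assumption that the black-box MIS runs in $\mathsf{CONGEST}$, the counts above are genuine $\mathsf{CONGEST}$ round counts. The only randomness is in the $\le\lceil\log\Wmax\rceil+1$ invocations of the MIS black-box, each on the fixed graph $G$ and each failing to terminate within $\MIS$ rounds with probability at most $p$; a union bound then gives that the $O(\MIS\cdot\log\Wmax)$-round bound holds with probability at least $1-p\log\Wmax$. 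I expect the main obstacle to be the boldfaced claim that the topmost layer empties in a single phase — in particular, ruling out that a topmost-layer node stalls in the \Wait-loop or is reduced only by an out-of-layer (hence too small) amount; the monotone decrease of weights together with the invariant that $\Ltop$ is the topmost nonempty layer are precisely what make this — and the $O(1)$ bound on inter-phase idle rounds — go through.
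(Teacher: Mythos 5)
Your proof is correct and follows essentially the same route as the paper's: you establish (as the paper does in its Lemma~\ref{lem:dist-mis-correctness}) that topmost-layer nodes never wait and that each phase empties $\Ltop$ because every non-selected topmost node loses more than half its weight to a same-layer MIS neighbor, giving at most $O(\log\Wmax)$ phases and a union bound of $1-p\log\Wmax$; you then handle the addition stage by the same level-by-level unwinding argument (your ``phases $>i$ decide first'' is the paper's ``$C_0$ then $C_1$, \dots''). The extra details you supply — deducing via the readiness condition that the reducing MIS neighbor must lie in $\Ltop$, and noting explicitly that all messages fit in $O(\log n)$ bits — are sound and just make explicit what the paper leaves implicit.
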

%


\subsection{Deterministic coloring-based approximation algorithm}
\label{sec:deterministicMaxIS}
In this section we present a simple coloring-based $\Delta$-approximation algorithm for MaxIS. The advantage of this approach is that we have no dependence on $\Wmax$, yielding a deterministic algorithm running in $O(\Delta + \log^* n)$ rounds in the $\mathsf{CONGEST}$ model.

In the algorithm (pseudocode in Algorithm~\ref{alg:col-dist-maxis} in appendix A), instead of partitioning the nodes based on weights, they are partitioned based on colors, where colors with larger index have priority. Nodes perform weight reductions if their color is a local maxima. As in the previous section we have two stages: \emph{removal} and \emph{addition}, and three types of node states: \emph{removed}, \emph{candidate} and \emph{precedent}. After one iteration all nodes of the top color are either candidate or removed nodes. Thus after $\Delta+1$ iterations all nodes are either candidate or removed nodes. Thus, the removal stage finishes in $O(\Delta)$ rounds.

As in Algorithm~\ref{alg:dist-maxis}, after the removal stage all candidate nodes only have nodes who have precedence over them as their neighbors. A node adds itself to the independent set if it has no neighbors with precedence over it, in which case all of its neighbors become removed. We again note that candidate nodes of the smallest color have no neighbors and are added to the solution. Thus, the removal stages finishes in $O(\Delta)$ rounds.

\newcommand{\DistAlgColor}
{
\begin{algorithm}[htbp]
	\label{alg:col-dist-maxis}
	\caption{Coloring-based distributed $\Delta$-approximation for weighted MaxIS, code for node $v$}
	Run a $\Delta+1$ coloring algorithm\\
	Let $c:v \rightarrow [\Delta+1]$ be a coloring for the nodes\\
	$w(v) = w_v $\\
	\ForEach{$reduce(w')$ received from $u\in N(v)$}
	{
		$w(v) = w(v) - w'$\\
		$N(v) = N(v) \setminus \{ u \}$\\
		\If{$w(v) \leq 0$}
		{
			Send $removed(v)$ to all neighbors\\
			return \NotInIS \\					
		}			
	}
	\ForEach{$removed(u)$ received from $N(v)$}
	{
		$N(v) = N(v) \setminus \{ u \}$\\
	}		
	\If{$N(v)=\emptyset$}
	{
		Send $addedToIS(v)$ to all neighbors\\
		Return \InIS \\			
	}
	\If{$addedToIS(u)$ received from $N(v)$}
	{
		Send $removed(v)$ to all neighbors\\
		Return \NotInIS \\		
	}
	\If{$\forall u\in N(v)\setminus \{v\}$ it holds that $c(v) > c(u)$}
	{
		\ForEach{$u \in N(v)$}
		{
			Send $reduce(w(v))$
		}
		$w(v) = 0$\\
	}
\end{algorithm}
}
Algorithm~\ref{alg:col-dist-maxis} is a distributed implementation of Algorithm~\ref{alg:seq-maxis}, where the independent set is selected via its color at each iteration. The correctness of the algorithm follows from the correctness of Algorithm~\ref{alg:seq-maxis}. The number of rounds of Algorithm~\ref{alg:col-dist-maxis} is $O(\Delta + \log^* n)$ by using a deterministic distributed coloring algorithm of $O(\Delta + \log^* n)$ rounds \cite{barenboim2009distributed,Barenboim15}.\footnote{\cite{fraigniaud2015local} gives a faster coloring but is in $\mathsf{LOCAL}$ and in any case we need to pay for the number of colors as well.} The $\log^* n$ factor cannot be improved upon due to a lower bound by Linial \cite{Linial1987}. 

\subsection{Distributed $2$-approximation for maximum weighted matching}
\label{sec:2MM}
From the results in the previous section we can now derive local $2$-approximation algorithms for maximum matching. Let $G$ be a graph with weighted nodes, and let $L(G)$ be the line graph of $G$. 
It is well known that a maximum independent set in $L(G)$ corresponds to a maximum matching in $G$. An algorithm is executed on the line graph by assigning each edge in $G$ to have its computation simulated by one of its endpoints \cite{kuhn2005price}. We show that running our local ratio based approximation algorithms on $L(G)$ yields a $2$-approximate maximum matching in $G$. The main challenge is how to handle congestion, since nodes in $G$ may need to simulate many edges, thus may have to send many messages in a naive simulation \footnote{What follows is equivalent to iteratively running a maximal matching on weight groups in $G$ and performing local ratio steps on the edges of the matching. We go to $L(G)$ in order to demonstrate how a wide class of algorithms can be executed on the line graph while avoiding congestion.}.

Recall Algorithm~\ref{alg:seq-maxis}, the sequential $\Delta$-approximation meta-algorithm. The approximation factor was proved in Lemma~\ref{lem:ind-set-lr} to be $\Delta$. Specifically, the following equation provided an upper bound for the weight of an optimal solution $x^*$.
	$$\sum_{u\in U}\sum_{v \in N(u)} w[u]x^*[u] \leq \sum_{u\in U} w[u]\cdot (|N(u)|-1) = \sum_{u\in U} w[u]\cdot deg(u) \leq \Delta\sum_{u\in U} w[u].$$
The above bound uses the fact that for any node $u\in U$, at most $|N(u)|-1$ nodes in $N(v)$ can be selected for the solution due to independence constraints. But in $L(G)$ the largest independent set in the neighborhood of some node in $L(G)$ is at most 2, yielding the following upper bound:
	$\sum_{u\in U}\sum_{v \in N(u)} w[u]x^*[u] \leq \sum_{u\in U} 2w[u].$
We conclude that the algorithms presented in the previous sections provide $2$-approximation for maximum matching when executed on $G(L)$.

As for the communication complexity, the line graph has at most $n\Delta$ nodes and degree bounded by $2\Delta-2$. Thus, simulating our algorithms on $L(G)$ in the $\mathsf{LOCAL}$ model does not incur any additional asymptotic cost. However, in a naive implementation in the $\mathsf{CONGEST}$ model, we pay an $O(\Delta)$ multiplicative penalty due to congestion. This can be avoided with some modifications to our algorithms, as explained next.

For $e=(v,u)$ that is simulated by $v$, we call $v$ its \emph{primary} node and $u$ its \emph{secondary} node. We define a family of algorithms called \emph{local aggregation algorithms} and show that these algorithms can be augmented to not incur any additional communication penalty when executed on the line graph relative to their performance on the original graph in the $\mathsf{CONGEST}$ model. We begin with some definitions.


%

\begin{definition}
	We say that $f:\Sigma^n \rightarrow \Sigma$ is \emph{order invariant}, if for any set of inputs $\{x_i\}^n_{i=1}$, and any permutation $\pi$, it holds that $f(x_1,..., x_n) = f(x_{\pi(1)},..., x_{\pi(n)})$.
\end{definition}
For the sake of simplicity, if $f$ is order invariant we write $f(x_1,..., x_n)$ as $f(\{x_i\})$.
We may also give a partial parameter set to our function, in which case we assume all remaining inputs to the function are the empty character $\epsilon \in \Sigma$. Formally, for $X'=\{x_i\}^k_{i=1}$, denote $f(X')=f(x_1,...,x_k,\epsilon,...,\epsilon)$.
\begin{definition}
	We say that a function $f:\Sigma^n \rightarrow \Sigma$ is an \emph{aggregate function} if it is order invariant and there exists a function $\phi:\Sigma^2 \rightarrow \Sigma$ such that for any set of inputs $X=\{x_i\}^k$, and any disjoint partition of the inputs into $X_1,X_2$ it holds that $f(X)= \phi(f(X_1), f(X_2))$. The function $\phi$ is called the \emph{joining function}.
\end{definition}

\begin{observation}
	\label{col:bool-dec}
	It is easy to see that Boolean "and" and "or" functions are aggregate functions.
\end{observation}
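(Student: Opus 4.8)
The plan is to exhibit explicit joining functions and verify the two defining properties by hand; since both operations are associative and commutative with a neutral element, this is essentially immediate. Fix the Boolean alphabet $\Sigma=\{0,1\}$ together with the empty character $\epsilon$, and adopt the natural reading convention: for the ``and'' function, $\epsilon$ is interpreted as $1$ (the identity of logical conjunction), and for the ``or'' function, $\epsilon$ is interpreted as $0$ (the identity of disjunction). With this convention, the value of ``and'' on a tuple is $1$ iff every non-$\epsilon$ entry is $1$, and dually for ``or''.

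First I would check order invariance. Applied at any fixed arity, conjunction depends on its arguments only through whether a $0$ occurs among them (and disjunction only through whether a $1$ occurs), so permuting the inputs never changes the output; hence both functions are order invariant and we may legitimately write them in the form $f(\{x_i\})$. Next I would produce the joining function. For $f={}$``and'', take $\phi=\wedge\colon\Sigma^2\to\Sigma$ to be the two-argument Boolean ``and'' (again reading $\epsilon$ as $1$). Given inputs $X=\{x_i\}^k$ and a disjoint partition $X=X_1\sqcup X_2$, associativity and commutativity of $\wedge$ yield
\[
f(X)=\bigwedge_{x\in X}x=\Bigl(\bigwedge_{x\in X_1}x\Bigr)\wedge\Bigl(\bigwedge_{x\in X_2}x\Bigr)=\phi\bigl(f(X_1),f(X_2)\bigr),
\]
where an empty conjunction (when $X_1$ or $X_2$ is empty) evaluates to $1$, matching the $\epsilon$-padding convention, so the identity also holds in that boundary case. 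The argument for $f={}$``or'' is verbatim the same with $\wedge$ replaced by $\vee$ and $1$ replaced by $0$.

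The only point requiring care — and the single ``obstacle,'' such as it is — is purely a bookkeeping matter: one must make sure the chosen reading of the empty character is precisely the identity element of the relevant binary operation, so that padding a partial input with $\epsilon$'s, and in particular evaluating $f$ on an empty partition block, stays consistent with the recursive identity $f(X)=\phi(f(X_1),f(X_2))$. Once that convention is pinned down there is nothing further to verify; indeed the same reasoning shows more generally that any associative, commutative binary operation with a neutral element (read off by $\epsilon$) induces an aggregate function whose joining function is the operation itself.
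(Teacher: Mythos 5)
Your argument is correct, and since the paper leaves this observation unproved (stating only that it is ``easy to see''), your careful verification — order invariance from commutativity/associativity, the binary operation itself as the joining function, and the $\epsilon$-as-identity convention handling the boundary case — is exactly the intended straightforward check. The closing remark that any commutative, associative operation with a neutral element gives an aggregate function is a nice (and accurate) generalization.
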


Let $Alg$ be some algorithm for the $\mathsf{CONGEST}$ model. Let $\Dvi$ be the local data stored by $v$ during the round $i$ of the algorithm at. Let $\Dnvi= \{\Dui \mid u\in N(v) \}$ be the data of $v$'s immediate neighborhood.
\begin{definition}
	We call $Alg$ a \emph{local aggregation algorithm} if it only accesses $\Dnvi$ using aggregate functions where $|\Sigma| = O(\log n)$ and $|\Dvi|=O(\log n)$ for every $v \in V, i \in [t]$.
\end{definition}

We prove the following theorems (see appendix A.1):
\begin{theorem}
	\label{thm:aggregation-alg}
	If $Alg$ is a local aggregation algorithm running in the $\mathsf{CONGEST}$ model in $O(t)$ rounds, it can be executed on the line graph in $O(t)$ rounds.
\end{theorem}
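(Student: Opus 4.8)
The plan is to give an explicit simulation of $Alg$ on $L(G)$ by the nodes of $G$, where each edge $e=(v,u)$ of $G$ is simulated primarily by $v$ (its primary node), with $u$ as its secondary node. The state $D_{e,i}$ of the ``virtual node'' $e$ in $L(G)$ has size $O(\log n)$ by the local aggregation hypothesis, so $v$ can afford to store it for each of its incident edges. The real difficulty is one round of $Alg$ on $L(G)$: the virtual node $e=(v,u)$ must evaluate an aggregate function $f$ over $D_{N(e),i}$, and its neighborhood in $L(G)$ is exactly the set of edges sharing an endpoint with $e$, i.e. $(N(v)\cup N(u))\setminus\{v,u\}$ many edges, which is up to $2\Delta-2$ virtual nodes. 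A naive relay of all these states through $v$ would cost $\Theta(\Delta)$ messages on the edge $\{v,u\}$ of $G$, giving the $O(\Delta)$ blowup we want to avoid.

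The key step is to exploit the decomposability of aggregate functions. First I would have each node $v$ of $G$, for every virtual node (edge) $e$ incident to $v$, locally compute a ``partial aggregate'' $P_{v,e} := f\big(\{D_{e',i} : e' \ni v,\ e'\neq e\}\big)$ — the aggregate of the states of all edges at $v$ other than $e$ itself; since $v$ already holds all $D_{e',i}$ for edges $e'$ at $v$, this is a purely local computation requiring no communication. Next, for a given virtual node $e=(v,u)$, I claim its true aggregate over $N_{L(G)}(e)$ can be reconstructed from just two scalars, $P_{v,e}$ and $P_{u,e}$, via the joining function: by the defining partition property of aggregate functions, $f(D_{N(e),i}) = \phi\big(P_{v,e},\, P_{u,e}\big)$, because the edges at $v$ (minus $e$) and the edges at $u$ (minus $e$) form a disjoint partition of $N_{L(G)}(e)$ (the only edge sharing both endpoints with $e$ is $e$ itself, assuming $G$ simple). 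Thus in each simulated round, every node $v$ sends along each real edge $\{v,u\}$ a single $O(\log n)$-bit message, namely the value $P_{v,e}$ for $e=\{v,u\}$; upon receiving $P_{u,e}$ from $u$, the primary node $v$ computes $\phi(P_{v,e},P_{u,e})$ and thereby updates $D_{e,i+1}$. This is $O(1)$ messages per real edge per simulated round, hence $O(t)$ real rounds total in $\mathsf{CONGEST}$ on $G$.

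A few loose ends I would tie up. (i) If $Alg$ uses several aggregate calls per round with possibly different $\Sigma$'s and $\phi$'s, the same trick applies in parallel, and since there are $O(1)$ such calls per round (each with $|\Sigma|=O(\log n)$) the per-edge message stays $O(\log n)$ bits. (ii) Virtual node $e$ may need to both read and write its own local data and broadcast outputs to its virtual neighbors — but ``broadcasting to virtual neighbors of $e$'' is subsumed by the secondary node $u$ learning whatever it needs: whenever $v$ updates $D_{e,i+1}$ it can forward it (or the relevant $O(\log n)$-bit summary) to $u$ within the same single message budget, since only one such message per real edge is needed. (iii) One should check the case of parallel edges or self-loops is irrelevant here since $L(G)$ is taken of a simple graph $G$; and the case where $v$ has no incident edges is vacuous. (iv) Initialization: each primary node sets $D_{e,0}$ from the known weight/ID of $e$, doable with $O(1)$ rounds of exchanging endpoint IDs. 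I expect the main obstacle — really the crux of the argument — to be precisely the observation that the two-way partition $N_{L(G)}(e) = (\text{edges at }v\setminus e)\ \sqcup\ (\text{edges at }u\setminus e)$ is disjoint, so that a single application of $\phi$ suffices; everything else is bookkeeping on message sizes using $|D_{v,i}|,|\Sigma| = O(\log n)$.
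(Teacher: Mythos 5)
Your proof is correct and follows essentially the same route as the paper: maintain $D_{e,i}$ at both endpoints of $e$, have each endpoint locally compute the aggregate over its own incident edges (minus $e$), exchange a single $O(\log n)$-bit partial aggregate per real edge, and combine with the joining function $\phi$ using the disjoint decomposition $N_{L(G)}(e) = (\text{edges at } v \setminus \{e\}) \sqcup (\text{edges at } u \setminus \{e\})$. The paper states this more tersely but the mechanism is identical.
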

%

\begin{theorem}
	\label{thm:dist-maxis-is-agg-alg}
	Algorithm~\ref{alg:dist-maxis} is a local aggregation algorithm.
\end{theorem}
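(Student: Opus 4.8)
I would prove this in two moves: first pin down what the per-round local state $\Dvi$ of a node actually has to be and verify $|\Dvi| = O(\log n)$, and then walk through the (few) points at which the code of Algorithm~\ref{alg:dist-maxis} consults $\Dnvi$ and exhibit each such consultation as the evaluation of an aggregate function with symbols of $O(\log n)$ bits. Throughout, the statement is understood to hold when the black-box MIS subroutine is itself a local aggregation algorithm, which is the case for the MIS algorithms we plug in (Luby's, Ghaffari's, Barenboim \etal).

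The first move carries the real content. Read literally, the pseudocode has $v$ store $w_v(u)$ and $\ell_v(u)$ for every $u\in N(v)$ and keep the shrinking neighbor set, which is $\Theta(\Delta\log n)$ bits. My first claim would be that this bookkeeping is superfluous: the only quantities $v$ ever extracts from it are (i) the updated weight $w_v(v)$; (ii) the predicate ``every active neighbor lies in a layer $\le \ell_v(v)$''; (iii) the predicate ``some active neighbor in layer $\ell_v(v)$ is not yet \texttt{ready}''; and, in the addition stage, (iv) ``has every surviving neighbor disappeared'' and (v) ``did some neighbor announce \addedToIS''. Each of these can be recomputed from scratch in round $i$ from the messages arriving in round $i$, provided $v$ also carries a single integer $c_v$ counting the neighbors that have not yet terminated (initialized to $\deg_G(v)$, decremented on each \removed{}/\addedToIS{} received). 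So I would take $\Dvi$ to consist of $w_v(v)$, $\ell_v(v)$, the status flag (one of \texttt{waiting}, \texttt{ready}, \texttt{candidate}), the counter $c_v$, and the internal state $\sigma_v$ of the MIS subroutine. Since $\Wmax=\mathrm{poly}(n)$, weights only decrease, and $v$ halts the instant $w_v(v)\le 0$, the integer $w_v(v)$ always has absolute value $O(n\Wmax)=\mathrm{poly}(n)$; $\ell_v(v)$ is a layer index ($O(\log\log n)$ bits), the status flag is $O(1)$ bits, $c_v\le n$, and $\sigma_v$ is $O(\log n)$ bits because the MIS subroutine is a local aggregation algorithm. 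Hence $|\Dvi|=O(\log n)$.

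The second move is routine. Each message $v$ sends in round $i$ is a function of $\Dvi$, so it suffices to check that every value $v$ computes from the incoming messages is an aggregate function of $\{\Dui : u\in N(v)\}$ with an $O(\log n)$-bit output. Subtracting the received reduce values from $w_v(v)$ uses their sum, which is order invariant with integer addition as joining function and whose partial sums lie in $[0,n\Wmax]$. Predicate (ii) is ``$\max_u \ell_u(u) \le \ell_v(v)$'', a $\max$-aggregate (inactive neighbors contribute the neutral symbol $-\infty$) compared locally against the field $\ell_v(v)$. Since $v$ enters the \texttt{ready} handshake only once (ii) holds, at that moment all active neighbors lie in layers $\le \ell_v(v)$, so predicate (iii) is equivalent to the assertion that the maximum of $\ell_u(u)$ over active neighbors $u$ that are not yet \texttt{ready} equals $\ell_v(v)$ --- again a $\max$-aggregate, this time over a symbol in $\Dui$ that packs $\ell_u(u)$ together with the \texttt{ready}-bit of $u$. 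In the addition stage, ``$N(v)=\emptyset$'' is the local test $c_v=0$, with $c_v$ driven by the sum-aggregate ``number of \removed{}/\addedToIS{} messages received this round'', and (v) is a Boolean-or aggregate, which is an aggregate function by Observation~\ref{col:bool-dec}. Finally, the test ``$v$ in MIS'' and all reads performed inside the MIS call are accesses made by the MIS subroutine, which is a local aggregation algorithm by assumption. All these outputs are $O(\log n)$-bit symbols, and these are the only accesses to $\Dnvi$ in the code, so Algorithm~\ref{alg:dist-maxis} is a local aggregation algorithm.

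The main obstacle is entirely in the first move: arguing that a node can be made essentially stateless about its neighborhood while still driving the layer-priority handshake and the two-stage removal/addition logic on only $O(\log n)$ bits. The delicate case is the addition stage, where a node must detect that \emph{all} of its surviving neighbors are gone: the counter $c_v$ does this only if a candidate neighbor is accounted for exactly once, which I would arrange by having a candidate neighbor keep reporting a weight of $0$ while it is deferred (so it is never mistaken for a high-layer blocker in (ii)/(iii)) and emit a single terminal \removed{}/\addedToIS{} message when it resolves (so $c_v$ is decremented precisely once). Everything past that reduces to the immediate fact that sums, maxima, and Boolean-or are aggregate functions.
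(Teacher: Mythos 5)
Your proposal is correct and follows the same strategy as the paper's proof: define a small $O(\log n)$-bit local state $D_{v,i}$ (weight, status, and a degree-type counter), and check that every access to the neighborhood data is a sum, max, or Boolean aggregate. You do flesh out two things the paper's (very terse) proof leaves implicit --- that the pseudocode's per-neighbor bookkeeping of $w_v(u)$, $\ell_v(u)$ must be eliminated since it is $\Theta(\Delta\log n)$ bits, and that the MIS subroutine's internal state must itself satisfy the local-aggregation bound --- but these are refinements of the same argument rather than a different route.
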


This exact same technique can be applied to Algorithm~\ref{alg:col-dist-maxis}, giving the main result for this section:
\begin{theorem}
	There exist a randomized $2$-approximation algorithm for maximum weighted matching in the $\mathsf{CONGEST}$ model running in $O(\MIS\cdot\log \Wmax)$ rounds, and a deterministic $2$-approximation algorithm for maximum weighted matching in the $\mathsf{CONGEST}$ model running in $O(\Delta + \log^* n)$ rounds.
\end{theorem}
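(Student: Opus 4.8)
\emph{Proof plan.} The plan is to run the two MaxIS approximation algorithms of the previous subsections --- the randomized Algorithm~\ref{alg:dist-maxis} and the deterministic coloring-based Algorithm~\ref{alg:col-dist-maxis} --- on the line graph $L(G)$, to argue that on $L(G)$ they compute a $2$-approximation rather than a $\Delta$-approximation, and to show that the line-graph simulation costs nothing asymptotically in the $\mathsf{CONGEST}$ model because both algorithms are local aggregation algorithms. Step~1 (approximation factor): a maximum weight independent set in $L(G)$ is in bijection with a maximum weight matching in $G$, and the algorithms are correct $\Delta'$-approximations for weighted MaxIS on any graph of maximum degree $\Delta'$ by Lemma~\ref{lem:ind-set-lr} and the inductive argument behind Algorithm~\ref{alg:seq-maxis}. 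The only place the degree enters the bound is the estimate of how much weight an optimal solution $x^*$ can collect inside the neighborhood of a node $u\in U$ of the current independent set: in a general graph independence allows at most $|N(u)|-1$ of $u$'s neighbors into the solution, but in $L(G)$ the two endpoints of a node (i.e.\ of an edge of $G$) are mutually adjacent, so at most $2$ neighbors of $u$ can be chosen. Substituting $2$ for $|N(u)|-1$ in the chain of inequalities already displayed above gives $\sum_{u\in U}\sum_{v\in N(u)} w[u]\,x^*[u]\le \sum_{u\in U} 2\,w[u]$, hence a $2$-approximation at each level; by the local ratio theorem the whole recursion outputs a $2$-approximate matching.

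\emph{Step 2 (the randomized algorithm).} The line graph $L(G)$ has at most $n\Delta$ nodes --- still polynomial in $n$, so $O(\log n)$-bit messages suffice --- and maximum degree at most $2\Delta-2$; hence running Algorithm~\ref{alg:dist-maxis} on $L(G)$ finishes in $O(\MIS\cdot\log\Wmax)$ rounds of the $\mathsf{LOCAL}$ model by Theorem~\ref{thm:maxis-main}, with the MIS subroutine now executed on $L(G)$. To remove the $O(\Delta)$ congestion overhead of the naive line-graph simulation, I invoke Theorem~\ref{thm:dist-maxis-is-agg-alg} (Algorithm~\ref{alg:dist-maxis} is a local aggregation algorithm) together with Theorem~\ref{thm:aggregation-alg} (any local aggregation algorithm runs on the line graph with only constant-factor overhead). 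The one remaining point is the MIS black box itself: it must either be a local aggregation algorithm in its own right (Luby's algorithm qualifies, using $\mathsf{max}$ of random identifiers and Boolean aggregates), or it can be realized by running a \emph{maximal matching} algorithm natively on $G$, which is precisely an MIS on $L(G)$ and sidesteps the line graph altogether. Either way the total is $O(\MIS\cdot\log\Wmax)$ rounds in $\mathsf{CONGEST}$, with high probability when the MIS subroutine is randomized.

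\emph{Step 3 (the deterministic algorithm).} The same two ingredients apply to Algorithm~\ref{alg:col-dist-maxis}: by the argument behind Theorem~\ref{thm:dist-maxis-is-agg-alg} it too inspects neighbor data only through aggregate functions (``some surviving neighbor sent \texttt{addedToIS}'', ``is my color a strict local maximum among surviving neighbors'', ``sum of received $\texttt{reduce}$ values''), so it is a local aggregation algorithm and simulates on $L(G)$ with no overhead by Theorem~\ref{thm:aggregation-alg}. Its preprocessing step --- a proper coloring of $L(G)$ with at most $2\Delta-1$ colors, i.e.\ a proper edge coloring of $G$ --- can be computed deterministically in $O(\Delta+\log^* n)$ rounds directly on $G$ in $\mathsf{CONGEST}$. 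After that the removal stage empties one color class per iteration and hence lasts $O(\Delta)$ rounds, and the addition stage likewise lasts $O(\Delta)$ rounds, giving $O(\Delta+\log^* n)$ rounds overall with no dependence on $\Wmax$; the $\log^* n$ term is unavoidable by Linial's lower bound.

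\emph{Main obstacle.} The substantive difficulty lies entirely in the congestion argument of Steps~2--3, not in correctness: in the naive simulation a node $v$ of $G$ hosts up to $\Delta$ edges of $G$, each of which is a node of $L(G)$ with up to $2\Delta-2$ neighbors, so relaying one round of the $L(G)$-algorithm would require $v$ to push $\Theta(\Delta)$ distinct messages along a single incident link. The local aggregation property is exactly what lets $v$ collapse all messages that a neighbor $u$ must deliver to $v$'s hosted edges into a single $O(\log n)$-bit digest via the joining function $\phi$, and symmetrically collapse the replies. Verifying that Algorithms~\ref{alg:dist-maxis} and~\ref{alg:col-dist-maxis} never need anything finer than such aggregates --- in particular that the level-synchronization \Wait\ loop and the MIS invocation are expressible this way, the latter handled cleanly by the maximal-matching reduction --- is the crux, and is precisely the content of Theorems~\ref{thm:aggregation-alg} and~\ref{thm:dist-maxis-is-agg-alg}, which I rely on here.
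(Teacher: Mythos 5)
Your proposal is correct and follows essentially the same route as the paper: it derives the $2$ from the bound of at most $2$ independent neighbors in $L(G)$ via Lemma~\ref{lem:ind-set-lr}, and removes the congestion overhead by combining Theorems~\ref{thm:aggregation-alg} and~\ref{thm:dist-maxis-is-agg-alg} (and its analogue for Algorithm~\ref{alg:col-dist-maxis}). One small merit of your write-up is that you explicitly flag that the black-box MIS subroutine must itself be simulable on $L(G)$ in $\mathsf{CONGEST}$ (e.g.\ because Luby's algorithm is a local aggregation algorithm, or by replacing it with a native maximal-matching algorithm on $G$), a point the paper leaves implicit.
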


\section{Time-Optimal Approximations of Maximum Matching}
Here, we provide a sketch of our $O(\frac{\log \Delta}{\log\log \Delta})$-round algorithms for $(2+\eps)$-approximation of maximum weighted matching and $(1+\eps)$-approximation of maximum unweighted matching.   As stated before, these are the first to obtain the provably optimal round complexity, matching the $\Omega(\frac{\log \Delta}{\log\log \Delta})$ lower bound of \cite{kuhn2006price}, which holds for \emph{any} constant approximation. Full details appear in \Cref{sec:2epsMM}. 

\subsection{A fast $(2+\eps)$-approximation of maximum weighted matching}\label{lem:NMISmain}

We present here an $O(\frac{\log \Delta}{\log\log \Delta})$-round $(2+\eps)$-approximation for maximum unweighted matching. The extension to the weighted case follows known methods, and is explained in \Cref{lem:NMIS}. 

We develop a faster variant of an algorithm of Ghaffari~\cite{Ghaffari2016} that computes a nearly-maximal independent set. Our algorithm improves the $O(\log \Delta)$ round complexity of the nearly-maximal independent set algorithm of~\cite{Ghaffari2016} to $O(\frac{\log \Delta}{\log\log \Delta})$, which is optimal. To compute a $(2+\eps)$-approximation of maximum cardinality matching, we run this nearly-maximal independent set algorithm on the line graph of the network. Hence, it computes a nearly-maximal matching, which we show to be a $(2+\eps)$-approximation of maximum matching.

\paragraph{The Improved Nearly-Maximal Independent Set Algorithm}
In each iteration $t$, each node $v$ has a probability $p_t(v)$ for trying to join the independent set $\mathsf{IS}$. Initially $p_0(v)=1/K$, for $K=\Theta(\log^{0.1} \Delta)$. The total sum of the probabilities of neighbors of $v$ is called its \emph{effective-degree} $d_{t}(v)$, i.e., $d_t(v)=\sum_{u \in N(v)} p_{t}(u)$. The probabilities change over time as follows: $$p_{t+1}(v)=
\begin{cases}
    p_{t}(v)/K, & \text{if } d_{t}(v)\geq 2\\
    \min\{Kp_{t}(v), 1/K\},  &\text{if } d_{t}(v)< 2.
\end{cases}
$$
The probabilities are used as follows: In each iteration, node $v$ gets \emph{marked} with probability $p_{t}(v)$ and if no neighbor of $v$ is marked, $v$ joins $\mathsf{IS}$ and gets removed along with its neighbors.

The near-maximality of the computed independent set is captured by the following theorem, which shows that if we run the algorithm for $O(\log \Delta/\log K + K^2 \log 1/\delta)$ rounds, each node has only a $\delta$ probability not to be in the neighborhood of the computed independent set. It will be helpful to think of $\delta>0$ as a desirably small constant, but the bound grows quite slowly with $\log 1/\delta$. The proof of \Cref{thm:local-restate} is deferred to \Cref{lem:NMIS}. 
\begin{theorem} \label{thm:local-restate} For each node $v$, the probability that by the end of round $\beta(\log \Delta/\log K + K^2 \log 1/\delta)$, for a large enough constant $\beta$, node $v$ is not in $\mathsf{IS}$ and does not have a neighbor in $\mathsf{IS}$ is at most $\delta$. Furthermore, this holds even if coin tosses outside $N^{+}_{2}(v)$ are determined adversarially.
\end{theorem}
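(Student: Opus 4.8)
The plan is to adapt the potential/token-counting argument of Ghaffari~\cite{Ghaffari2016} to the new, faster probability update rule that multiplies or divides by $K=\Theta(\log^{0.1}\Delta)$ rather than by a constant. Fix a node $v$ and condition on the coin tosses outside $N^+_2(v)$ being adversarial; it suffices to track what happens inside $N^+_2(v)$. Say a round $t$ is \emph{low-degree for $v$} if $d_t(v)<2$ and \emph{high-degree} otherwise. The first step is the easy case: if at least a constant fraction of the rounds up to time $T=\beta(\log\Delta/\log K + K^2\log 1/\delta)$ are low-degree for $v$, then in each such round $v$ gets marked with probability $p_t(v)$, and conditioned on being marked it joins $\mathsf{IS}$ unless some neighbor is also marked; since $d_t(v)<2$, a union bound over neighbors shows this succeeds with constant probability, call it $q=\Omega(1)$. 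We must be slightly careful that $p_t(v)$ is not tiny; but in a low-degree round the update pushes $p_{t+1}(v)$ up toward $1/K$, so after the first low-degree round $p_t(v)$ stays $\Theta(1/K)$ until the next high-degree round, giving per-round success probability $\Omega(1/K)$. Hence if there are $\Omega(K^2\log 1/\delta)$ low-degree rounds, the probability $v$ never gets removed is at most $(1-\Omega(1/K^2))^{\Omega(K^2\log 1/\delta)} \le \delta$. (Here the $K^2$, rather than $K$, comes from the fact that a low-degree round only guarantees $p_t(v)=\Theta(1/K)$, not $\Theta(1)$ — this is exactly the place where the new algorithm pays, and why the round bound has the additive $K^2\log 1/\delta$ term.)

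The second, main step is to show that $v$ cannot have too many high-degree rounds: specifically, the number of high-degree rounds among the first $T$ is at most $O(\log\Delta/\log K)$ with the remaining rounds being low-degree, so that the first case applies. For this I would use a two-part amortized potential on the neighborhood of $v$. Whenever $d_t(v)\ge 2$, the update divides $p_t(v)$ by $K$, so over many high-degree rounds the probabilities of $v$ and of its neighbors drop sharply; but each neighbor's probability can also be pushed back up (by a factor $K$) whenever \emph{its} own effective degree is small, capped at $1/K$. The clean way to bound this is to argue: either $v$ itself experiences many high-degree rounds, in which case its own probability $p_t(v)$ falls below $1/\mathrm{poly}(\Delta)$ and stays low unless restored — and each restoration by a factor $K$ must be ``paid for'' by a prior low-degree round of $v$; or the high effective degree of $v$ is sustained by neighbors whose probabilities are themselves being repeatedly divided by $K$, and one sums, over all neighbors $u\in N(v)$, the quantity ``number of times $p(u)$ was divided minus number of times it was multiplied.'' Since each $p(u)$ starts at $1/K$ and stays in $[K^{-c}, 1/K]$ for $c=O(\log\Delta/\log K)$, the net number of divisions over multiplications across all neighbors is bounded, and $\sum_t (\text{indicator that round } t \text{ is high-degree for } v)$ is controlled by $\sum_u (\text{divisions of } p(u))$ up to the multiplications, which are in turn charged to low-degree rounds of the neighbors. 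Balancing these two progress measures — one driven by low-degree rounds of $v$, one by divisions accumulated over $N(v)$ — yields that at least a constant fraction of the $T$ rounds are low-degree for $v$ once $\beta$ is large enough.

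The hard part will be making the amortization in the second step rigorous while coins outside $N^+_2(v)$ are adversarial: the effective degree $d_t(v)$ depends on $p_t(u)$ for $u\in N(v)$, and those in turn depend on $d_t(u)$, which involves $p_t(w)$ for $w\in N^+_2(v)$ — so the potential must be defined on the whole two-hop neighborhood and shown to be monotone (modulo the bookkeeping for multiplications) regardless of the adversary's choices. I expect this to require the same ``deterministic'' flavor of argument as in~\cite{Ghaffari2016}: the claim about the \emph{number} of high-degree versus low-degree rounds is a purely deterministic statement about any trajectory of the probabilities, and only the final step — converting ``$\Omega(K^2\log 1/\delta)$ low-degree rounds'' into ``removed with probability $\ge 1-\delta$'' — uses randomness, and there only the coins inside $N(v)$, so the adversarial conditioning on outside coins is harmless. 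Once the deterministic counting lemma is in hand, combining it with the first step and choosing $\beta$ large enough to absorb the constants completes the proof.
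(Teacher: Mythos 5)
The main gap is in your second step, and it is the key missing idea rather than a detail. You conclude that ``at least a constant fraction of the $T$ rounds are low-degree for $v$,'' so that your first case (where $v$ is itself marked and joins $\mathsf{IS}$) carries the argument. This is not true, and the paper does not claim it. Consider a node $v$ whose $\gg K$ neighbors form an independent set and have no other edges: each such neighbor $u$ is permanently low-degree, so $p_t(u)$ sits at $1/K$ (until $u$ is removed), and $d_t(v)\geq 2$ can persist for every round. Your ``divisions minus multiplications'' accounting is vacuous here: the neighbors never divide their probabilities, so there is nothing to charge, yet every round is high-degree for $v$. What actually makes progress in this regime --- and the concept your proposal is missing --- is the paper's second type of \emph{golden round}: a round with $d_t(v)\geq 1$ in which at least a $1/(2K^2)$ fraction of $v$'s effective degree is contributed by low-degree neighbors. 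In such a round, with probability $\Omega(1/K^2)$ some low-degree neighbor $u$ gets marked with no other marked node in $N(u)$, so $u$ joins $\mathsf{IS}$ and removes $v$.

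The paper's Lemma~\ref{lem:goldCount} establishes the correct dichotomy: after $T=\beta(\log\Delta/\log K + K^2\log 1/\delta)$ rounds, $v$ is removed, or it has $\Omega(T)$ type-1 golden rounds ($d_t(v)<2$ and $p_t(v)=1/K$, your case), or it has $\Omega(T)$ type-2 golden rounds. The bound it actually proves on the number $h$ of high-degree rounds is $h\leq \log_{2K/3}\Delta + 3g_2$, where $g_2$ is the type-2 golden count --- so $h$ is \emph{not} bounded by $O(\log\Delta/\log K)$ in general, but whenever $h$ is large $g_2$ must also be large, which is fine because large $g_2$ already gives removal. Lemma~\ref{lem:removalPerRound} then shows each golden round (of either type) removes $v$ with probability $\Omega(1/K^2)$ using only coins inside $N^+_2(v)$, as you anticipated. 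A secondary inconsistency in your first step: you derive a per-round success of $\Omega(1/K)$ for type-1 rounds (from $p_t(v)=\Theta(1/K)$ and constant conditional success), yet then write $(1-\Omega(1/K^2))^{\Omega(K^2\log 1/\delta)}$. With $\Omega(1/K)$ per round, $\Omega(K\log 1/\delta)$ such rounds suffice; the $\Omega(1/K^2)$ per-round probability, and hence the $K^2\log 1/\delta$ additive term in the theorem, is tied to the type-2 golden rounds --- exactly the mechanism your proposal lacks.
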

As a corollary of the nearly-maximal independent set algorithm, we get a $(2+\eps)$-approximation of maximum unweighted matching, by running it on the line graph of our network. The intuitive reason for this approximation is that, as \Cref{thm:local-restate} suggests, in expectation only a $\delta\ll \eps$ fraction of the maximum matching will not be in the neighborhood of the computed nearly-maximal matching. The formal claim is presented in the following theorem, the proof of which appears in \Cref{lem:NMIS}.
\begin{theorem} \label{thm:MCM2CONGEST} There is an algorithm in the $\mathsf{CONGEST}$ model that computes a $(2+\eps)$-approximation of maximum cardinality matching in $O(\frac{\log\Delta} {\log\log \Delta})$ rounds, for any constant $\eps>0$, w.h.p.
\end{theorem}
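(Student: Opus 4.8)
The plan is to run the improved nearly-maximal independent set algorithm described above on the line graph $L(G)$, and to verify three things: that it can be simulated in the $\mathsf{CONGEST}$ model on $G$ with no congestion overhead, that it runs in $O(\log\Delta/\log\log\Delta)$ rounds, and that the near-maximal matching it produces is a $(2+\eps)$-approximation with high probability. Fix a constant $\delta = \delta(\eps) > 0$ to be determined, and use the prescribed parameter $K = \Theta(\log^{0.1}\Delta)$.

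For the round complexity, note that on $L(G)$ the maximum degree is $\Delta_L = 2\Delta - 2 = \Theta(\Delta)$ and the number of nodes is $n_L \le n\Delta \le n^2$. Then $\log K = \Theta(\log\log\Delta)$, so the first term in the bound of Theorem~\ref{thm:local-restate} is $\log\Delta_L/\log K = \Theta(\log\Delta/\log\log\Delta)$, while the second term $K^2\log(1/\delta) = \Theta(\log^{0.2}\Delta)$ (for constant $\delta$) is $o(\log\Delta/\log\log\Delta)$; hence the algorithm terminates in $O(\log\Delta/\log\log\Delta)$ rounds, and the high-probability guarantee with respect to $n_L$ implies one with respect to $n$. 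For the simulation, I would observe that this algorithm is a \emph{local aggregation} algorithm in the sense of \Cref{sec:2MM}: each node keeps only $O(\log n)$ bits of state (essentially its current probability $p_t$, which is $1/K$ times a power of $K$, plus constant bookkeeping), and in each round it touches its neighborhood only through the effective degree $d_t(v) = \sum_{u\in N(v)} p_t(u)$, which is a sum and hence an aggregate function with $O(\log n)$-bit range, and through the predicate ``some neighbor is marked'' together with the removal notices, which are Boolean OR's and hence aggregate by Observation~\ref{col:bool-dec}. So, exactly as in Theorem~\ref{thm:dist-maxis-is-agg-alg}, Theorem~\ref{thm:aggregation-alg} applies and the execution on $L(G)$ costs the same asymptotic number of rounds in $\mathsf{CONGEST}$; the extra aggregation a vertex of $G$ performs over its own incident edges is again a sum/OR and is free.

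For the approximation, running the algorithm on $L(G)$ gives an independent set of edges, i.e.\ a matching $M$ of $G$. A node $e$ of $L(G)$ fails to be covered by the independent set precisely when $e\notin M$ and no edge incident to $e$ lies in $M$, i.e.\ when both endpoints of $e$ are $M$-free; call such an $e$ \emph{exposed}, and note that by Theorem~\ref{thm:local-restate} every edge is exposed with probability at most $\delta$. Fix a maximum matching $M^{*}$ and write $M^{*} = A \cup B$, with $A$ the edges of $M^{*}$ having an $M$-matched endpoint and $B$ the exposed edges of $M^{*}$. Since both $M$ and $M^{*}$ are matchings, each $M$-matched vertex is an endpoint of at most one edge of $A$, so $|A|\le 2|M|$; and $\mathbb{E}[|B|] \le \delta|M^{*}|$. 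Hence $|M^{*}| = |A| + |B| \le 2|M| + |B|$, which in expectation gives $\mathbb{E}[|M|] \ge \frac{1-\delta}{2}|M^{*}| \ge \frac{1}{2+\eps}|M^{*}|$ once $\delta$ is a small enough constant. To upgrade this to a high-probability bound I would use the locality clause of Theorem~\ref{thm:local-restate}: the bad event for a node $e$ of $L(G)$ is governed --- even under adversarial completion of the remaining coins --- by the randomness inside $N^{+}_{2}(e)$, which sits inside an $O(1)$-radius ball of $G$ spanning only $\Delta^{O(1)}$ vertices; this gives the exposure indicators a dependency graph of polynomial degree, and a Chernoff-type bound for sums with bounded dependency (or: properly color the dependency graph with $\Delta^{O(1)}$ colors and apply Chernoff per color class) yields $|B| \le 2\delta|M^{*}|$ w.h.p., and hence $|M| \ge \frac{1}{2+\eps}|M^{*}|$ w.h.p.

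The hard part will be precisely this last concentration step. One has to carefully extract the bounded-dependency structure from the adversarial-completion statement of Theorem~\ref{thm:local-restate} rather than the naive claim that the exposure indicator is a function of the $2$-hop coins alone, and one has to treat separately the regime where $|M^{*}|$ is too small for a Chernoff bound over $M^{*}$ to yield a genuinely high-probability guarantee --- there one instead argues at the granularity of individual vertices, or observes directly that when $|M^{*}|$ is small a nearly-maximal matching already misses all but a constant number of its edges except with polynomially small probability. The remaining ingredients --- the round count and the congestion-free simulation on $L(G)$ --- follow essentially mechanically from Theorems~\ref{thm:local-restate}, \ref{thm:dist-maxis-is-agg-alg}, and~\ref{thm:aggregation-alg}.
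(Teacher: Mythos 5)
Your proposal takes essentially the same route as the paper: run the improved nearly-maximal independent set algorithm on $L(G)$, invoke the local-aggregation machinery of \Cref{sec:2MM} to simulate it in $\mathsf{CONGEST}$, bound rounds via \Cref{thm:local-restate}, and argue the approximation by noting that each optimal edge is ``exposed'' with probability at most $\delta$ and that each found matching edge can block at most two optimal edges. Your $M^{*}=A\cup B$ decomposition with $|A|\le 2|M|$ is a clean rephrasing of the paper's blame argument.

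Two small remarks on where the treatments diverge. First, the paper sets $\delta = 2^{-\log^{0.7}\Delta}$ rather than a constant $\delta(\eps)$; both choices keep $K^{2}\log(1/\delta)$ dominated by $\log\Delta/\log K$, so the round bound is unaffected, and the constant choice you make is the more natural one for the approximation claim. Second, your worry about the concentration step is well placed but is not a gap peculiar to your write-up: the paper itself handles it only with a footnote asserting ``exponential concentration'' via local dependencies (and, in the $\mathsf{LOCAL}$ analogue, explicitly concedes that a clean high-probability statement needs $\Delta\le n^{\gamma}$ or relies on the $1-e^{-\Omega(|OPT|)}$-type guarantee of the alternative algorithm in \Cref{sec:2epsMMbipartite}). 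So your cautious framing of the bounded-dependency Chernoff argument and the small-$|M^{*}|$ regime is, if anything, a more honest account of what the theorem's ``w.h.p.'' actually delivers than the paper's own one-line treatment.
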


\subsection{A fast $(1+\eps)$-approximation of maximum cardinality matching}
We now discuss our $O(\frac{\log \Delta}{\log\log \Delta})$-round algorithm for $(1+\eps)$-approximation of maximum unweighted matching. Since the algorithm and its analysis are somewhat lengthy and technical, we can present only a bird's-eye view of them. The actual description is deferred to \Cref{sec:1epsMM}. We first discuss the algorithm for the $\mathsf{LOCAL}$ model, and then briefly discuss some of the ideas we use for extending it to the $\mathsf{CONGEST}$ model.

We follow a classical approach of Hopcroft and Karp\cite{HopcroftKarp1973}. Given a matching $M$, an augmenting path $P$ with respect to $M$ is a path that starts with an unmatched vertex, and alternates between non-matching and matching edges, and ends in an unmatched vertex. Flipping an augmenting path $P$ means removing $P \cap M$ edges from $M$ and replacing them with edges of $P\setminus M$. The approximation algorithm based on the method of Hopcroft and Karp\cite{HopcroftKarp1973} works as follows: For each $\ell =1$ to $O(1/\eps)$, find a \emph{maximal set of vertex-disjoint augmenting paths of length $\ell$}, and flip all of them.

\noindent The analysis of \cite{HopcroftKarp1973}  shows that this finds a $(1+\eps)$-approximation of maximum matching. Hence, all that we need to do is to find a maximal set of vertex-disjoint augmenting paths of length $\ell$. This problem can be formulated as a maximal independent set problem in a virtual graph, called the \emph{conflict graph}: put one vertex for each augmenting path of length $\ell$, and connect two vertices if their corresponding paths intersect. Each communication round on the conflict graph can be simulated in $O(\ell) = O(1)$ rounds in the network in the $\mathsf{LOCAL}$ model. Thus, if we could find an MIS in $O(\log \Delta/\log\log \Delta)$ rounds, we would be done. However, we only know how to compute a nearly-maximal independent set in this number of rounds.

When applied in this context, the guarantee that our nearly-maximal independent set algorithm provides is that each augmenting path of length $\ell$ has only a small $\delta$ probability of remaining (without any intersecting $\ell$-length augmenting path in the computed nearly-maximal set). However, this notion of near-maximality is not strong enough for us to be able to say that we still get a good approximation. For instance, one natural idea would be to simply discard the remaining augmenting paths (and thus also their vertices). However, this is not possible because with the current notion of near-maximality, each node may have a high probability of having at least one of the augmenting paths going through it remain. Notice that there are up to $\Delta^\ell$ such paths and we cannot afford to use a union bound over them.

In a nutshell, our approach is to provide a much tighter analysis of (a simple variation of) the algorithm, by leveraging the fact that the paths are short, having length $O(1/\eps)$. This tighter analysis allows us to say that if we run the algorithm for slightly more time, larger by an $O(1/\eps^2)$ factor, then the probability of each node having a remaining augmenting path will be small enough to allow us to discard such nodes. This tighter analysis is presented in a more general framework, which may be of independent interest. It concerns computing a nearly-maximal matching in a low-rank hypergraph, where each hyperedge contains a small number of vertices. The relation is that we can think of each augmenting path as one hyperedge, on the same set of vertices. These hyperedges would have rank $O(1/\eps)$, and a matching of hyperedges --- that is, a set of hyperedges that do not share a vertex --- would be the a set of vertex-disjoint paths. 

The above sketches our algorithm in the $\mathsf{LOCAL}$ model. Making this algorithm work in the $\mathsf{CONGEST}$ model brings in a range of new challenges. For instance, we cannot build the conflict graph explicitly and hence, the above algorithm does not work as is. We use a number of ideas, in order to extend the algorithm to the $\mathsf{CONGEST}$ model, which are described in \Cref{sec:1epsMMCONGEST}. 

One of the key ideas, which we find particularly interesting and may prove useful beyond our work, is a decentralized manner of performing the increase or decreases of the marking probabilities in the nearly-maximal independent set algorithm (as discussed in the previous subsection). Notice that now each augmenting path has a probability, which we would like to increase or decrease. Roughly speaking, we define an attenuation parameter $\alpha_{t}(v)$ for each node $v$ and we let the marking probability of each augmenting path be the multiplication of the attenuations of its vertices. Each node will decide on its own whether to increase or decrease its attentuation. Of course it is possible that some nodes of the path raise their attenuation and some lower it. However, we prove that in a long enough span of time and by choosing the increase or decrease parameters right, the net effect will still be in the correct direction, allowing us to mimic the analysis of the ideal $\mathsf{LOCAL}$ model algorithm, and thus prove the approximation guarantee.

\section{Discussion}
\label{sec:discussion}
This papers gives distributed approximation algorithms for maximum independent set and maximum matching in $\mathsf{CONGEST}$. We obtain a $\Delta$-approximation for the former using local-ratio techniques, and deduce a $2$-approximation for the latter by defining local aggregation algorithms and showing that they allow simulation on the line graph in the restricted $\mathsf{CONGEST}$ model, despite the need to simulate many nodes.

We then provide fast approximations that relax either the approximation factor to $2+\eps$ or the setting to an unweighted case, but have the optimal round complexity of $O(\log\Delta/\log\log\Delta)$.

One intriguing open question is whether this fast running time can also be obtained for the problem of finding a maximal independent set. Notice that by the algorithm we describe in \Cref{lem:NMISmain}, we can compute an almost maximal independent set in $O(\log\Delta/\log\log\Delta)$ rounds. Particularly, this is an independent set where the probability of each node remaining (without being, or having a neighbor, in the independent set) is at most $2^{-\log^{1-\gamma}\Delta}$, for any desirably small constant $\gamma>0$. However, to be able to extend this to a maximal independent set, we would need this failure probability to be at most $2^{-\Theta(\log\Delta)}$. Furthermore, given our algorithm and the lower bound of Kuhn et al.~\cite{kuhn2006price}, we now know that this is the complexity of finding a constant approximation for maximum matching. Similarly, by Bar-Yehuda et al.~\cite{bar2016distributed}, we also know that this is the complexity of finding a constant approximation for the vertex-cover problem. However, for the problem of finding a maximal independent set, there remains a $\log\log\Delta$ gap between the lower bound of~\cite{kuhn2006price} and the algorithm of Ghaffari~\cite{Ghaffari2016}.

\bibliographystyle{alpha}
\bibliography{MaximumMatching}
\appendix
\newpage
\section{Omitted Pseudocodes and Proofs}

Algorithm~\ref{alg:seq-maxis} is the pseudocode for the sequential local ratio MaxIS approximation.
\SeqAlgMaxIS
Algorithm~\ref{alg:dist-maxis} is the pseudocode for the distributed randomized MaxIS approximation.
\DistAlgMaxIS
Algorithm~\ref{alg:col-dist-maxis} is the pseudocode for the distributed deterministic MaxIS approximation.
\DistAlgColor

\subsection{Proofs omitted from section 2}
\begin{lemma-repeat}{lem:ind-set-lr}
	$x'$ is a $\Delta$-approximate solution for both the reduced graph and the residual graph.
\end{lemma-repeat}
\begin{proof}
	We note that $w_2[u]=w[u]$ for every $u \in U$. Thus, $w_1[u]=0$ for every $u \in U$. We do not incur any additional cost for the reduced graph because $x'$ is created by adding nodes from $U$ to $x$. Because  $x$ is $\Delta$-approximate for the reduced graph, so is $x'$.
	
	For the residual graph, only nodes in $\cup_{u \in U} N(u)$ have non zero weights.
	Let $x^*\in \{0,1\}^n$ be an optimal solution for the residual graph. We can bound from above the weight of $x^*$ by summing over the weights of $N(u)$ for every $u\in U$ where $x^*[u]=1$, taking into account that for any neighborhood $N(u)$, at most $|N(v)|-1$ nodes can be selected to a solution due to the independence constraints. We get the following upper bound for the weight $x^*$:
	\begin{align*}
		\sum_{v\in V} w_2[v]x^*[v] = \sum_{v\in V} \sum_{u\in U\cap N(v)} w_2[u]x^*[u] = \sum_{u\in U}\sum_{v \in N(u)} w_2[u]x^*[u]\\ \leq \sum_{u\in U} w_2[u]\cdot (|N(u)|-1) = \sum_{u\in U} w_2[u]\cdot deg(u) \leq \Delta\sum_{u\in U} w_2[u].
	\end{align*}
	On the other hand, $x'$ is selected such that for each $u\in U$ at least one $v \in N(u)$ is in $x'$ for any $u \in U$. Thus, $x'\cdot w_2 = \sum_{u\in U}\sum_{v \in N(u)} w_2[u]\cdot x'[u] \geq \sum_{u\in U} w_2[u]$, which means that $x'$ is at least a $\Delta$-approximation for $x^*$ on the residual graph, and the proof is complete.
\end{proof}
\begin{lemma}
	\label{lem:dist-mis-correctness}
	With probability at least $1-p$, $\Ltop = \emptyset$ after $O(\MIS)$ rounds.
\end{lemma}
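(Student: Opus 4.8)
Fix the topmost non-empty layer $\Ltop = L_j$ at the start of a phase, and recall that by the encoding $\ell_v(v)=\lceil \log w_v(v)\rceil$ a node $v$ lies in $L_j$ exactly when $\ell_v(v)=j$, i.e.\ when $2^{j-1}<w_v(v)\le 2^j$. The first observation I would record is that the layer $L_j$ can only shrink during the phase: reductions only ever \emph{decrease} weights (a $reduce(\cdot)$ message subtracts a nonnegative value), and since $j$ is the topmost index every current weight is already $\le 2^j$, so no node can move up into $L_j$. Hence it suffices to show that, within $O(\MIS)$ rounds and with probability at least $1-p$, every node that is currently in $L_j$ has left it.

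Next I would establish the ``no waiting'' property: within $O(1)$ rounds all of $L_j$ is running the MIS black‑box simultaneously. Because $j$ is the topmost non-empty index, every neighbor $u$ of a node $v\in L_j$ has $\ell_v(u)\le j=\ell_v(v)$, so once the first round of $\weightUpdate$ messages has been exchanged the test ``$\forall u\in N(v),\ \ell_v(u)\le \ell_v(v)$'' passes and $v$ sets $status=ready$. Every node of $L_j$ does this in the same round, so the inner busy‑wait (waiting for same‑level neighbors to be $ready$) clears after a further $O(1)$ rounds, after which all of $L_j$ invokes the MIS algorithm together. This is precisely the claim that nodes in $\Ltop$ never need to wait.

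Now condition on the event, of probability at least $1-p$, that this single MIS invocation terminates within $\MIS$ rounds; this is the only place randomness enters and it yields exactly the claimed $1-p$. The key structural step — and the one I expect to take the most care to state — is to pin down, for each $v\in L_j$ that is \emph{not} selected into the MIS, a MIS-neighbor of $v$ that also lies in $L_j$. For this, note that a neighbor $u'$ of $v$ with $\ell_{u'}(u')<j$ has $v$ itself as a strictly higher‑level neighbor, hence fails its own readiness test and is not among the nodes running the MIS; therefore the only neighbors of $v$ eligible to enter the computed MIS are its $L_j$-neighbors, and by maximality of the MIS on the participating subgraph $v$ has a neighbor $u\in L_j\cap\mathsf{MIS}$. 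Since $u\in L_j$ we have $w_v(u)>2^{j-1}$, and upon joining the MIS $u$ sends $reduce(w_v(u))$ to all neighbors; within $O(1)$ further rounds $v$ applies it and gets $w_v(v)\leftarrow w_v(v)-w_v(u)<2^{j}-2^{j-1}=2^{j-1}$, so $v$ is either $\removed$ or has dropped to a strictly lower layer — in either case $v\notin L_j$. A node $v\in L_j$ that \emph{is} selected into the MIS instead sets $w_v(v)=0$ and becomes a candidate, so it too leaves $L_j$.

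Finally I would add up the costs: $O(1)$ rounds to bring all of $L_j$ into the MIS together, $\MIS$ rounds for the MIS to finish (on the conditioned event), and $O(1)$ rounds for the resulting $reduce$ messages to be delivered and applied, for a total of $O(\MIS)$ rounds. At that point no node lies in $L_j$, and since nothing can have entered $L_j$ in the meantime, $\Ltop=\emptyset$, as claimed. The only non-routine ingredients are the two $O(1)$-round synchronization claims and the maximality argument localizing the witnessing MIS-neighbor inside $L_j$; the rest is bookkeeping with the layer thresholds.
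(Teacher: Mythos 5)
Your proof is correct and follows essentially the same route as the paper's: nodes in $\Ltop$ need not wait, the black-box MIS runs (with probability $1-p$) for $\MIS$ rounds on the graph induced by the ready nodes, and every node of $\Ltop$ is then either selected (weight set to $0$) or loses more than $2^{j-1}$ of weight because a MIS-neighbor in $L_j$ sends a $reduce$. You supply a few details the paper states more tersely — in particular the observation that $L_j$ can only shrink, the explicit $O(1)$-round synchronization before the MIS invocation, and the argument that any eligible MIS-neighbor of a $v\in L_j$ must itself lie in $L_j$ (since lower-level neighbors fail their readiness test), which is what makes the "weight at least half" claim go through — but these are elaborations, not a different approach.
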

\begin{proof}
	Let $G'$ be the graph induced by $\Ltop$. By the code, nodes in $\Ltop$ need not wait to run an MIS algorithm and as long as a node is not in $\Ltop$ it does not participate in an MIS algorithm. With probability at least $1-p$ an MIS is selected for $G'$ after $\MIS$ rounds.
	All nodes selected to the MIS have their weights reduced to zero. Every other node $v$ has at least one neighbor in the MIS, whose weight, by our definition of layers, is at least half of the weight of $v$. Thus the weight of every node $v\in \Ltop$ is halved, emptying the layer.
\end{proof}

\begin{theorem-repeat}{thm:maxis-main}
	The distributed MaxIS approximation algorithm (Algorithm~\ref{alg:dist-maxis}) finishes after at most $O(\MIS\cdot \log \Wmax)$ rounds with probability at least $1-p\log \Wmax$ in the $\mathsf{CONGEST}$ model. 
\end{theorem-repeat}
\begin{proof}
	Applying a union bound over all layers, gives that all layers are empty after at most $O({\MIS\cdot \log \Wmax})$ iterations with probability at least $1-p\log \Wmax$, by Lemma~\ref{lem:dist-mis-correctness}. We require $p=o(1/\log \Wmax)$. This bounds the communication cost for the removal stage.
	
	Denote by $C_i$ the set of candidate nodes from level $L_i$. These nodes are at level $L_i$ when they are set to be candidate nodes.
	Nodes in $C_i$ wait for neighbors with higher precedence to decided whether they enter the solution. We note that nodes in $C_0$ do not have any neighbors with higher precedence. After nodes in $C_0$ have decided, the nodes in $C_1$ do not have to wait and so on. Thus, all candidate nodes make a decision after at most $\log \Wmax$ rounds. This bounds the communication cost for the addition stage.
\end{proof}
\begin{theorem-repeat}{thm:aggregation-alg}
	If $Alg$ is a local aggregation algorithm running in the $\mathsf{CONGEST}$ model in $O(t)$ rounds, it can be executed on the line graph in $O(t)$ rounds.
\end{theorem-repeat}
\begin{proof}
	$Alg$ is executed on the primary node, and we maintain the invariant that $\Dvi$ is always present in both the primary and secondary nodes. Every time $Alg$ needs to execute a function $f$, both the primary and secondary nodes already have the data of all of their neighbors. Each node calculates $f$ on the data of its neighbors, the secondary node sends this calculation to the primary which in turn executes the joining function yielding the desired result. Afterwards the new node data is sent to the secondary node.
	
	No communication is needed to access the data of the neighbors, as a neighbor of $e$ must share a node with it, which contains its data. There is no congestion when sending the value of $f$ or the new data to the secondary node. Thus, the number of rounds is $O(t)$.
\end{proof}

\begin{theorem-repeat}{thm:dist-maxis-is-agg-alg}
	Algorithm~\ref{alg:dist-maxis} is a local aggregation algorithm.
\end{theorem-repeat}
\begin{proof}
	Let us explicitly define $D_{v,i}$ for every $v\in V$. Each node knows its weight, status and degree. Formally, $D_{v,i} = \{w_i(v), status_v, deg_i(v)\}$.
	The algorithm uses "and" and "or" Boolean functions, which by Observation~\ref{col:bool-dec}, are aggregate functions. Each node also needs to update its weight at each iteration. The weight update function for $v$ can be written as $f_w: [W]^{{v}\times N(v)} \rightarrow [W]$, with $f_w(w_v, \{w_u \mid u \in N(v) \}) = w_v - \sum w_u$, which is of course an aggregate function.	
\end{proof}
\section{Faster Approximations of Maximum Matching}
\label{sec:2epsMM}
In this section, we present $O(\frac{\log \Delta}{\log\log \Delta})$-round algorithms for $(2+\eps)$-approximation of maximum weighted matching and $(1+\eps)$-approximation of maximum unweighted matching. As stated before, these are the first algorithms to obtain the provably optimal round complexity for matching approximation. Their complexity matches the seminal lower bound of Kuhn, Moscibroda, and Wattenhofer~\cite{kuhn2006price} which shows that $\Omega(\frac{\log \Delta}{\log\log \Delta})$ rounds are necessary, in fact for \emph{any} constant approximation.

\subsection{A fast $(2+\eps)$-approximation of maximum weighted matching}\label{lem:NMIS}
We first present a simple $O(\frac{\log \Delta}{\log\log \Delta})$-round $(2+\eps)$-approximation for maximum unweighted matching. We then explain how this approximation extends to the weighted setting via known methods.

To get a $(2+\eps)$-approximation, we gradually find large matchings and remove them from along with the other edges that are incident on them. At the end, we show that the remaining graph has only a small matching left, hence allowing us to prove an approximation guarantee. 

The key algorithmic component in our approach is an adaptation of the algorithm of Ghaffari~\cite{Ghaffari2016}. Ghaffari presented an MIS algorithm, which if executed on a graph $H$ with maximum degree $\Delta$ for $O(\log \Delta + \log 1/\delta)$ rounds, computes an independent set $\mathsf{IS}$ of nodes of $H$, with the following probabilistic near-maximality guarantee: each node of $H$ is either in $\mathsf{IS}$ or has a neighbor in it, with probability at least $1-\delta$. We will be applying a similar method on the line graph of our original graph, hence choosing a nearly-maximal independent set of edges. However, this running time is not quite fast enough for our target complexity.

We first explain relatively simple changes in the algorithm and its analysis that improve the complexity to $O(\frac{\log \Delta}{\log\log \Delta})$, for any constant $\eps>0$. We then explain how that leads to an $O(\frac{\log \Delta}{\log\log \Delta})$-round $(2+\eps)$-approximation for maximum unweighted matching.\\

\begin{mdframed}[hidealllines=false,backgroundcolor=gray!30]
\paragraph{The Modified Nearly-Maximal Independent Set Algorithm}
In each iteration $t$, each node $v$ has a probability $p_t(v)$ for trying to join the independent set $\mathsf{IS}$. Initially $p_0(v)=1/K$, for a parameter $K$ to be fixed later. The total sum of the probabilities of neighbors of $v$ is called its \emph{effective-degree} $d_{t}(v)$, i.e., $d_t(v)=\sum_{u \in N(v)} p_{t}(u)$. The probabilities change over time as follows: $$p_{t+1}(v)=
\begin{cases}
    p_{t}(v)/K, & \text{if } d_{t}(v)\geq 2\\
    \min\{Kp_{t}(v), 1/K\},  &\text{if } d_{t}(v)< 2.
\end{cases}
$$
The probabilities are used as follows: In each iteration, node $v$ gets \emph{marked} with probability $p_{t}(v)$ and if no neighbor of $v$ is marked, $v$ joins $\mathsf{IS}$ and gets removed along with its neighbors.
\end{mdframed}

\begin{theorem-repeat}{thm:local-restate} 
For each node $v$, the probability that by the end of round $\beta(\log \Delta/\log K + K^2 \log 1/\delta)$ $=O(\frac{\log \Delta}{\log\log \Delta})$, for a large enough constant $\beta$, node $v$ is not in $\mathsf{IS}$ and does not have a neighbor in $\mathsf{IS}$ is at most $\delta$. Furthermore, this holds even if coin tosses outside $N^{+}_{2}(v)$ are determined adversarially.
\end{theorem-repeat}
Let us say that a node $u$ is \emph{low-degree} if $d_t(u)<2$, and \emph{high-degree} otherwise. We define two types of \emph{golden rounds} for a node $v$: (1) rounds in which $d_t(v)<2$ and $p_{t}(v)= 1/K$, (2) rounds in which $d_{v}(t)\geq 1$ and at least $d_{t}(v)/(2K^2)$ of $d_{t}(v)$ is contributed by low-degree neighbors.

\begin{lemma}\label{lem:goldCount} By the end of round $\beta(\log \Delta/\log K + K^2 \log 1/\delta)$, either $v$ has joined $\mathsf{IS}$, or has a neighbor in $\mathsf{IS}$, or at least one of its golden round counts reached $\frac{\beta}{13}(\log \Delta/\log K + K^2 \log 1/\delta)$.
\end{lemma}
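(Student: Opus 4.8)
The plan is to show that, unless $v$ has been removed (by joining $\mathsf{IS}$ or having a neighbor join), a constant fraction of the first $T := \beta(\log \Delta/\log K + K^2 \log 1/\delta)$ rounds must be golden rounds of one of the two types. I would argue by a potential/counting argument on the effective-degree $d_t(v)$. Partition the $T$ rounds (up to the point $v$ is removed) into three classes: (a) rounds where $d_t(v) \geq 2$ (\emph{high-degree} rounds for $v$), (b) rounds where $d_t(v) < 2$ and $p_t(v) = 1/K$ — these are exactly golden rounds of type (1), and (c) rounds where $d_t(v) < 2$ and $p_t(v) < 1/K$. The goal is to bound the number of rounds in classes (a) and (c) that are \emph{not} golden of type (2), so that the remaining rounds — forced to be golden of type (1) or type (2) — number at least $\frac{\beta}{13}(\log \Delta/\log K + K^2 \log 1/\delta)$.

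First I would handle class (c): whenever $d_t(v) < 2$ and $p_t(v) < 1/K$, the update rule multiplies $p_t(v)$ by $K$. Since $p_0(v) = 1/K$ and $p_t(v)$ never exceeds $1/K$, the net multiplicative increase of $p_t(v)$ over the whole execution is at most $K$ (a factor of $K$ up from its minimum possible... actually $p_t(v)\ge$ some floor), so the number of class-(c) rounds can exceed the number of times $p_t(v)$ was divided by $K$ (which happens only in high-degree rounds for $v$) by at most $\log_K K = 1$ per... — more carefully: $\#\{\text{class (c) rounds}\} \le \#\{\text{rounds with } d_t(v)\ge 2\} + O(1)$, because each class-(c) round raises $p_t(v)$ by a factor $K$ and each high-degree round lowers it by $K$, and $p_t(v)$ stays in a bounded range $[\,?,1/K]$. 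So it suffices to bound the number of high-degree rounds (class (a)) that are not type-(2) golden. For this, the key observation is the standard one from Ghaffari's analysis: in a high-degree round ($d_t(v)\ge 2$) that is \emph{not} type-(2) golden, at least $d_t(v)(1 - 1/(2K^2)) \ge d_t(v)/2$ of the effective degree comes from high-degree neighbors; each such neighbor $u$ has $p_{t+1}(u) = p_t(u)/K$, so the contribution of high-degree neighbors to $d_{t+1}(v)$ drops by a factor $K$ relative to its value, while low-degree neighbors can at most multiply their contribution by $K$. Tracking the quantity $d_t(v)$ (or rather $\log d_t(v)$, plus a correction term counting how much low-degree neighbors have inflated it), each non-golden high-degree round decreases this potential by $\Omega(\log K)$, whereas it starts at $O(\log \Delta)$ and each type-(2) golden round (the only rounds that can increase it substantially) increases it by $O(\log K)$. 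Hence $\#\{\text{non-golden high-degree rounds}\} = O\!\big((\log \Delta)/\log K + \#\{\text{type-(2) golden rounds}\}\big)$.

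Combining these, if the total count of type-(1) and type-(2) golden rounds were less than $\frac{\beta}{13}(\log \Delta/\log K + K^2\log 1/\delta)$, then every class would contribute only $O((\log\Delta)/\log K) + \frac{\beta}{13}(\cdots)$ rounds, and choosing $\beta$ large enough the total would be strictly less than $T$, a contradiction with the fact that $v$ was not removed during all $T$ rounds. I expect the main obstacle to be setting up the right potential function for the high-degree rounds so that low-degree neighbors' inflation of $d_t(v)$ is correctly charged — this is exactly the delicate part of Ghaffari's argument, and the modification here (with $K$ replacing the constant $2$) requires re-checking that the per-round decrease is $\Omega(\log K)$ rather than $\Omega(1)$, and that the bookkeeping constants still close up to give the factor $\frac{1}{13}$. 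The subsequent Lemma (bounding the probability of failure given many golden rounds) will then complete the proof of \Cref{thm:local-restate}, but that is a separate step.
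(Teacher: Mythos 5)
Your proposal is correct and takes essentially the same route as the paper. The paper's proof sets $h$ to be the number of rounds with $d_t(v)\geq 2$ and derives $g_1 \geq T - 3h$ from exactly your class-(a)/class-(c) accounting (increases of $p_t(v)$ cannot outnumber decreases, since $p_t(v)\leq 1/K$ always), then shows $h \leq \log_{2K/3}\Delta + 3g_2$ by the same shrinkage observation you describe: in a non-type-2-golden round with $d_t(v)\geq 1$, the effective degree satisfies $d_{t+1}(v) \leq K\cdot\frac{1}{2K^2}d_t(v) + \frac{1}{K}(1-\frac{1}{2K^2})d_t(v) < \frac{3}{2K}d_t(v)$. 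The one place where the paper is tidier than your sketch is the bookkeeping for the type-2 golden rounds' inflation of $d_t(v)$: rather than introducing a corrected potential of the form ``$\log d_t(v)$ plus an inflation term,'' the paper simply notes that each type-2 golden round multiplies $d_t(v)$ by at most $K$, which (since $(\frac{3}{2K})^2 K \ll 1$) can cancel the effect of at most two shrinkage rounds, so one loses at most $3g_2$ rounds overall; both framings give the same final linear inequality $h \leq \log_{2K/3}\Delta + 3g_2$, and with $\beta$ chosen large enough this contradicts $h\geq 4T/13$ unless $g_2 > T/13$.
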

\begin{proof} Let $T=\beta(\log\Delta/\log K + K^2 \log 1/\delta)$ for a sufficiently large constant $\beta$. We focus only on the first $T$ rounds. Let $g_1$ and $g_2$ respectively be the number of golden rounds of types 1 and 2 for $v$, during this period. We assume that by the end of round $T$, node $v$ is not removed and $g_1 \leq T/13$, and we conclude that $g_2 \geq T/13$.

Let $h$ be the number of rounds during which $d_{t}(v)\geq 2$. Notice that the changes in $p_{t}(v)$ are governed by the condition  $d_{t}(v)\geq 2$ and the rounds with $d_{t}(v)\geq 2$ are exactly the ones in which $p_{t}(v)$ decreases by a $K$ factor. Since the number of $K$-factor increases in $p_{t}(v)$ can be at most equal to its number of $K$-factor decreases, there are at least $T -2h$ rounds in which $p_{t}(v)=1/K$. Out of these rounds, at most $h$ rounds can have $d_{t}(v)\geq 2$. Hence, $g_1 \geq T -3h$. The assumption $g_1 \leq T/13$ gives that $h\geq 4T/13$. Let us now consider the changes in the effective-degree $d_{t}(v)$ of $v$ over time. If $d_{t}(v) \geq 1$ and this is not a golden round of type-2, then we have 
$$d_{t+1}(v) \leq K \frac{1}{2K^2} d_t(v)+ \frac{1}{K} (1-\frac{1}{2K^2}) d_{t}(v) < \frac{3}{2K} d_{t}(v).$$ 
There are $g_2$ golden rounds of type-2. Except for these, whenever $d_{t}(v)\geq 1$, the effective-degree $d_{t}(v)$ shrinks by at least a $\frac{3}{2K}$ factor. In these exception cases, it increases by at most a $K$ factor. Each of these exception rounds cancels the effect of no more than $2$ shrinkage rounds, as $(\frac{3}{2K})^2 \cdot K \ll 1$. Thus, ignoring the total of at most $3g_2$ rounds lost due to type-2 golden rounds and their cancellation effects, every other round with $d_{t}(v)\geq 2$ pushes the effective-degree down by a $\frac{3}{2K}$ factor. This cannot happen more than $\log_{\frac{2K}{3}}\Delta$ times as that would lead the effective degree to exit the $d_{t}(v)\geq 2$ region. Hence, the number of rounds in which $d_{t}(v)\geq 2$ is at most $\frac{\log\Delta}{\log \frac{2K}{3}} + 3g_2$. That is, $h \leq \frac{\log\Delta}{\log \frac{2K}{3}} + 3g_2$. Since $h\geq 4T/13$, and because $T=\beta(\log\Delta/\log K + K^2 \log 1/\delta)$ for a sufficiently large constant $\beta$, we get that $g_2 > T/13$.
\end{proof}

\begin{lemma}\label{lem:removalPerRound} In each type-1 golden round, with probability at least $\Theta(1/K)$, $v$ joins the IS. Moreover, in each type-2 golden round, with probability at least $\Theta(1/K^2)$, a neighbor of $v$ joins the IS.  Hence, the probability that by the end of round $\beta(\log\Delta/\log K + K^2 \log 1/\delta)$, node $v$ has not joined the IS and does not have a neighbor in it is at most $\delta$. These statements hold even if the coin tosses outside $N^+_{2}(v)$ are determined adversarially.
\end{lemma}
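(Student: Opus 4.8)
The proof splits into the two per-round estimates, which are purely local marking computations, and their aggregation over the golden rounds supplied by \Cref{lem:goldCount}. Throughout, fix $v$ and condition on $\mathcal{F}_{t-1}$, the full history before round $t$; this fixes every probability $p_t(\cdot)$, and the round-$t$ marks are then mutually independent with $\Pr[u\text{ marked}]=p_t(u)$. Consider first a \emph{type-1} golden round $t$, so $p_t(v)=1/K$ and $d_t(v)<2$. Node $v$ enters $\mathsf{IS}$ exactly when it is marked and no neighbor is, and the mark of $v$ is independent of those in $N(v)$, so using $1-x\ge 4^{-x}$ for $x\in[0,1/2]$ (valid as each $p_t(u)\le 1/K\le 1/2$) together with $\sum_{u\in N(v)}p_t(u)=d_t(v)<2$,
\[
\Pr[v\in\mathsf{IS}\text{ in round }t\mid\mathcal{F}_{t-1}]=p_t(v)\prod_{u\in N(v)}\bigl(1-p_t(u)\bigr)\ \ge\ \frac1K\cdot 4^{-d_t(v)}\ >\ \frac{1}{16K},
\]
which is the claimed $\Theta(1/K)$ bound.

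For a \emph{type-2} golden round $t$, let $S=\{u\in N(v):d_t(u)<2\}$ be the low-degree neighbors; by definition $\sum_{u\in S}p_t(u)\ge d_t(v)/(2K^2)\ge 1/(2K^2)$. Since each $p_t(u)\le 1/K$, pass to a subset $S'\subseteq S$ with $W:=\sum_{u\in S'}p_t(u)\in[\,1/(2K^2),\,1/16\,]$ (add elements of $S$ greedily until the running sum lands in this window, which is possible once $K$ is above a fixed constant; small $\Delta$ is handled trivially, as then the target complexity is $O(1)$). As in the type-1 computation, $\Pr[u\in\mathsf{IS}\mid\mathcal{F}_{t-1}]\ge p_t(u)\,4^{-d_t(u)}>p_t(u)/16$ for each $u\in S'$. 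For distinct $u,u'\in S'$: if $u\sim u'$ they cannot both be in $\mathsf{IS}$, and otherwise $\Pr[u,u'\in\mathsf{IS}\mid\mathcal{F}_{t-1}]\le p_t(u)p_t(u')$ (both must be marked, and their marks are independent coins). Hence, by inclusion--exclusion truncated after the second term,
\[
\Pr\Bigl[\exists u\in S':\ u\in\mathsf{IS}\ \Big|\ \mathcal{F}_{t-1}\Bigr]\ \ge\ \sum_{u\in S'}\frac{p_t(u)}{16}-\sum_{\{u,u'\}\subseteq S'}p_t(u)p_t(u')\ \ge\ \frac{W}{16}-\frac{W^2}{2}\ \ge\ \frac{W}{32}\ \ge\ \frac{1}{64K^2},
\]
the third step using $W\le 1/16$. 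Every event in the union forces a neighbor of $v$ into $\mathsf{IS}$, giving the claimed $\Theta(1/K^2)$ bound.

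To aggregate, let $\tau$ be the first round in which $v$ is removed (joins $\mathsf{IS}$ or gains a neighbor in it), set $T=\beta(\log\Delta/\log K+K^2\log1/\delta)$, and note $T/13\ge \tfrac{\beta}{13}K^2\log1/\delta$. By \Cref{lem:goldCount}, on $\{\tau>T\}$ node $v$ has at least $T/13$ golden rounds of type $1$, or at least $T/13$ of type $2$. Let $Y_t$ indicate that round $t$ is a type-1 golden round with $\tau>t-1$; then $Y_t$ is $\mathcal{F}_{t-1}$-measurable, and the first per-round bound gives $\Pr[\tau\le t\mid\mathcal{F}_{t-1}]\ge (c/K)Y_t$. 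Consequently $N_t:=\mathbf{1}[\tau>t]\prod_{s\le t}(1-c/K)^{-Y_s}$ is a supermartingale with $N_0=1$, so Markov's inequality yields $\Pr[\tau>T,\ \sum_{s\le T}Y_s\ge T/13]\le (1-c/K)^{T/13}\le e^{-(c\beta/13)K\log1/\delta}\le\delta$ for $\beta$ a large enough constant. The identical argument with $c/K$ replaced by the $c'/K^2$ from the second per-round bound handles type-2 golden rounds with probability at most $e^{-(c'\beta/13)\log1/\delta}\le\delta$. Summing the two cases and replacing $\delta$ by $\delta/2$ (i.e., enlarging $\beta$) bounds $\Pr[\tau>T]$ by $\delta$, as desired.

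Every estimate above was obtained by conditioning on $\mathcal{F}_{t-1}$ and then using only the fresh round-$t$ marks of nodes in $N^{+}_{2}(v)$: ``$v$ joins'' depends on marks in $N^{+}(v)$, and ``a neighbor of $v$ joins'' on the marks of the $u\in S'\subseteq N(v)$ together with their neighbors; so the bounds do not change if coins outside $N^{+}_{2}(v)$ are set adversarially, which gives the final clause. I expect the one genuinely delicate point to be the type-2 estimate: the events ``$u\in\mathsf{IS}$'' over low-degree neighbors $u$ are correlated, and one must truncate $S$ down to $S'$ precisely so that the leading Bonferroni term dominates the quadratic correction; the remaining work, including the supermartingale bookkeeping in the aggregation step, is routine.
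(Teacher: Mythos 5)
Your proof is correct, and the type-1 estimate and the ``locality'' clause are the same as in the paper, but your type-2 estimate and your aggregation step take genuinely different routes. For the type-2 bound, the paper's argument fixes an arbitrary order on the low-degree neighbors $L_t(v)$, stops at the first marked one, and conditions on that node to argue its neighborhood is clean with probability $\ge 1/16$; this gives $(1-e^{-d_t(v)/(2K^2)})\cdot\frac{1}{16}\ge\frac{1}{64K^2}$. You instead avoid the sequential conditioning by a Bonferroni (truncated inclusion--exclusion) bound over a carefully chosen subset $S'\subseteq L_t(v)$ with $W=\sum_{u\in S'}p_t(u)\in[1/(2K^2),1/16]$, so the linear term $W/16$ dominates the quadratic correction $W^2/2$ --- a self-contained alternative whose one subtle point (and you flag it) is precisely the need to truncate $S$ so the second Bonferroni term does not overwhelm the first. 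For the aggregation, the paper simply asserts that the per-golden-round removal events are ``independent in different rounds'' and raises $(1-1/(64K^2))$ to the number of golden rounds; you formalize this with a supermartingale $N_t=\mathbf{1}[\tau>t]\prod_{s\le t}(1-c/K^2)^{-Y_s}$ and an optional-stopping/Markov argument, which is more rigorous since the identity of the golden rounds is itself random and depends on the unfolding history. Both approaches give the same constants up to unimportant factors; yours buys rigor in the aggregation and an ordering-free type-2 argument, at the cost of the extra greedy-subset step and the supermartingale bookkeeping.
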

\begin{proof} In each golden type-$1$ round, we have $d_t(v)<2$ and $p_{t}(v)= 1/K$. The latter means that node $v$ gets marked with probability $1/K$, and the former means that the probability that none of the neighbors of $v$ is marked is at least $\prod_{u\in N_{t}(v)} (1-p_{t}(u)) \geq 4^{-\sum_{u\in N_{t}(v)} p_{t}(u)} = 4^{-d_{t}(v)}\geq 1/16$. Hence, in each golden type-$1$ round, node $v$ joins the IS with probability at least $1/(16K)$. 

In each golden type-$2$ rounds, we have $d_{v}(t)\geq 1$ and at least $d_{t}(v)/(2K^2)$ of $d_{t}(v)$ is contributed by low-degree neighbors. Suppose we examine the set $L_{t}(v)$ of low-degree neighbors of $v$ one by one and check whether they are marked or not. We stop when we reach the first marked node. The probability that we find at least one marked node is at least $1-\prod_{u\in L_{t}(v)} (1-p_{t}(u)) \geq 1-e^{-\sum_{u\in L_{t}(v)} p_{t}(u)} \geq 1-e^{-d_{t}(v)/(2K^2)} \geq 1-e^{-1/2K^2}\geq 1/4K^2$, given that $(2K^2)\geq 1$. Now that we have found the first marked light neighbor $u$, the probability that no neighbor $w$ of $u$ is marked is at least $\prod_{w\in N_{t}(u)} (1-p_{t}(w)) \geq 4^{-\sum_{w\in N_{t}(u)} p_{t}(w)} = 4^{-d_{t}(u)}\geq 1/16$. Therefore, overall, the probability that node $v$ gets removed in a type-2 golden round is at least $\frac{1}{64K^2}$. 

Now notice that these events are independent in different rounds. Hence, the probability that node $v$ does not get removed after $\Theta(K^2\log 1/\delta)$ golden rounds is at most $(1-\frac{1}{64K^2})^{\Theta(K^2\log 1/\delta)} \leq \delta$. Furthermore, in the above arguments, we only relied on the randomness in the nodes that are at most within $2$ hops of $v$. Hence, the guarantee is independent of the randomness outside the $2$-neighborhood of $v$.
\end{proof}
\begin{proof}[Proof of \Cref{thm:local-restate}]
By \Cref{lem:goldCount}, within the first $\beta(\log \Delta/\log K + K^2 \log 1/\delta)$ round, each node $v$ is either already removed (by joining or having a neighbor in the IS) or one of its golden round counts reaches at least $\beta(\log \Delta/\log K + K^2 \log 1/\delta)/13$. As \Cref{lem:removalPerRound} shows, in each golden round, node $v$ gets removed with probability at least $\Theta(1/K^2)$. Hence, given a large enough constant $\beta$, the probability that node $v$ remains through $\beta(K^2 \log 1/\delta)/13$ golden rounds is at most $\delta$.
\end{proof}

\begin{theorem-repeat}{thm:MCM2CONGEST} There is a distributed algorithm in the $\mathsf{CONGEST}$ model that computes a $(2+\eps)$-approximation of maximum unweighted matching in $O(\frac{\log\Delta} {\log\log \Delta})$ rounds, for any constant $\eps>0$, whp.
\end{theorem-repeat}
\begin{proof} The algorithm executes the nearly-maximal independent set algorithm explained above on the line-graph. This finds a nearly-maximal set of independent edges, i.e., edges which do not share an endpoint, or in other words, a nearly-maximal matching. The fact that the algorithm can be run on the line-graph in the $\mathsf{CONGEST}$ model follows from \Cref{sec:2MM}, since it is easy to see that this is a local aggregation algorithm. The round complexity of $O(\frac{\log\Delta} {\log\log \Delta})$ follows from the $O(\log \Delta/\log K + K^2 \log 1/\delta)$ bound of \Cref{thm:local-restate}, by setting $K=\Theta(\log^{0.1} \Delta)$ and $\delta = 2^{-\log^{0.7} \Delta}$. Let us now examine the approximation factor. Each edge of the optimal matching has probability at most $\delta$ of becoming unlucky and not being in our found matching and not having any adjacent edge in it either. These are the edges that remain after all iterations of the nearly-maximal independent set algorithm. Thus, we expect at most $\delta \ll \eps$ fraction of the edges of the optimal matching to become unlucky. The number also has an exponential concentration around this mean\footnote{This concentration is due to the fact that the dependencies are local and each edge's event of being unlucky depends on only at most $\Delta$ other edges. However, one can obtain a better success probability. See \Cref{sec:2epsMMbipartite} for an algorithm which provides a stronger concentration, giving a $(2+\eps)$-approximation with probability $1-e^{-\Omega(-|OPT|)}$.}. Ignoring these $\eps|OPT|$ unlucky edges of the optimal matching, among the rest of the edges, each edge of the found matching can be blamed for removing at most $2$ edges of the optimal matching. So the found matching is a $(2+\eps)$-approximation.
\end{proof}

\paragraph{Extension to the Weighted Case via Methods of Lotker et al.} Above, we explain an $O(\frac{\log \Delta}{\log\log \Delta})$-round algorithm for $(2+\eps)$-approximation of maximum unweighted matching. This can be extended to the weighted case via known methods, while preserving the asymptotic complexity, as follows: First, we sketch a method of Lotker et al.\cite{lotker2009distributed} which allows one to turn a $(2+\eps)$-approximation for the unweighted case to an $O(1)$-approximation for the weighted case. Classify the weights of edges into powers of a large constant $\beta$, i.e., by defining weight buckets of the form $[\beta^{i}, \beta^{i+1}]$. In each of these big-buckets, partition the weight range further into $O(\log_{1+\eps} \beta)$ \emph{small-buckets} in powers of $1+\eps$. Run the following procedure in all big-buckets in parallel: Starting from the edges of the highest weight small-bucket in this big-bucket, find a $(2+\eps)$-approximation of the matching in that small-bucket using the unweighted matching algorithm, remove all their incident edges in that big-bucket, and move to the next biggest small-bucket. After $O(\log_{1+\eps} \beta)$ iterations of going through all the small-buckets, for each big-bucket, we have found a matching that is a $2+O(\eps)$ 
approximation of the maximum weight matching among all the edges with weight in this big-bucket. However, altogether, this is not a matching as a node might have a ``matching''-edge incident on it in each of the big-buckets. Keep each of these chosen edges only if it has the highest weight among the chosen edges incident on it.  Lotker et al.\cite{lotker2009distributed} showed that this produces an $O(1)$-approximation of the maximum weight matching.

Now, this $O(1)$-approximation can be turned into a $(2+\eps)$-approximation. Lotker et al.~\cite[Section 4]{lotkerMatchingImproved} present a method that via $O(1/\eps)$ black-box usages of an $O(1)$-approximation Maximum Weight Matching algorithm $\mathcal{A}$, produces a $(2+\eps)$-approximation of the maximum weight matching. We here provide only a brief and intuitive sketch. The method is iterative, each iteration is as follows. Let $M$ be the current matching. We look only at weighted augmenting paths of $M$ with length at most $3$. We define an auxiliary weight for each unmatched edge $e$, which is equal to the overall weight-gain that would be obtained by adding $e$ to the matching and instead erasing the matching $M$ edges incident on endpoints of $e$ (if there are any). Note that this auxiliary weight can be computed easily in $O(1)$ rounds. Then, we use algorithm $\mathcal{A}$ to find a matching which has an auxiliary weight at most an $O(1)$ factor smaller than the maximum weight matching, according to the auxiliary weights. Then we augment $M$ with all these found matching edges, erasing the previously matching edges incident on their endpoints. We are then ready for the next iteration. As Lotker et al. show, after $O(1/\eps)$ iterations, the matching at hand is a $(2+\eps)$-approximation of the maximum weight matching.

\subsection{A fast $(1+\eps)$-approximation of maximum cardinality matching in $\mathsf{LOCAL}$}
\label{sec:1epsMM}
Here, we present an $O(\frac{\log \Delta}{\log\log \Delta})$-round algorithm in the $\mathsf{LOCAL}$ model for $(1+\eps)$-approximation of maximum unweighted matching for any constant $\eps>0$. In the next subsection, we explain how to extend a variant of this algorithm to the $\mathsf{CONGEST}$ model, in essentially the same round complexity, i.e., without incurring a loss in the asymptotic notation.

Our algorithm follows a general approach due to the classical work of Hopcroft and Karp\cite{HopcroftKarp1973}, where one iteratively augments the matching with short augmenting paths, until the desired approximation (or exact bound) is achieved. In our case, the key algorithmic piece is to efficiently find a nearly-maximal set of disjoint short augmenting paths, in $O(\log \Delta/\log\log \Delta)$ rounds. Our base will again be Ghaffari's MIS algorithm\cite{Ghaffari2016}. However, here we need significant changes to the algorithm and its analysis.

\paragraph{Augmenting Paths.} Consider an arbitrary matching $M$. A path $P={v_0, v_1, v_2, \dots, v_{p}}$ is called an \emph{augmenting path} of length $p$ for $M$ if in $M$, we have the following two properties: (1) nodes $v_0$ and $v_p$ are unmatched, and (2) for every $2i+1\in [1, p-1]$, node $v_{2i+1}$ is matched to node $v_{2i+2}$. In this case, let $M\oplus P=M\cup P \setminus (M\cap P)$. That is, $M\oplus P$ is the matching obtained by erasing the matching edges $\{v_{1}, v_{2}\}, \{v_{3}, v_{4}\}, \dots, \{v_{p-2}, v_{p-1}\}$, and instead adding edges $\{v_{0}, v_{1}\}, \{v_{2}, v_{3}\}, \dots, \{v_{p-1}, v_{p}\}$. Note that $M \oplus P$ is indeed a matching, and moreover, it has one more matching edge than $M$. The operation of replacing $M$ with $M\oplus P$ is called \emph{augmenting} the matching $M$ with the path $P$.

For a fast distributed algorithm, we would like to be able to augment the matching $M$ with many augmenting paths simultaneously. For a given matching $M$, two augmenting paths $P_1$ and $P_2$ are called \emph{dependent} if their node-sets intersect. Note that in that case, we can not augment $M$ simultaneously with both $P_1$ and $P_2$. However, in case we have two independent augmenting paths, we can augment $M$ with both of them simultaneously, and the result is the same as first performing the first augmentation, and then performing the second.

We now recall two well-known facts about augmenting paths, due to the classical work of Hopcroft and Karp\cite{HopcroftKarp1973}: (1) Matching $M$ is a $(1+\eps)$-approximation of the maximum matching if and only if it does not have an augmenting path of length at most $2\lceil{1/\eps}\rceil+1$. (2) If the shortest augmenting path for $M$ has length $\ell$ and one augments $M$ with a maximal independent set of augmenting paths of length $\ell$, the shortest augmenting path of the resulting matching will have length at least $\ell+1$.

\paragraph{General Methodology.} Based on the above two facts, a natural and by now standard method for computing a $(1+\eps)$-approximation of maximum matching is as follows: for each $\ell =1, \dots, 2\lceil{1/\eps}\rceil+1$, we find a maximal independent set of augmenting paths of length exactly $\ell$ and augment the matching with them. At the end, we have a $(1+\eps)$-approximation of maximum matching. This outline was followed by Fischer et al.\cite{fischer1993approximating} in the PRAM model and Lotker et al.\cite{lotkerMatchingImproved} in the distributed model.

As clear from the above outline, the core algorithmic piece is to compute a maximal independent set of augmenting paths of length $\ell$. We consider an auxiliary graph with one node per each augmenting path of length $\ell$ and an edge between each two of them if they intersect. This auxiliary graph, which is usually called the \emph{conflict graph}, can be constructed and simulated in $\ell = O(1/\eps)$ rounds of communication on the base graph in the $\mathsf{LOCAL}$ model. Thus, the remaining question is how to find a maximal independent set on this graph. Lotker et al.\cite{lotkerMatchingImproved} used a variant of Luby's distributed MIS algorithm\cite{luby1986simple} to compute this set in $O(\log n)$ rounds. However, aiming at the complexity of $O(\log \Delta/\log\log \Delta)$, we cannot afford to do that. Indeed, it remains open whether an MIS can be computed in $O(\log \Delta/\log\log \Delta)$ rounds. Thus, unless we resolve that question, we cannot compute a truly maximal independent set. Our remedy is to resort to computing ``nearly-maximal'' sets, using ideas similar to the algorithm of the previous section. Here, the near-maximality should be according to an appropriate definition which allows us to preserve the approximation guarantee. However, there are crucial subtleties and challenges in this point, which require significant alterations in the algorithm, as discussed next.

\paragraph{Intuitive Discussion of the Challenge.} To be able to follow the general method explained above and get its approximation guarantee, we need to ensure that no short augmenting path remains. However, the set of augmenting paths that we compute are not exactly maximal, which means some paths might remain. A natural solution would be that, after finding a ``nearly-maximal'' set of augmenting paths of a given length, we \emph{neutralize/deactivate} the rest, say by removing one node of each of these remaining paths from the graph. However, to ensure that we do not lose in the approximation factor, we need to be sure that this removal does not damage the maximum matching size significantly. For instance, if we can say that each node is removed with a small probability $\delta \ll \eps$, in expectation this can remove at most a $2\delta$ fraction of the optimal matching edges, and thus $(1+\eps')$-approximating the remaining matching would give an approximation of roughly $(1+\eps')(1+2\delta) \approx 1+\eps'+2\delta$. This is a good enough approximation, as we can choose the $\eps'$ and $\delta$ appropriately, e.g., about $\eps/10$. However, in our context, running the nearly-maximal independent set algorithm of the previous subsection among augmenting paths for $O(\log\Delta/\log K + K^2 \log 1/\delta)$ rounds would only guarantee that the probability of each one augmenting path remaining is at most a small $\delta$. Since there can be up to $\Delta^{O(1/\eps)}$ augmenting paths going through one node, and as we want the running time within $O(\frac{\log \Delta}{\log\log \Delta})$, we cannot afford to apply a union bound over all these paths and say that the probability of each one node having a remaining augmenting path is small. The fix relies on some small changes and a much tighter analysis of the nearly-maximal independent set algorithm for this special case, leveraging the fact we are dealing with paths of constant length at most $d= O(1/\eps)$. In fact, to present the fix in its general form, we turn to another (equivalent) formulation of finding nearly-maximal matchings---that is, a set of hyperedges where each node has at most one of its hyperedges in this set---in a hypergraph $H$ of rank $d=O(1/\eps)$. The connection is that we will think of augmenting paths as hyperedges of $H$ and nodes of $H$ will be the same as nodes of the original graph $G$, where a hyperedge in $H$ includes all nodes of the corresponding augmenting path in $G$.

\paragraph{Nearly-Maximal Matching in Low-Rank Hypergraphs.} We want an algorithm for hypergraphs of rank $d$ that in $O(d^2 \frac{\log \Delta}{\log\log \Delta})$ rounds, deactivates each node with probability at most $\delta \ll \eps$, and finds a maximal matching in the hypergraph induced by active nodes. Note that this is stronger than guaranteeing that each edge is removed or has an adjacent edge in the matching with such a probability. The algorithm will be essentially the same as that of the previous subsection, where now each hyperedge $e$ has a probability $p_{t}(e)$ for each iteration $t$ and gets marked and joins the matching accordingly. We will however deactivate some nodes in the course of the algorithm. The more important new aspect is in the analysis.

\paragraph{The Change in the Algorithm.} Call a hyperedge $e$ \emph{light} iff $\sum_{e', e'\cap e\neq\emptyset} p_{t}(e') < 2$, and let $L_t$ be the set of light hyperedges of round $t$. Set $K=\log^{0.1}\Delta$. Call a round $t$ \emph{good} for a node $v$ if $\sum_{e, e\in L_t, v\in e} p_{t}(e) \geq 1/(2dK^2)$. Note that in a round that is good for $v$, with probability at least $\Theta(\frac{1}{dK^2})$, one of these light hyperedges joins the matching and thus $v$ gets removed. Deactivate node $v$ if it has had more than $\Theta(dK^2 \log 1/\delta)$ good rounds. Note that the probability that a node $v$ survives through $\Theta(dK^3\log 1/\delta)$ good rounds and then gets deactivated is at most $\delta$.

\paragraph{Analysis.} A key property of the algorithm is the following deterministic guarantee, which proves the maximality of the found matching in the hypergraph induced by active nodes:

\begin{lemma}\label{lem:MaximalityAmongActives} After $T=O(d^2 \frac{\log \Delta}{\log\log \Delta})$ rounds, there is no hyperedge with all its nodes active.
\end{lemma}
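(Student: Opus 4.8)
The plan is to argue by contradiction, adapting the bookkeeping in the proof of Lemma~\ref{lem:goldCount} from tracking a single node to tracking a single hyperedge. Suppose that after $T$ rounds some hyperedge $e$ still has all of its $|e|\le d$ nodes active. Since the matching only grows and the only ways a node leaves the instance are being incident to a matched hyperedge or being deactivated, having all nodes of $e$ active at round $T$ forces $e$ to have been \emph{alive} throughout rounds $1,\dots,T$: it never joined the matching, no intersecting hyperedge joined the matching, and no node of $e$ was deactivated. Hence the probability $p_t(e)$ evolves for all $t\le T$ and the effective degree $d_t(e)=\sum_{e':\,e'\cap e\neq\emptyset}p_t(e')$ is well-defined, with $e$ \emph{light} in round $t$ when $d_t(e)<2$ and \emph{heavy} otherwise. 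Call round $t$ \emph{type-1 golden for $e$} if $e$ is light in round $t$ and $p_t(e)=1/K$; let $g_1$ be the number of such rounds among the first $T$, let $h$ be the number of heavy rounds, and let $g_2$ be the number of rounds that are good for at least one node $v\in e$.

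The first ingredient is the bookkeeping on $p_t(e)$, exactly as in the proof of Lemma~\ref{lem:goldCount}: $p_t(e)$ decreases by a $K$ factor precisely in heavy rounds, and otherwise can only increase (capped at $1/K$) or stay at $1/K$; since the number of $K$-factor increases never exceeds the number of decreases, all but at most $3h$ rounds are type-1 golden, i.e.\ $g_1\ge T-3h$. The second, and key, ingredient is the observation that every type-1 golden round of $e$ is good for \emph{every} node $v\in e$: in such a round $e$ itself is light and contains $v$, so $\sum_{e'\in L_t,\,v\in e'}p_t(e')\ge p_t(e)=1/K\ge 1/(2dK^2)$; consequently $g_1\le g_2$. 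The third ingredient uses that all nodes of $e$ are active: each $v\in e$ has experienced at most $\Theta(dK^2\log 1/\delta)$ good rounds, so $g_2\le\sum_{v\in e}(\text{good rounds of }v)\le d\cdot\Theta(dK^2\log 1/\delta)=\Theta(d^2K^2\log 1/\delta)$.

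The last ingredient bounds $h$ via the evolution of $d_t(e)$, again mirroring Lemma~\ref{lem:goldCount}. In a round $t$ that is not good for any $v\in e$, summing the inequality $\sum_{e'\in L_t,\,v\in e'}p_t(e')<1/(2dK^2)$ over the $\le d$ nodes of $e$ gives $\sum_{e'\in L_t,\,e'\cap e\neq\emptyset}p_t(e')<1/(2K^2)$; splitting $d_{t+1}(e)$ into the contribution of light neighbors (each growing by at most a $K$ factor) and heavy neighbors (each shrinking by a $K$ factor) then yields, whenever additionally $d_t(e)\ge 1$, that $d_{t+1}(e)<\tfrac{1}{2K}+\tfrac1K d_t(e)<\tfrac{2}{K}d_t(e)$, a $\Theta(K)$-factor shrink. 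In the $g_2$ good rounds $d_t(e)$ grows by at most a $K$ factor each, and since $(\tfrac2K)^2K\ll 1$ each such growth is cancelled by $O(1)$ subsequent shrink rounds; moreover $d_t(e)$ can be pushed from the light region back into the heavy region only by a good round. Since $d_0(e)\le d\cdot\Delta_H$, where $\Delta_H\le (d+1)\Delta^d$ is the maximum number of length-$\le d$ augmenting paths through a vertex (such a path is determined by at most $2d$ successive neighbor choices), this gives $h\le O\!\big(\tfrac{\log(d\Delta_H)}{\log K}\big)+O(g_2)=O\!\big(\tfrac{d\log\Delta}{\log K}\big)+O(g_2)$.

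Combining the pieces, $T-3h\le g_1\le g_2$, hence $T\le 3h+g_2\le O\!\big(\tfrac{d\log\Delta}{\log K}\big)+O(g_2)\le O\!\big(\tfrac{d\log\Delta}{\log K}\big)+O(d^2K^2\log 1/\delta)$. Substituting $K=\log^{0.1}\Delta$ (so $\log K=\Theta(\log\log\Delta)$) and the chosen constant $\delta$, the right-hand side is $O\!\big(d^2\tfrac{\log\Delta}{\log\log\Delta}\big)$, so taking $T$ to be a large enough constant multiple of this expression makes the inequality false, contradicting the existence of $e$. I expect the main obstacle to be the last step: carefully tracking $d_t(e)$ as it oscillates between the light and heavy regions, making the ``each good round costs only $O(1)$ shrink rounds'' cancellation argument rigorous, and pinning down the bound on the maximum hyperedge degree $\Delta_H$ so that $\tfrac{\log\Delta_H}{\log K}$ stays within the $O\!\big(d^2\log\Delta/\log\log\Delta\big)$ budget.
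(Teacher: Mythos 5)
Your proof is correct and reaches the stated bound, and the overall skeleton matches the paper's: establish $g_1 \geq T-3h$, observe that type-1 golden rounds of $e$ are good rounds for the nodes of $e$, and derive a contradiction with the per-node cap on good rounds. The one place where you take a genuinely different route is the bound on $h$, the number of rounds in which $e$ is heavy. The paper introduces an intermediate per-node notion of heaviness with threshold $1/d$ (a node $v$ is heavy when $\sum_{e,\,v\in e}p_t(e)\geq 1/d$), bounds the heavy rounds of each node $v\in e$ separately by $\Theta(dK^2\log 1/\delta)+3d\log_K\Delta$ using a per-node shrinkage argument, and then multiplies by $d$ to bound $h$, because every round with $\sum_{e',\,e'\cap e\neq\emptyset}p_t(e')\geq 1$ forces at least one node of $e$ to be heavy. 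You instead bound $h$ directly at the edge level: summing the not-good condition over the at most $d$ nodes of $e$ shows the light contribution to $d_t(e)$ is below $1/(2K^2)$, which yields a $\Theta(1/K)$-factor shrink of $d_t(e)$ in every not-good round with $d_t(e)\geq 1$, and the rest is the usual cancellation bookkeeping against good rounds. Your version avoids the auxiliary per-node heaviness notion entirely and is in fact slightly tighter, giving $h = O(d\log_K\Delta + g_2)$ rather than the paper's $O(d^2\log_K\Delta + d^2K^2\log 1/\delta)$; both yield $T=O(d^2\log\Delta/\log\log\Delta)$ once $K$ and $\delta$ are substituted. Your other small deviation — using $g_1\leq g_2\leq d\cdot\Theta(dK^2\log 1/\delta)$ instead of noting that a type-1 golden round is good for each fixed node and so $g_1\leq\Theta(dK^2\log 1/\delta)$ — loses a $d$ factor here, but this is absorbed by the same asymptotic bound and is consistent with your use of $g_2$ in the $h$ estimate. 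Your worry about making the cancellation argument rigorous is the right instinct, but the same amortization as in Lemma~\ref{lem:goldCount} applies verbatim, and your observation that $d_t(e)$ cannot re-enter the heavy region in a not-good round is exactly the fact needed to close it.
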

\begin{proof} We consider an arbitrary hyperedge $e=\{v_1, v_2, \dots, v_d\}$ and prove that it cannot be the case that all of its nodes remain active for $O(d^2 \frac{\log \Delta}{\log\log \Delta})$ rounds. We emphasize that this is a deterministic guarantee, and it holds for every hyperedge $e$. We assume that hyperedge $e$ is not removed (by removing itself due to an adjacent edge in the matching, or because of one of its nodes becoming deactivated) in the first $T=O(d^2 \frac{\log \Delta}{\log\log \Delta})$ rounds and we show that this leads to a contradiction, assuming a large enough constant in the asymptotic notation definition of $T$.

For each node $v$, call a round \emph{heavy} if $\sum_{e, v\in e} p_{t}(e) \geq 1/d$. If round $t$ is heavy but not good, then by definition of good rounds, at most $1/(2dK^2)$ weight in the summation comes from light hyperedges. This is at most a $\frac{1/(2dK^2)}{1/d} =1/2K^2$ fraction of the summation. Hence, in every heavy but not-good round, the summation shrinks by a factor of $1/(2K^2) \cdot K + (1-1/(2K^2)) \cdot 1/K\leq 2/K$. In each heavy and good round, the summation grows by at most a $K$ factor, which is in effect like canceling at most $2$ of the shrinkage rounds (as in the proof of Lemma~\ref{lem:goldCount}). The number of good rounds is at most $\Theta(dK^2\log 1/\delta)$. Therefore, since $\sum_{e, v\in e} p_{t}(e)$ starts with a value of at most $\Delta^d/K$, node $v$ can have at most $\Theta(dK^2\log 1/\delta) + 3d\log_{K}{\Delta}$ heavy rounds.

Now, looking at a hyperedge hyperedge $e=\{v_1, v_2, \dots, v_d\}$, we claim that $e$ cannot have more than $d\big(\Theta(dK^2\log 1/\delta) + 3d\log_{K}{\Delta}\big)$ rounds in which $\sum_{e', e'\cap e\neq\emptyset} p_{t}(e') \geq 1$. This is because in every such round, the summation in at least one of the $d$ nodes constituting $e$ must be at least $1/d$. Thus, in every such round, at least one of the nodes of hyperedge $e$ is heavy. But we just argued that each node has at most $\Theta(dK^2\log 1/\delta) + 3d\log_{K}{\Delta}$ heavy rounds. Thus, in total edge $e$ cannot have more than $h=\Theta(d^2K^2\log 1/\delta) + 3d^2\log_{K}{\Delta}$ rounds in which $\sum_{e', e'\cap e\neq\emptyset} p_{t}(e') \geq 2$.

During each round in which $\sum_{e', e'\cap e\neq\emptyset} p_{t}(e') \geq 2$, hyperedge $e$ reduces its $p_{t}(e)$ by a $K$ factor. Each other round raises $p_{t}(e)$ by a $K$ factor, unless $p_{t}(e)$ is already equal to $1/K$. Since each $K$-factor raise cancels one $K$-factor shrinkage, and as $p_{t}(e)$ starts at $1/K$, with the exception of $2h$ rounds, all remaining rounds have $p_{t}(e)=1/K$. Among these, at most $h$ can be rounds in which $\sum_{e', e'\cap e\neq\emptyset} p_{t}(e') \geq 2$. Thus, hyperedge $e$ has $T-3h$ rounds in which $p_{t}(e)=1/K$ and $\sum_{e', e'\cap e\neq\emptyset} p_{t}(e') < 2$. These are indeed good rounds for all of nodes $\{v_1, v_2, \dots, v_d\}$. But we capped the number of good rounds for each node to $\Theta(dK^2 \log 1/\delta)$. This is a contradiction, if we choose the constant in $T$ large enough.
\end{proof}

Now, we are ready to put together these pieces and present our $(1+\eps)$-approximation:
\begin{theorem} \label{thm:MCM1LOCAL} There is a distributed algorithm in the $\mathsf{LOCAL}$ model that computes a $1+\eps$ approximation of maximum unweighted matching in $O(\frac{\log\Delta} {\log\log \Delta})$ rounds, for any constant $\eps>0$.
\end{theorem}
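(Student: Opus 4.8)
The plan is to instantiate the general Hopcroft--Karp methodology recalled above, replacing its only nontrivial ingredient --- computing a maximal set of vertex-disjoint augmenting paths of a fixed length --- by the low-rank nearly-maximal hypergraph matching routine whose deterministic maximality is \Cref{lem:MaximalityAmongActives} and whose per-node deactivation probability was bounded by $\delta$ in the description of the modified algorithm, and then to control the damage caused by the deactivated nodes. Concretely, I would iterate $\ell$ over the $O(1/\eps)$ odd values $1,3,\dots,2\lceil 1/\eps\rceil+1$. At stage $\ell$, form the conflict hypergraph $H_\ell$ whose nodes are the currently active nodes of $G$ and whose hyperedges are the length-$\ell$ augmenting paths of the current matching $M$ all of whose nodes are active; it has rank $d=\ell+1=O(1/\eps)$, every node lies in at most $\Delta^{\ell}=\Delta^{O(1/\eps)}$ of its hyperedges, and one round on $H_\ell$ costs $O(\ell)=O(1/\eps)$ rounds of $G$ in the $\mathsf{LOCAL}$ model. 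Run the low-rank nearly-maximal matching algorithm on $H_\ell$ with $K=\log^{0.1}\Delta$ and a small \emph{constant} $\delta=\Theta(\eps^2)$, for the $T=O(d^2\log\Delta/\log\log\Delta)$ rounds of \Cref{lem:MaximalityAmongActives}. It outputs a set of vertex-disjoint length-$\ell$ augmenting paths that, by \Cref{lem:MaximalityAmongActives}, is maximal among the nodes still active, and each node is deactivated during the stage with probability at most $\delta$. As post-processing, remove every deactivated node \emph{together with its current matching partner} from $G$ (so that deactivation deletes only whole matching edges of $M$ and unmatched vertices, never exposing a new vertex), and afterwards augment $M$ with all the chosen disjoint paths (whose nodes survive, since an internal node of such a path has its matching partner on the same path).

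Correctness would rest on two claims. First, letting $A_\ell$ be the active node set at the end of stage $\ell$ and $M_\ell$ the matching restricted to $A_\ell$, an induction on $\ell$ shows that $M_\ell$ has no augmenting path of length at most $\ell$ in $G[A_\ell]$. The inductive step combines: (i) deleting full matching edges and unmatched vertices from $G[A_{\ell-2}]$ cannot shorten any alternating distance between unmatched vertices, so the matching entering stage $\ell$ still has no augmenting path of length at most $\ell-2$, hence (paths being of odd length) shortest length at least $\ell$; (ii) \Cref{lem:MaximalityAmongActives}, which makes the selected paths a maximal set of length-$\ell$ augmenting paths of this matching inside $G[A_\ell]$; and (iii) fact (2) above, that augmenting with a maximal set of shortest augmenting paths strictly increases the shortest-augmenting-path length. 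Thus the final matching $M$ has no augmenting path of length at most $2\lceil 1/\eps\rceil+1$ in the final active subgraph $G'$, so by fact (1) it is a $(1+\eps)$-approximation of the maximum matching of $G'$. Second, fixing a maximum matching $M^\star$ of $G$ and taking a union bound over the $O(1/\eps)$ stages, each node is deactivated (or is a deactivated node's partner) with probability $O(\delta/\eps)$, so the expected number of edges of $M^\star$ that lose an endpoint is $O(\delta/\eps)\,|M^\star|$; since each such ``damage'' event depends only on a neighborhood of constant radius (constant as $\eps$ is constant), the concentration argument used in the proof of \Cref{thm:MCM2CONGEST} gives that whp at most $O(\delta/\eps)\,|M^\star|$ of them occur, hence whp the maximum matching of $G'$ has size at least $(1-O(\delta/\eps))\,|M^\star|$. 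Combining the two claims, the output matching has size at least $\frac{1-O(\delta/\eps)}{1+\eps}\,|M^\star|$ whp; choosing $\delta=\Theta(\eps^2)$ small enough and rescaling $\eps$ by a constant yields a $(1+\eps)$-approximation. The round count is $O(1/\eps)$ stages, each costing $O(\ell)\cdot T=O((1/\eps)^{3}\log\Delta/\log\log\Delta)$ rounds, hence $O(\log\Delta/\log\log\Delta)$ in total for constant $\eps$.

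I expect the main obstacle to be making the inductive claim airtight: the destructive, staged process must be shown to behave like a bona fide Hopcroft--Karp execution on the \emph{final} graph $G'$, even though deactivations happen after earlier stages have already certified the absence of short augmenting paths. The delicate points are the precise ordering within a stage of ``run algorithm / delete deactivated nodes and their matching partners / augment'', and verifying that deleting a deactivated node's matching partner as well (a) never exposes a vertex and therefore cannot create a new short augmenting path, while (b) at most doubles the per-node deactivation probability so the damage bound is unaffected up to constants. A secondary point is to check that all quantities in \Cref{lem:MaximalityAmongActives} --- in particular the initial per-node hyperedge-probability sum $\Delta^{d}/K$ --- remain in the regime where its analysis gives $T=O(d^2\log\Delta/\log\log\Delta)$ with $K=\log^{0.1}\Delta$, phrased in terms of the \emph{original} maximum degree $\Delta$ rather than that of the conflict hypergraph.
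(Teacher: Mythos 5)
Your proof follows the same route as the paper's: run the Hopcroft--Karp outline over path lengths $\ell=1,3,\dots,O(1/\eps)$, in each phase invoke the low-rank nearly-maximal hypergraph-matching routine so that \Cref{lem:MaximalityAmongActives} gives maximality among active nodes in $O(d^2\log\Delta/\log\log\Delta)$ rounds, and charge the approximation loss to the $O(\delta/\eps)$ per-node deactivation probability with $\delta=\Theta(\eps^2)$.

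The one substantive thing you add is the rule that a deactivated node is removed \emph{together with its current matching partner}, so deactivations delete only whole matching edges and unmatched vertices and never expose a new vertex. This is a genuine gap in the paper's (quite terse) proof: deactivating only one endpoint of a matching edge exposes its partner in $G[\text{active}]$, and an exposed vertex can be the endpoint of an augmenting path strictly shorter than the current phase's $\ell$, so the inductive invariant ``no augmenting path of length $\le\ell$ among active nodes'' would silently fail. Your fix is the right one, it at most doubles the per-node removal probability (leaving the $O(\delta/\eps)|OPT|$ damage bound intact up to constants), and your observation that a chosen augmenting path's internal nodes have their matching partners on the same path (so the partner-deletion rule never dismantles a chosen path) is exactly the point that makes the fix compatible with the augmentation step.
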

\begin{proof}We have $O(1/\eps)$ many phases, in each of which we find a nearly-maximal independent set of augmenting paths of length $d$ for $d=1, 2, \dots, O(1/\eps)$. Each phase takes $\Theta(d^2K^3\log 1/\delta + d^2\log_{K}{\Delta})$ rounds. We will set $\delta = \Theta(\eps^2)$ and $K=\log^{0.1} \Delta$. Hence, this is a complexity of $O(\frac{\log\Delta} {\eps^2\log\log \Delta})$ per phase, and thus at most $O(\frac{\log\Delta} {\eps^3\log\log \Delta})$ overall, which is $O(\frac{\log\Delta} {\log\log \Delta})$ for any constant $\eps>0$.

 In each of these phases, each node might get deactivated with probability at most $\delta$. Hence, the overall probability of a node becoming deactivated is at most $\delta'=O(\frac{\delta}{\eps})$. As \Cref{lem:MaximalityAmongActives} implies, in each phase, the found set of augmenting paths of length $d$ is indeed maximal in the graph induced by active nodes. Hence, after $O(1/\eps)$ phases, there is no augmenting path of length less than $O(1/\eps)$ among the active nodes. Thus, the matching at that point is a $(1+\eps/2)$-approximation of the maximum matching in the graph induced by active nodes.
Throughout the iterations, we deactivate each node with probability $\delta'=O(\delta/\eps)$. Considering the optimal matching OPT, the expected number of matching edges of OPT that we remove by deactivating their nodes is at most $2\delta'|OPT|$.\footnote{We indeed have also a concentration around this expectation, especially if $\Delta\leq n^{\gamma}$ for an appropriately chosen constant $\gamma$, then the statement holds with high probability, modulo a 2 factor.} On the remaining nodes, the matching we have found is a $(1+\eps/2)$-approximation of the maximum matching. Hence, overall, the found matching is a $(1+\eps/2)(1+2\delta')$-approximation. Setting $\delta = \Theta(\eps^2)$, this is a $(1+\eps)$-approximation.
\end{proof}
\subsection{A fast $(1+\eps)$-approximation of maximum cardinality matching in $\mathsf{CONGEST}$}
\label{sec:1epsMMCONGEST}
Here we extend a suitably modified variant of the algorithm of the previous subsection to the $\mathsf{CONGEST}$ model. This will provide a $(1+\eps)$-approximation of maximum cardinality matching in $O(\frac{\log \Delta}{\log\log \Delta})$ rounds of the $\mathsf{CONGEST}$ model, for any constant $\eps>0$. The key component will be a $\mathsf{CONGEST}$-model algorithm for computing a nearly-maximal independent set of augmenting paths of length at most $O(1/\eps)$ in bipartite graphs. We then utilize a method of Lotker et al.\cite{lotkerMatchingImproved} to use $2^{O(1/\eps)}$ iterations of applying this component to obtain a $(1+\eps)$-approximation for maximum cardinality matching in general graphs.

\paragraph{Finding Augmenting Paths in Bipartite Graphs.}
Bipartite graphs provide a nice structure for augmenting paths, which facilitates algorithmic approaches for computing them. This was first observed and used by Hopcroft and Karp\cite{HopcroftKarp1973} and was later used also by others, including Lotker et al.\cite{lotkerMatchingImproved} in the distributed setting. Consider a bipartite graph $H=(A, B, E)$ and a matching $M\subseteq E$ in it. Imagine orienting all matching $M$ edges from $A$ to $B$ and all others from $B$ to $A$. Then, a path $P$ is an augmenting path for $M$ if and only if it is a directed path starting in an unmatched $A$-node and ending in an unmatched $B$-node. See \Cref{fig:BipartiteAugmentation}. The three green dashed paths show three augmenting paths.

Considering the bipartite graph $H$ and matching $M$, we will find a nearly-maximal set of augmenting paths of length at most $O(1/\eps)$ as follows. We go over various possible lengths one by one. For each odd $d= 1, 3, \dots, L=O(1/\eps)$, we find a set of augmenting paths of length $d$ and then deactivate the remaining nodes who have augmenting paths of length $d$. This deactivation will be done such that overall the probability of deactivating each node is small. We then move on to augmenting paths of $d+2$, and so on, until length $L=O(1/\eps)$. This way, each time we are looking for augmenting paths of the shortest length; this helps in the computation.

We now explain our method for computing a near-maximal set of augmenting paths of length $d$, in bipartite graphs where the shortest augmenting path has length (at least) $d$. We will not construct the conflict graph between the augmenting paths explicitly. Instead, we will try to emulate the algorithm on the fly by means of simple communications on the base graph. In fact, the emulation will not be truthful and the implementation will deviate from the ideal $\mathsf{LOCAL}$-model algorithm, described in the previous subsection. We will however show that these alterations do not affect the performance measures significantly.

As before, for each short augmenting path $\mathcal{P}$ of length $d$, we will have a marking probability $p_{t}(\mathcal{P})$, for each iteration $t$. However, we will not insist on any one node knowing this probability; rather it will be known in a distributed manner such that we can still perform our necessary computations on it. For instance, a key part will be for each node $v$ to learn the summation of the probabilities $p_{t}(\mathcal{P})$ for augmenting paths $\mathcal{P}$ of length $d$ that go through $v$.

Let us first explain our distributed representation of $p_{t}(\mathcal{P})$. We have a time-variable attenuation parameter $\alpha_{t}(v)$ for each node $v$ that is either an $A$-node or an unmatched $B$-node. We will change these attenuation parameters from iteration to iteration. For simplicity, let us extend the definition also to matched nodes in $B$ but keep in mind that for each such node $v$, we will always have $\alpha_{t}(v)=1$. These attenuation parameters determine the marking probabilities of the augmenting paths as follows: for each augmenting path $\mathcal{P}$ of length $d$, its marking probability is the multiplication of the attenuation parameters along the path, that is $p_{t}(\mathcal{P})=\prod_{v\in \mathcal{P}} \alpha_{t}(v)$. We will later explain how we adjust these attenuation parameters over time.

We first explain how each node $v$ can learn the summation of the probabilities $p_{t}(\mathcal{P})$ for augmenting paths $\mathcal{P}$ of length $d$ that go through $v$. For simplicity, let us first assume that $p_{t}(\mathcal{P})$ is the same for all these augmenting paths, and thus we only need to figure out the number of augmenting paths of length $d$ that go through each node $v$.

\begin{figure}[t]
	\centering
		\includegraphics[width=0.8\textwidth]{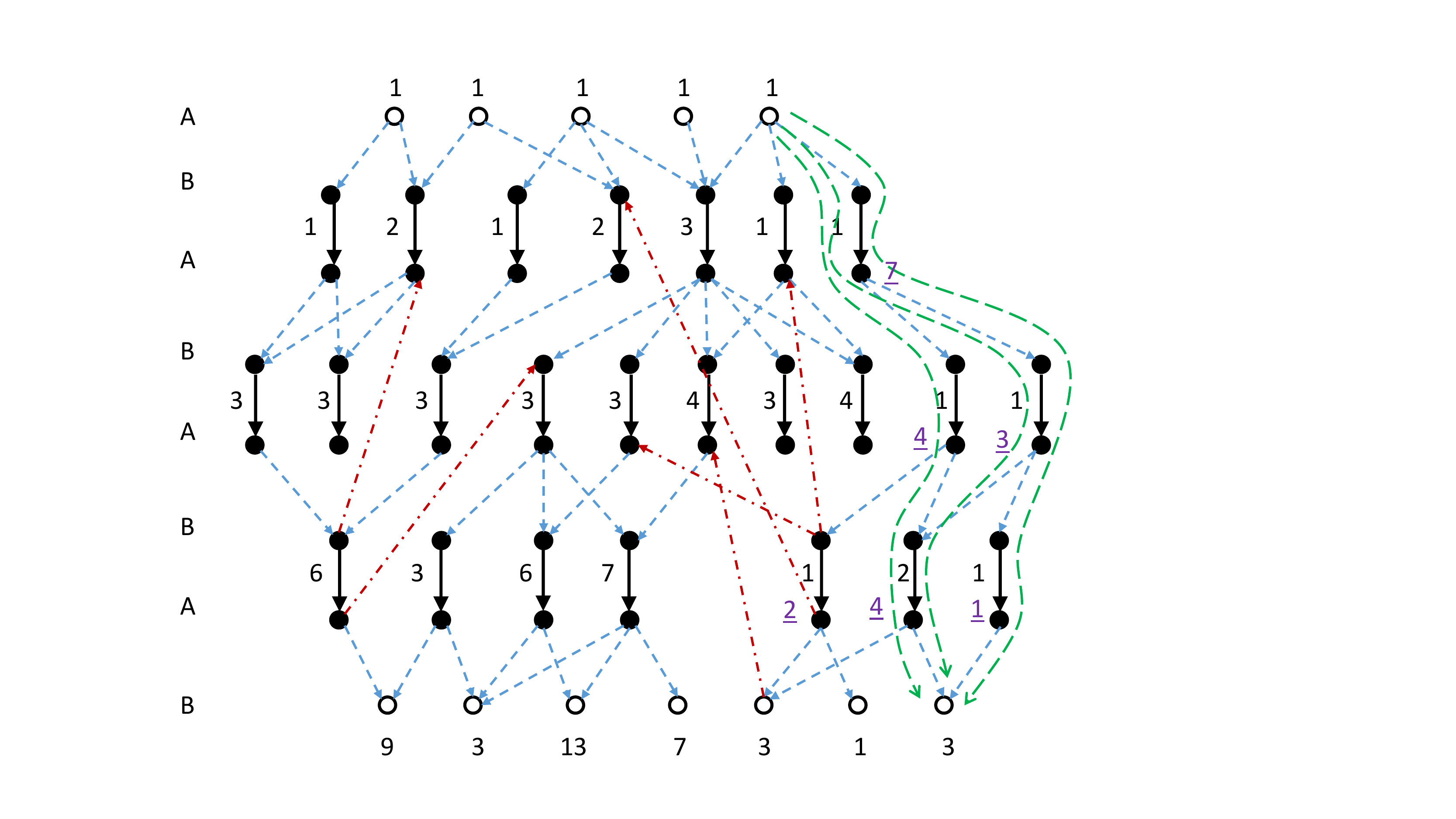}
	\caption{Augmenting Paths in a Bipartite Graph}
	\label{fig:BipartiteAugmentation}
\end{figure}

We will leverage the nice structure of the shortest augmenting paths here. This point was first observed and used by Lotker et a.\cite{lotkerMatchingImproved}. They show that a clean breadth first search style of message passing allows one to compute these numbers. The method is as follows. The reader might find \Cref{fig:BipartiteAugmentation} helpful while following this discussion. Notice that we are looking only for augmenting paths of length exactly $d$. Each unmatched $A$-node starts a counter $1$. Then, the algorithm proceeds for $d$ rounds, to find augmenting paths of length $d$ hop by hop. In each odd round, each $A$ node passes its count to all its $B$-neighbors. In each even round $t$, each matched $B$ node $b$ passes the summation of the numbers it received from its $A$-neighbors in the previous round $t-1$ to its matching-mate in $A$. Moreover, $b$ does this only in one round $t$ as any later round indicates augmenting paths with length greater than $d$.
If $t=d$, each unmatched $B$-node keeps the summation to itself without passing it on. Using an induction on the hop-counter, one can show that at the end, each unmatched node $b\in B$ learns the number of the augmenting paths of length $d$ that end at $b$. In \Cref{fig:BipartiteAugmentation}, the black numbers next to unmatched nodes or to matching edges indicate the numbers passed along during this forward traversal of augmenting paths. The red arrows indicate edges that are not a part of any shortest augmenting path, as in the BFS layers, they go back from a deeper layer to a shallower layer. These are edges for which the message would be sent after the receiving endpoint has already received summations in the previous rounds.

We next perform a similar procedure to ensure that all nodes know the respective number of short augmenting paths that go through them. This is by a simple time-reversal of these $d$ rounds of message passing. Start from each unmatched $B$-node $b$, and send back to each of its $A$-neighbors $a$ the number that $b$ received from $a$. Alternatively, this can be phrased as splitting the number that $b$ holds among its $A$-neighbors proportional to the numbers that it received from them during the forward traversal. Then, in each even round, each $A$-node passes to its matching mate in $B$ the summation of the numbers it received in the previous round from its $B$-neighbors. In each odd round, each $B$-node $b$ that received a number $x$ from its $A$-mate in the previous round splits this number $x$ among its $A$-neighbors $a$ proportional to the numbers that $b$ received from various $a$ during the corresponding round of the forward traversal of augmenting paths. After $d$ rounds, we reach back to the starting points of the augmenting paths;  at that time, each unmatched $A$-node keeps the received summation to itself without passing it on.

In \Cref{fig:BipartiteAugmentation}, the black numbers show the numbers passed on during the forward traversal, and the purple underlined numbers on the right side show the numbers sent during the backwards traversal for a few of the nodes.

\begin{claim}\label{clm:numberTraversals}
The number each node $v$ receives during the backwards traversal is equal to the number of augmenting paths that go through the node $v$.
\end{claim}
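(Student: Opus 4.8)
The number each node $v$ receives during the backwards traversal equals the number of augmenting paths of length $d$ that go through $v$.

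The plan is to prove this by a double induction tracking the semantics of the numbers passed in both the forward and backward traversals, relying crucially on the structural fact (due to Hopcroft--Karp, and exploited by Lotker et al.) that when $d$ is the shortest augmenting path length, the augmenting paths of length exactly $d$ form a layered structure: every such path uses, at its $k$-th hop, an edge going from BFS-layer $k-1$ to BFS-layer $k$, and no shorter detours exist. First I would set up notation: for a node $u$ at forward-distance $k$ from the set of unmatched $A$-nodes, let $\mathrm{fwd}(u)$ denote the number it holds after the forward traversal, and let $a_k(u)$ be the number of length-$k$ alternating prefixes (starting at an unmatched $A$-node) that terminate at $u$. A straightforward induction on $k$, using that each $A$-node forwards its count along all $B$-edges and each matched $B$-node sums incoming counts and forwards to its mate exactly once, shows $\mathrm{fwd}(u) = a_k(u)$; the red (back-)edges contribute nothing because a message along such an edge would arrive after the receiving node has already passed on its summation, so it is discarded. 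In particular each unmatched $B$-node $b$ ends up holding the number of length-$d$ augmenting paths ending at $b$.

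The heart of the argument is the backward pass. I would prove, by induction on the backward round (equivalently, on the remaining suffix length), the invariant: when a node $u$ at forward-distance $k$ receives a number $x$ during the backward traversal, then $x$ equals $\sum_{\mathcal{P}} 1$ over all length-$d$ augmenting paths $\mathcal{P}$ whose length-$k$ prefix ends at $u$ — i.e., $x = (\text{number of length-}d\text{ augmenting paths through }u\text{ using }u\text{ at hop }k)$. The base case is the unmatched $B$-nodes, where $\mathrm{fwd}(b)$ is exactly the number of length-$d$ augmenting paths ending at $b$. For the inductive step, the key is the proportional-splitting rule: a node holding $x$ (representing the count of length-$d$ paths through it) distributes $x$ to its backward-neighbors $a$ in proportion to the forward counts $\mathrm{fwd}$-contributions it received from each $a$. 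Since the forward count from $a$ is the number of length-$(k-1)$ prefixes ending at $a$ that extend to $u$, the proportional share sent to $a$ is exactly $\left(\text{length-}d\text{ paths through }u\right)\cdot \frac{(\text{prefixes via }a)}{(\text{total prefixes})}$, which by a counting identity (each length-$d$ path through $u$ has a unique prefix, hence a unique predecessor) equals the number of length-$d$ augmenting paths passing through the edge $(a,u)$, and summing over the ways a path can reach $u$ gives the count through $a$ at hop $k-1$. Because the structure is layered and no path of length $d$ revisits a node (alternating path in a bipartite graph of shortest length), a path through $v$ uses $v$ at a unique hop, so when we finally reach $v$ the accumulated received number is the total over all length-$d$ augmenting paths through $v$, with no double counting and no omissions.

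I expect the main obstacle to be making the proportional-splitting bookkeeping fully rigorous — specifically, verifying that the fractional shares are always integers (or handling them as rationals and arguing the final sum is the intended integer) and that the layered structure genuinely forbids a single augmenting path from contributing to a node's count at two different hops or along two different incident edges in the same direction. This requires invoking the Hopcroft--Karp fact, already recalled in the excerpt, that when $d$ is the shortest augmenting-path length the length-$d$ augmenting paths are ``short'' in the BFS sense and hence have no chords or repeated vertices, so the correspondence between paths and (prefix, suffix) factorizations at each node is a genuine bijection. Once that structural fact is in hand, the two inductions are routine, and the claim follows by reading off the number held at each unmatched $A$-node — or more uniformly, the number received by every node $v$ — at the end of the backward traversal.
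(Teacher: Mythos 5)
Your proof is correct and takes essentially the same route as the paper's: a forward induction showing each node's forward count equals the number of alternating half-augmenting prefixes reaching it, followed by a backward induction showing that the proportional splitting delivers the number of length-$d$ augmenting paths through each node. You are somewhat more explicit than the paper about why the proportional shares are well-defined integers---namely, the Hopcroft--Karp BFS-layer structure factors each length-$d$ path uniquely into a prefix and a vertex-disjoint suffix at each node, so the count of paths through a node $u$ whose predecessor is $a$ really is $\tfrac{a_{k-1}(a)}{a_k(u)}$ of the total---a point the paper asserts more tersely as ``the number for each prior node $a\in A$ is exactly proportional to the number of half-augmenting paths that end in $a$.''
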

\begin{proof}We first consider the forward traversal. 

All a directed path a half-augmenting path if it starts at an unmatched $A$ node and alternates between unmatched and matched edges. We prove by induction on time $t\leq d$ that the number that each node $v$ receives in round $t$ is the number of (shortest) half-augmenting paths of length $t$ that end at $v$. The base case $t=0$ is trivial as each unmatched $A$ node starts with number $1$ and other nodes have $0$. In odd rounds $t$, node $v$ receives its first message in this round only if $v\in A$. Then, it receives the message from its matching mate $b\in B$, and by induction, the number that $b$ passed along is exactly the number of half-augmenting paths of length $t-1$ ending at $b$. Since adding the edge $v-b$ to each of these paths extends them by one hop, the number that $v$ receives is also the number of half-augmenting paths of length $t$ ending at $a$. In even rounds $t$, node $v$ receives in this round only if $v\in B$. Then, it received numbers from its $A$-neighbors, each of them indicating the number of half-augmenting paths of length $t-1$ that end at each neighbor $a\in A$. Each of these half-augmenting paths can be extended with the edge $a-b$, thus generating a half-augmenting paths of length $t$ ending at $v$. Hence, the number that $v$ received is again indeed the number of shortest half-augmenting paths of length $t$ ending at $v$. 

At the end, note that for $t=d$, we only look at the numbers received by unmatched $B$-node $b$ in round $d$. By the above inductive argument, the number received by $b$ is indeed the number of half-augmenting paths of length $d$ ending at $b$. But each of these is actually an augmenting path, as it ends in an unmatched $B$-node. Hence, we know that all unmatched $B$-nodes learn the number of augmenting paths of length $d$ ending at them.

We now consider the backwards traversal, and show by an induction on time $t$ in the backwards traversal that the number that each node $v$ receives in round $t$ of the backwards traversal is indeed the number of augmenting paths of length $d$ that go through $v$. The induction base $t=0$ follows from the argument above, as these are unmatched $B$-nodes which are the supposed endpoints of augmenting paths.
For the inductive step, we again have two cases: 

The case of even rounds follows easily because the number of augmenting paths going through a matched $B$-node $b$ is the same as the number of them that afterwards (in following the direction of the path) go through its mate $a\in A$. But this number is known to $a$ in the previous round of the backwards traversal, by induction. Hence, $b$ also learns its number of shortest augmenting paths.

For the case of odd rounds, let us examine an $A$-vertex $a$. During the forward traversal, in the corresponding round, the number that $a$ sent to each of its $B$-neighbors $b$ was the number of half-augmenting paths ending at $a$. Each neighbor $b$ now knows the number of shortest augmenting paths that continue by going through $b$. This might be a collection of paths, where various fractions come from various prior nodes $a\in A$. But the number for each prior node $a\in A$ is exactly proportional to the number of half-augmenting paths that end in $a$. Hence, when $b$ splits its number proportionally among its $a$-neighbors (proportional to the numbers received in forward traversal), each neighbor $a$ will learn the number of shortest augmenting paths that reach $a$, go from $a$ to $b$, and then from $b$ all the way to an unmatched $B$-node. Hence, node $a$ can just sum up all the numbers received from various $B$-neighbors $b$, and know the number of shortest augmenting paths going through $a$.
\end{proof}

Now we explain how a simple change in the previous message forwarding method makes each node know the probability summation of the paths that go through it. During the forward traversal, each unmatched $A$-node---which is a potential starting point of augmenting paths---passes its attenuation to its $B$-neighbors, instead of a passing a fixed number $1$. Then, each matched $B$-node $b$ passes the received summation to its $A$-mate $a$. The matching mate $a$ then attenuates this received summation via multiplying it by $\alpha_{t}(a)$, and then passes it on to its $B$-neighbors. Each unmatched $B$-node just applies its attenuation parameter on the received summation and keeps the summation to itself. The backwards traversal is essentially as before (without reapplying attenuations): each $B$-node splits its number among its $A$-neighbors proportional to the received numbers during the forward traversal. Each matched $A$ just passes the number to its matching mate in $B$.

\begin{claim}
The number that each node $v$ receives during the backwards traversal is equal to the summation of the $p_{t}(\mathcal{P})$ for all augmenting paths $\mathcal{P}$ that go through the node $v$.
\end{claim}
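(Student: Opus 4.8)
\emph{Proof plan.} The idea is to rerun the inductive argument behind \Cref{clm:numberTraversals} essentially verbatim, replacing ``the number of (half-)augmenting paths ending at $v$'' throughout by ``the sum, over (half-)augmenting paths $P$ ending at $v$, of the weight $w_t(P):=\prod_{u\in P}\alpha_{t}(u)$''. Here I would adopt the bookkeeping convention $\alpha_{t}(b)=1$ for every matched $B$-node $b$ (consistent with the algorithm, since such nodes never apply an attenuation anyway), so that $w_t(P)$ is literally the product of the attenuation parameters over \emph{all} vertices of $P$; in particular $w_t(\mathcal{P})=p_{t}(\mathcal{P})$ for a full augmenting path $\mathcal{P}$ of length $d$.

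First I would handle the forward traversal, by induction on the round number $t\le d$: the number held by a node $v$ right after round $t$ equals $\sum_{P} w_t(P)$, summed over half-augmenting paths $P$ of length $t$ ending at $v$. The base case holds because each unmatched $A$-node now starts with its own attenuation instead of $1$ and all other nodes start with $0$. The inductive step is the same two-case split as in \Cref{clm:numberTraversals} (odd rounds: a matched $A$-node inherits the value of its $B$-mate and multiplies by its own attenuation; even rounds: a $B$-node sums the values received from its $A$-neighbors, and an unmatched $B$-node then multiplies by its attenuation). Since every half-augmenting path of length $t$ ending at $v$ is the unique one-hop extension of a half-augmenting path of length $t-1$ ending at $v$'s predecessor, linearity of the per-node multiplication over the sum converts $\sum_{P'} w_t(P')$ at the predecessor into $\alpha_{t}(v)\sum_{P'} w_t(P')=\sum_{P} w_t(P)$ at $v$. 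Taking $t=d$ then shows each unmatched $B$-node ends the forward sweep holding $\sum_{\mathcal{P}} p_{t}(\mathcal{P})$ over the augmenting paths $\mathcal{P}$ of length $d$ ending at it.

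Next I would treat the backwards traversal, again by induction on its round number, mirroring the corresponding half of \Cref{clm:numberTraversals}. The point requiring care is the proportional split at a $B$-node: if $b$ currently holds a value $W$ --- by the induction hypothesis, $W=\sum_{\mathcal{P}} p_{t}(\mathcal{P})$ over the augmenting paths known so far to continue through $b$ --- it hands each $A$-neighbor $a$ the share $W\cdot f_{b,a}/\sum_{a'} f_{b,a'}$, where $f_{b,a}$ is the value $a$ sent to $b$ in the forward sweep, i.e.\ (by the forward claim) the total $w_t$-weight of the half-augmenting prefixes that reach $a$ and can be extended by the edge $a\text{--}b$. Because every augmenting path through $b$ factors uniquely as such a prefix (ending at some $a$) followed by the common suffix continuing from $b$ onward, and the common normalizing factor $W/\sum_{a'} f_{b,a'}$ is exactly the total weight of those suffixes, the share delivered to $a$ equals the sum of $p_{t}(\mathcal{P})$ over augmenting paths through the edge $a\text{--}b$ and then through $b$ onward. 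Matched $A$-nodes just relay to their $B$-mates, and an unmatched $A$-node accumulates over all its $B$-neighbors the total $p_{t}$-weight of augmenting paths starting at it; so after $d$ rounds every node $v$ holds $\sum_{\mathcal{P}\ni v} p_{t}(\mathcal{P})$, which is the claim.

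I expect the only real obstacle to be bookkeeping: ensuring that each vertex's attenuation is charged exactly once along each path as the path is ``split'' at $B$-nodes and re-merged at matching mates during the forward sweep, and that the proportional split in the backwards sweep is well defined --- the latter is automatic, since if all the $f_{b,a}$ vanish then there is no augmenting path through $b$ and the value $W$ being split is $0$ too. Once the right weighted quantities are identified, the argument is line-for-line that of \Cref{clm:numberTraversals} with sums of path-weights in place of path-counts.
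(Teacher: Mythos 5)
Your proof is correct and follows exactly the route the paper has in mind: the paper's own proof is a one-sentence sketch saying to repeat the inductive argument of Claim~\ref{clm:numberTraversals} while tracking the multiplication of attenuations along paths, which is precisely what you carry out. Your version is in fact more detailed than the paper's, in particular spelling out why the proportional split at $B$-nodes correctly routes the weighted sums — namely that every augmenting path through $b$ factors uniquely into a prefix ending at some $a$ and a suffix from $b$ onward, so the common normalizing factor distributes the suffix weight to each $f_{b,a}$.
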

\begin{proof}[Proof Sketch] The proof follows by repeating the argument of \Cref{clm:numberTraversals}, and noting that along the path, the numbers are multiplied by the respective attenuations.
\end{proof}

\paragraph{Adjusting Attenuation Parameters Over Time.} Initially, we set $\alpha_{0}(v)=1/K$ for each unmatched $A$-node, which is the starting node of the potential augmenting paths, and $\alpha_{0}(v)=1$ for each matched $A$-node or unmatched $B$-node, which are the potential middle or end nodes of the augmenting paths. The updates are as follows: For each node $v$, if $\sum_{v, v\in \mathcal{P}} p_{t}(\mathcal{P}) \geq \frac{1}{10d}$, then set $\alpha_{t+1}(v) = \max\{\alpha_{t}(v) \cdot (\frac{1}{K})^{2d}, \Delta^{-20/\eps}\}$, and otherwise, $\alpha_{t+1}(v) = \min\{\alpha_{0}(v), \alpha_{t}(v) \cdot K\}$.

\paragraph{Remark About the Attenuation Lower Bound and the Floating-Point Precisions.} Notice that we have set a lower bound of $\Delta^{-20/\eps}$ on the attenuations. We say a node $v$ is \emph{stuck to the bottom} if $\sum_{v, v\in \mathcal{P}} p_{t}(\mathcal{P}) \geq \frac{1}{10d}$ which means $v$ wishes to lower its attenuation, but we already have $\alpha_{t}(v)=\Delta^{-20/\eps}$ and thus it cannot go further down. As we will see, this limitation has no significant effect on the behavior of the algorithm, because over all the at most $\Delta^{d} \ll \Delta^{-20/\eps}$ shortest augmenting paths going through a node, those that are stuck to the bottom make only a very negligible fraction of the probability. However, this lower bound allows us to keep the message size necessary for passing around attenuation values small. Particularly, with this lower bound, each attenuation can fit $O(\log \Delta/\eps)$ bits of precision. Thus, even when multiplied over paths of length $O(1/\eps)$, and summed up over at most $\Delta^{O(1/\eps)}$ paths, we only need $O(\log \Delta/\eps^2)$ bits of precision to represent the resulting number. This certainly fits $O(1/\eps^2)$ messages of the standard $\mathsf{CONGEST}$ model. Hence, by grouping each $\Theta(1/\eps^2)$ consequent rounds of the $\mathsf{CONGEST}$ model and treating them as one round, we have enough space for the desired precision.

\begin{definition} Call a node $v$ heavy in a given iteration $t$ if $\sum_{v, v\in \mathcal{P}} p_{t}(\mathcal{P}) \geq \frac{1}{10d}$. Moreover, call an augmenting path $\mathcal{P}$ heavy in that iteration if it goes through at least one heavy node.
\end{definition}

\begin{claim}\label{clm:ShrinkageGrowthBounds}  Each heavy augmenting path of length $d$ will have its $p_{t}(\mathcal{P})$ decrease by a factor in $[(\frac{1}{K})^{2d^2}, (\frac{1}{K})^{d}]$, unless at least one of its heavy nodes is stuck to the bottom, in which case the path's probability will remain at most $\Delta^{-20/\eps}$. Each non-heavy augmenting path of length $d$ will have its $p_{t}(\mathcal{P})$ increase by a factor in $[K, K^{d}]$, unless $p_{t}(\mathcal{P})=1/K$ in which case it will remain there and we have $p_{t+1}(\mathcal{P})=1/K$.
\end{claim}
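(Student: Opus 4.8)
The plan is to track, multiplicatively, how the product $p_{t}(\mathcal{P})=\prod_{v\in\mathcal{P}}\alpha_{t}(v)$ changes in one round, exploiting the fact that it is a product of very few nontrivial factors. First I would record the structure of a length-$d$ augmenting path in this bipartite setting: its $d+1$ nodes are the unmatched starting $A$-node, some matched $A$-nodes, the matched $B$-nodes in between, and the unmatched ending $B$-node, but only the $A$-nodes and the unmatched $B$-endpoint carry an attenuation that can differ from $1$ (matched $B$-nodes always have $\alpha_{t}(\cdot)=1$). So $p_{t}(\mathcal{P})$ is a product of at most $d$ factors (about $d/2$ of them), exactly one of which — the start — begins at $1/K$ and the rest at $1$. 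In particular $p_{t}(\mathcal{P})\le 1/K$ for all $t$, with equality precisely when every nontrivially-attenuated node of $\mathcal{P}$ sits at its cap $\alpha_{0}(\cdot)$.

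Next I would establish a short invariant by induction on $t$, straight from the two update rules: every attenuation $\alpha_{t}(v)$ equals $\alpha_{0}(v)$ times a nonpositive integer power of $K$, unless it has previously been pushed to the floor $\Delta^{-20/\eps}$, after which it equals that floor value times a nonnegative integer power of $K$ (taking, harmlessly, $\Delta^{-20/\eps}$ to be a power of $K$). The consequence I need is that a light node (one with $\sum_{\mathcal{P}\ni v}p_{t}(\mathcal{P})<1/(10d)$) multiplies its attenuation by \emph{exactly} $K$ if it is strictly below its cap and by exactly $1$ if it is at its cap, while a heavy node multiplies by exactly $K^{-2d}$ unless this undershoots $\Delta^{-20/\eps}$, in which case its attenuation is set to $\Delta^{-20/\eps}$.

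The non-heavy case is then immediate: all active nodes of $\mathcal{P}$ are light, so $p_{t+1}(\mathcal{P})=K^{m}\,p_{t}(\mathcal{P})$ with $m\ge 0$ the number of active nodes of $\mathcal{P}$ below their caps. If $m=0$, every active node is at its cap, hence $p_{t}(\mathcal{P})=1/K$ and it stays there — the stated exception; if $m\ge 1$ one checks $p_{t}(\mathcal{P})<1/K$ and the increase factor $K^{m}$ lies in $[K,K^{d}]$ since $1\le m\le d$. For the heavy case, $\mathcal{P}$ passes through some heavy node; if that node is a matched $B$-node, its matching-mate $A$-node lies on the same path and has exactly the same set of through-paths, hence is also heavy, so I may assume $\mathcal{P}$ contains $a\ge 1$ heavy active nodes. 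If any of them is at or gets pushed to the floor this round, then $p_{t+1}(\mathcal{P})\le\Delta^{-20/\eps}$ because one factor is $\Delta^{-20/\eps}$ and all others are $\le 1$ (and if it was already stuck, the same bound held at time $t$, which justifies ``remain''). Otherwise each of the $a$ heavy active nodes contributes a factor $K^{-2d}$, each of the $\ell\ge 0$ below-cap light active nodes contributes $K$, and the remaining active nodes contribute $1$, so $p_{t+1}(\mathcal{P})/p_{t}(\mathcal{P})=K^{\ell-2da}$; using $a\ge 1$, $\ell\ge 0$ and $a+\ell\le d$, the exponent lies between $-2d^{2}$ and $-d$, i.e.\ the factor is in $[(1/K)^{2d^{2}},(1/K)^{d}]$.

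I expect the main obstacle to be the invariant together with its floor clause: without knowing that attenuations move in exact powers of $K$, a light update could be a $K$-factor or anything smaller, and one would only obtain ``increases by a factor in $(1,K^{d}]$'' rather than the claimed lower bound $K$ — and the floor $\Delta^{-20/\eps}$ must be set up so that ``stuck'' nodes behave consistently with this. The remaining work is routine but must be spelled out: the count of nontrivially-attenuated nodes on a length-$d$ path (about $d/2$, in any case at most $d$ for the relevant lengths $d\ge 3$; the degenerate $d=1$ case is an unmatched edge and is checked directly), and the matching-mate observation that reduces ``heavy node on $\mathcal{P}$'' to ``heavy active node on $\mathcal{P}$''.
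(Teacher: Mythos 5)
Your proof is correct and follows essentially the same node-by-node attenuation-tracking argument as the paper's. You add useful rigor the paper glosses over: the explicit powers-of-$K$ invariant behind the ``multiplies by exactly $K$'' step, and the matching-mate observation that converts a heavy matched $B$-node (whose attenuation is frozen at $1$) into a heavy $A$-node on the path whose attenuation actually shrinks.
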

\begin{proof} If a path is heavy, at least one of its nodes is heavy and then that node multiplies its own attenuation parameter by $(\frac{1}{K})^{2d}$, unless its attenuation is stuck to the bottom, in which case it remains there at $\Delta^{-20/\eps}$. The other $d-1$ nodes might also shrink their attenuation factor similarly or they might raise it by a $K$ factor, up to at most their original setting. In the lowest extreme, the overall multiplication of attenuation factors goes down by a  $(\frac{1}{K})^{2d^2}$ factor. In the highest extreme, $d-1$ of them raise their factor by a $K$ which means that we still have a shrinkage of $(\frac{1}{K})^{2d} \cdot K^{d-1} < (\frac{1}{K})^{d}$, if none of the heavy nodes is stuck to the bottom. If there is at least one such stuck heavy node, it will keep its attenuation at $\Delta^{-20/\eps}$, which means the overall marking probability of the path is at most $\Delta^{-20/\eps}$.

Now suppose that the path is not heavy. On the highest extreme, all $d$ nodes raise their attenuation parameters by a $K$ factor, which would mean an overall increase of a $K^{d}$ factor. On the lowest extreme, either the path already has all its attenuation parameters set as in the beginning, hence has $p_{t}(\mathcal{P})=1/K$, or at least one of its nodes raises its attenuation by a $K$ factor. Since the path is not-heavy, by definition, there is no heavy node on it and thus none of the nodes will reduce its attenuation parameter. Therefore, in this case, the marking probability raises by a $K$ factor, unless it is already at $1/K$, in which case it stays the same.
\end{proof}

\paragraph{The Algorithm for Marking Augmenting Paths.} We now explain how to use the marking probabilities $p_{t}(\mathcal{P})$, which are maintained implicitly by means of attenutation parameters, to mark augmenting paths and find a large independent set of them. Each free $B$-node $b$ has a summation $z = \sum_{\mathcal{P}, b\in \mathcal{P}} p_{t}(\mathcal{P})$ of all augmenting paths $\mathcal{P}$ that end at $b$. If the summation $z$ is greater than $1/d$, which means node $b$ is heavy, node $b$ will not initiate any path marking. Otherwise, node $b$ tosses a coin which comes out head with probability $z$, and if head, then it initiates an augmenting path marking operation. In this case, node $b$ passes the path marking token to one of its matched $A$-neighbors, as in the backwards traversal, chosen with a probability proportional to the sums that $b$ received from those neighbors during the forward traversal. If two marking tokens arrive at the same node, they both die. If a token is the only one that has arrived at a matched node $a'$, it will be passed to the matching mate $b'$ of $a'$. Then, $b'$ passes this token to one of its $A$-neighbors, again chosen with probabilities proportional to the sums that $b'$ received from those neighbors during the forward traversal. After continuing this process for $d$ iterations, some tokens might arrive at unmatched $A$-nodes. These are the marked paths that do not have any marked intersecting paths, they will be added to the independent set of short augmenting paths.

Note that instead of this stochastic link-by-link sampling of the marked path, one could have imagined a one-short process of sampling the marked path. In that process, the endpoint had sampled this full path at the beginning (if it knew the whole topology), we just pass the token along the path, and then the path is maintained if no intersecting path was marked. Indeed, one can easily see that the former process can only generate a larger set of isolated marked paths.

The tokens that make it through all the way and reach to an unmatched $A$ node are successful tokens. To announce that to all nodes along the path, these tokens will reverse their direction of traversal and go back all the way to their origin in the unmatched $B$-node. This can be done easily by each node remembering where from it received each of the tokens. While doing that, the token deactivates all nodes of the path, removing them from the problem. This also effectively augments $M$ with this path, by erasing the matching edges of the path, and substituting them with the unmatched edges.

\begin{claim}\label{clm:pathMarking} Each non-heavy path $\mathcal{P}$ gets marked and removed with probability at least $9p_{t}(\mathcal{P})/10$.
\end{claim}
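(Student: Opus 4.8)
The plan is to reduce to the idealized ``one-shot'' marking process mentioned in the remark just before the claim, and then run a union bound over the nodes of $\mathcal{P}$. Recall that in the one-shot process each unmatched $B$-node $b$ tosses a coin that comes up heads with probability $z_b=\sum_{\mathcal{Q}:b\in\mathcal{Q}}p_{t}(\mathcal{Q})$, and conditioned on heads it marks exactly one augmenting path $\mathcal{Q}$ ending at $b$, chosen with probability $p_{t}(\mathcal{Q})/z_b$ (and marks none otherwise), with distinct endpoints tossing independently. Since $\mathcal{P}$ is non-heavy, all of its nodes are non-heavy; in particular its unmatched $B$-endpoint $b=v_d$ satisfies $z_b<\tfrac{1}{10d}<1$, so it lies below the marking-initiation threshold and does toss its coin. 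Hence $\Pr[\mathcal{P}\text{ is marked}]=z_b\cdot\frac{p_{t}(\mathcal{P})}{z_b}=p_{t}(\mathcal{P})$. By the cited remark, the actual link-by-link token process only enlarges the set of isolated marked paths, so it suffices to prove the bound in the one-shot model, where a path is ``removed'' exactly when it is marked and no intersecting path is marked.

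Next I would bound $\Pr[\mathcal{P}\text{ marked and removed}]\ \geq\ p_{t}(\mathcal{P})-\sum_{\mathcal{Q}\neq\mathcal{P},\,\mathcal{Q}\cap\mathcal{P}\neq\emptyset}\Pr[\mathcal{P}\text{ and }\mathcal{Q}\text{ both marked}]$. If $\mathcal{Q}$ ends at the same node $b$ as $\mathcal{P}$, then both cannot be marked since an endpoint marks at most one path, so that term vanishes; if $\mathcal{Q}$ ends at a different unmatched $B$-node, the two markings are independent and the term is at most $p_{t}(\mathcal{P})p_{t}(\mathcal{Q})$. The crucial structural point is that $b=v_d$ is the only unmatched $B$-node of $\mathcal{P}$, and since the only unmatched $B$-node of any augmenting path is its endpoint, a path $\mathcal{Q}$ with endpoint $\neq b$ cannot contain $v_d$; hence if such a $\mathcal{Q}$ meets $\mathcal{P}$ it must contain one of the $d$ nodes $v_0,\dots,v_{d-1}$. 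Therefore
\[
\sum_{\substack{\mathcal{Q}\neq\mathcal{P}\\ \mathcal{Q}\cap\mathcal{P}\neq\emptyset}}\Pr[\mathcal{P},\mathcal{Q}\text{ both marked}]\ \leq\ p_{t}(\mathcal{P})\sum_{\substack{\mathcal{Q}\text{ ends}\neq b\\ \mathcal{Q}\cap\mathcal{P}\neq\emptyset}}p_{t}(\mathcal{Q})\ \leq\ p_{t}(\mathcal{P})\sum_{i=0}^{d-1}\ \sum_{\mathcal{Q}\ni v_i}p_{t}(\mathcal{Q})\ <\ p_{t}(\mathcal{P})\cdot d\cdot\tfrac{1}{10d}\ =\ \tfrac{p_{t}(\mathcal{P})}{10},
\]
where the last inequality is exactly the statement that each $v_i$ with $i\leq d-1$ is non-heavy. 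Combining, $\Pr[\mathcal{P}\text{ marked and removed}]\geq p_{t}(\mathcal{P})(1-\tfrac{1}{10})=\tfrac{9}{10}p_{t}(\mathcal{P})$, which is the claim.

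The step I expect to be the main obstacle is making precise the reduction to the one-shot model, i.e.\ justifying that the link-by-link process yields at least as large a set of isolated marked paths. A clean way to handle this without invoking the one-shot model at all is a direct telescoping argument: using that the forward-traversal value $v_d$ attributes to a neighbor $v_{d-1}$ equals $\alpha_{t}(v_{d-1})$ times the value that $v_{d-2}$ forwarded, which is the total value accumulated at $v_{d-2}$, and iterating this through the matched $A$-nodes of $\mathcal{P}$, the probability that the token initiated at $v_d$ makes exactly the sequence of choices tracing $\mathcal{P}$ collapses to $p_{t}(\mathcal{P})$; a collision can then kill this token only at one of $v_0,\dots,v_{d-1}$ and only because a length-$d$ augmenting path intersecting $\mathcal{P}$ (necessarily with a different endpoint) is also being traced there, so the identical union bound applies. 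A minor technicality to state explicitly is the gap between the heavy-node threshold $\tfrac{1}{10d}$ and the marking-initiation threshold $\tfrac{1}{d}$ in the algorithm description; for a non-heavy path this is harmless since $\tfrac{1}{10d}<\tfrac{1}{d}$.
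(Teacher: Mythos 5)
Your proof is correct and takes essentially the same route as the paper: reduce to the one-shot marking model, note that $\Pr[\mathcal{P}\text{ marked}]=p_t(\mathcal{P})$ via the non-heavy endpoint, and union-bound over the nodes of $\mathcal{P}$ using their non-heaviness to lose only a $1/10$ fraction. Your version is slightly more careful than the paper's — you separate out the (vanishing) same-endpoint case and observe that $v_d$ cannot lie on any conflicting path, which justifies summing over exactly $d$ rather than $d+1$ nodes — but this is a polish of the same argument, not a different one.
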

\begin{proof} As stated before, we can think of the link by link creation of the marked path as a one-shot sampling by the path endpoint $b$ which is an unmatched $B$-node. Then the marking gets erased if there is another marked path intersecting it. For the non-heavy path $\mathcal{P}$, the probability that it gets an initial marking in this sense is exactly $p_{t}(\mathcal{P})$, as its unmatched endpoint $b$ will toss a coin with probability equal to the summation of all paths that end at $b$, because it is not heavy, and then pass this token backwards for a link by link creation. We next argue that conditioned on the initial marking, the path has a decent probability of not having any other intersecting path get marked. Since the probability of each other path $\mathcal{P}'\neq \mathcal{P}$ getting an initial marking is at most $p_{t}(\mathcal{P}')$, for each node $v$, the probability that $v$ receives a marking token through any of the paths $\mathcal{P}'\neq \mathcal{P}$ is at most $\sum_{v, v\in \mathcal{P}'} p_{t}(\mathcal{P}') \leq \frac{1}{10d}$, simply by a union bound. Thus, a union bound over all the at most $d$ nodes of the path $\mathcal{P}$ shows that with probability at least $1-\frac{d}{10d}$, none of them has any other marked intersecting path and thus, the $\mathcal{P}$ retains its mark and gets removed.
\end{proof}

Similar to \Cref{sec:1epsMM}, we call a node $v$ \emph{good} in a marking iteration $t$ if the summation of the probabilities of the light paths going through node $v$ is large. Here, the particular definition will require the summation to be at least $1/(dK^{2d})$. If a node $v$ goes through $\Theta(dK^{2d} \log 1/\delta)$ good iterations without being removed, we manually remove $v$ from the problem, knowing that the probability of this event is at most $\delta$.

\begin{lemma} The probability that a node $v$ does not get removed during $\Theta(dK^{2d} \log 1/\delta)$ good iterations is at most $\delta$.
\end{lemma}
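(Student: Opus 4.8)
The plan is to mimic the per-round removal argument of \Cref{lem:removalPerRound} (and of the proof of \Cref{thm:local-restate}) in this on-the-fly bipartite setting: show that in \emph{each} good iteration node $v$ is removed with probability at least $\Theta(1/(dK^{2d}))$, conditionally on the history so far, and then conclude by (conditional) independence of the marking coins across iterations.

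First I would fix an iteration $t$ that is good for $v$ and condition on the entire history up to the start of the marking step of iteration $t$ --- in particular on all the attenuation parameters $\alpha_t(\cdot)$, hence on every marking probability $p_t(\mathcal{P})$, and on the fact that $t$ is good for $v$. By the definition of a good iteration, $\sum_{\mathcal{P}\ni v,\ \mathcal{P}\text{ light}} p_t(\mathcal{P}) \ge 1/(dK^{2d})$, where a light path is one through no heavy node. By \Cref{clm:pathMarking}, each such light path $\mathcal{P}$ is marked and removed --- meaning it gets an initial mark, survives all intersecting marked paths, and is therefore used to augment $M$, deactivating all its nodes, including $v$ --- with conditional probability at least $9p_t(\mathcal{P})/10$. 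The key observation is that, for two distinct light paths $\mathcal{P},\mathcal{P}'$ through $v$, the events ``$\mathcal{P}$ is marked and survives'' and ``$\mathcal{P}'$ is marked and survives'' are mutually exclusive: both paths contain $v$, so a surviving marked $\mathcal{P}$ rules out any other marked path meeting $v$. These events are thus disjoint sub-events of ``$v$ is removed in iteration $t$'', and summing over them gives
\[
\Pr[\,v\text{ removed in iteration }t \mid \text{history}\,]\ \ge\ \sum_{\mathcal{P}\ni v,\ \mathcal{P}\text{ light}}\tfrac{9}{10}\,p_t(\mathcal{P})\ \ge\ \tfrac{9}{10\,dK^{2d}}\ =\ \Theta\!\big(\tfrac{1}{dK^{2d}}\big).
\]
(If one prefers, one can instead replace the disjointness argument by examining the light paths through $v$ one by one until the first initially-marked one, exactly as low-degree neighbors are examined in \Cref{lem:removalPerRound}, using $1-e^{-x}\ge x/2$; this yields the same bound up to constants.)

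Then I would iterate this over all good iterations. The marking coin tosses in iteration $t$ are fresh and independent of everything preceding, so the displayed bound holds conditionally on the full history at the start of each good iteration --- even though the set of good iterations is itself random, being determined by the trajectory of the attenuation parameters. Viewing the sequence of good iterations (up to the point where $v$ is removed) as a sequence of trials, each succeeding with conditional probability at least $q=\Theta(1/(dK^{2d}))$, multiplying the conditional failure probabilities (or a one-line supermartingale argument) shows that $v$ survives its first $c\,dK^{2d}\log(1/\delta)$ good iterations with probability at most $(1-q)^{c\,dK^{2d}\log(1/\delta)}\le e^{-c\,q\,dK^{2d}\log(1/\delta)}\le \delta$, once the constant $c$ in $\Theta(dK^{2d}\log 1/\delta)$ is chosen large enough relative to the constant hidden in $q$. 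This is precisely the claim.

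The main obstacle I anticipate is not any one calculation but getting the \emph{conditioning} right: since ``good iteration for $v$'' depends on past randomness (and presupposes $v$ is still active), one must apply the single-iteration removal bound conditionally on the history at the start of each good iteration rather than treating the good iterations as fixed in advance; the supermartingale/telescoping step must be phrased accordingly. A secondary point to state carefully is that a light path $\mathcal{P}$ through $v$ that is ``marked and removed'' in the sense of \Cref{clm:pathMarking} does indeed deactivate $v$, and that at most one such path can do so --- which is what makes the per-path removal events disjoint. The remaining estimates are routine. Note also that light paths, having no heavy node, are never ``stuck to the bottom'', so the attenuation lower bound and floating-point truncation play no role in this lemma.
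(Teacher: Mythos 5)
Your proof is correct and follows essentially the same path as the paper's: fix a good iteration, invoke \Cref{clm:pathMarking} for each light path through $v$, observe that the corresponding removal events are pairwise disjoint because all those paths share the node $v$, sum to get a per-iteration removal probability of at least $9/(10dK^{2d})$, and compound over the $\Theta(dK^{2d}\log 1/\delta)$ good iterations. Your extra care about conditioning on the history (since which iterations are good is itself random) is a welcome refinement that the paper leaves implicit, but it is not a different approach.
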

\begin{proof}
Consider one iteration $t$ that is good for node $v$. We claim that in this iteration, node $v$ gets removed from the problem with probability at least $9/(10dK^{2d})$. This is because, as \Cref{clm:pathMarking} shows, for each light path $\mathcal{P}$ that goes through $v$, there is a $9p_{t}(\mathcal{P})/10$ probability that this path is marked for removal while no other path intersecting it is marked. Since these events are disjoint for different light paths $\mathcal{P}$ going through $v$, we can say that the probability of $v$ being removed because one of those light paths was removed is at least $9/(10dK^{2d})$. Now, the probability that this does not happen in the course of $\Theta(dK^{2d} \log 1/\delta)$ good iterations is at most $\delta$.
\end{proof}

Gathering probability sums of light paths and thus identifying good nodes can be done easily by a repetition of the previous forward and backwards probability passing processes, but this time just letting them pass only through light nodes.

\begin{lemma} After $T=O(\frac{d^4\log \Delta}{\log\log \Delta})$ iterations, there is no augmenting path of length $d$ remaining.
\end{lemma}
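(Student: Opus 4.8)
The plan is to prove this as a purely deterministic statement, following the template of the proof of \Cref{lem:MaximalityAmongActives}, but carrying the bookkeeping through the distributed attenuation parameters rather than per-path probabilities. Fix an augmenting path $\mathcal{P}$ of length $d$ (so $\mathcal{P}$ has $d+1$ nodes), set $K=\log^{\Theta(1/d)}\Delta$ and $\delta=\Theta(\eps^2)$ so that $\log(1/\delta)=O(1)$, and suppose toward a contradiction that after $T$ iterations $\mathcal{P}$ has not been removed --- that is, none of its $d+1$ nodes has been deactivated (by a marked intersecting path, by $\mathcal{P}$ itself getting marked, or by the manual-removal rule after too many good iterations).

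First I would bound, for each node $v\in\mathcal{P}$, the number of iterations in which $v$ is heavy. Consider the potential $S_v(t)=\sum_{\mathcal{P}'\ni v}p_t(\mathcal{P}')$, summed over the length-$d$ augmenting paths still present. In any iteration heavy for $v$, every path through $v$ passes through the heavy node $v$ and is hence a heavy path, so by \Cref{clm:ShrinkageGrowthBounds} its probability shrinks by a factor at least $K^{d}$; the paths routed through some other heavy node that is stuck to the bottom add at most $\Delta^{d}\cdot\Delta^{-20/\eps}\ll\frac{1}{10d}$, and $v$ itself is never stuck when heavy (else $S_v(t)\le\Delta^{d}\cdot\Delta^{-20/\eps}\ll\frac{1}{10d}$, contradicting heaviness). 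Thus $S_v$ shrinks by at least a $\frac12 K^{d}$ factor per heavy iteration, while $S_v(0)\le\Delta^{d}/K$, $S_v(t)\ge\frac{1}{10d}$ in every heavy iteration, each $p_t(\mathcal{P}')\le 1/K$ always, and $S_v$ can grow by at most a $K^{O(d)}$ factor per iteration. Balancing this geometric shrinkage against the at most $\Theta(dK^{2d}\log(1/\delta))$ iterations good for $v$ --- whose number is capped precisely because, by assumption, $v$ is never manually removed --- exactly as in \Cref{lem:MaximalityAmongActives}, I would conclude that $v$ is heavy in at most $h_0=\Theta(dK^{2d}\log(1/\delta))+O(d\log_K\Delta)$ iterations.

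Next, summing over the $d+1$ nodes of $\mathcal{P}$: an iteration is heavy for $\mathcal{P}$ only if one of its nodes is heavy, so $\mathcal{P}$ is a heavy path in at most $H_{\mathcal{P}}=(d+1)h_0$ iterations. By \Cref{clm:ShrinkageGrowthBounds}, $p_t(\mathcal{P})$ starts at $1/K$, is capped above by $1/K$ (since $\alpha_t(u)\le\alpha_0(u)$ for all $u$), drops by a factor in $[K^{d},K^{2d^2}]$ only in iterations heavy for $\mathcal{P}$, and otherwise rises by a factor in $[K,K^{d}]$ unless already at $1/K$; a standard potential argument (each factor-$K^{2d^2}$ drop is undone by at most $2d^2$ factor-$K$ rises, and $p_t(\mathcal{P})$ starts at the cap) then gives that, apart from $O(d^2 H_{\mathcal{P}})$ iterations, $p_t(\mathcal{P})=1/K$ and $\mathcal{P}$ is light. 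Hence there are at least $T-O(d^2 H_{\mathcal{P}})$ iterations in which $p_t(\mathcal{P})=1/K$ and $\mathcal{P}$ is light. In every such iteration, for each $v\in\mathcal{P}$ the summation of probabilities of light paths through $v$ is at least $p_t(\mathcal{P})=1/K\ge\frac{1}{dK^{2d}}$, so the iteration is good for $v$. Therefore each node of $\mathcal{P}$ accumulates at least $T-O(d^2 H_{\mathcal{P}})$ good iterations, but the manual-removal rule would have deactivated it after $\Theta(dK^{2d}\log(1/\delta))$ good iterations; hence $T-O(d^2 H_{\mathcal{P}})\le\Theta(dK^{2d}\log(1/\delta))$, i.e. $T=O\big(d^2 H_{\mathcal{P}}+dK^{2d}\log(1/\delta)\big)=O\big(d^4 K^{2d}\log(1/\delta)+d^4\log_K\Delta\big)$. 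With $K=\log^{\Theta(1/d)}\Delta$ and $\delta=\Theta(\eps^2)$ this is $O(d^4\log\Delta/\log\log\Delta)$, so choosing the hidden constant in $T$ large enough contradicts the assumption and proves the lemma.

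The main obstacle is the step that has no counterpart in the single-node analysis of \Cref{lem:goldCount}: here $p_t(\mathcal{P})$ is the product of the attenuations of the $d+1$ nodes of $\mathcal{P}$, and within one iteration some of these nodes may raise their attenuation while others lower theirs. \Cref{clm:ShrinkageGrowthBounds} is exactly the device that tames this --- it guarantees a definite net multiplicative shrinkage for every heavy path and a definite net growth for every light path, at the cost of only $\mathrm{poly}(d)$ slack in the exponents --- and it is this slack, together with the $\Theta(dK^{2d}\log(1/\delta))$ good-iteration cap (larger than its single-node analogue by a $K^{\Theta(d)}$ factor), that inflates the round count from $O(\log\Delta/\log\log\Delta)$ to $O(d^4\log\Delta/\log\log\Delta)$. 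One also has to check the two boundary effects: that the attenuation floor $\Delta^{-20/\eps}$ (the ``stuck to the bottom'' case) only ever contributes a $\Delta^{-\Omega(1/\eps)}$ additive error, negligible against the $\frac{1}{10d}$ and $\frac{1}{dK^{2d}}$ thresholds, and that removals of other paths over the course of the algorithm only help, since they can only decrease the potentials $S_v$, so all the upper bounds above survive the dynamic nature of the instance.
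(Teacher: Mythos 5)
Your proof is correct and follows essentially the same path as the paper's: bound the heavy iterations of each node via the shrinkage of $S_v(t)=\sum_{\mathcal{P}'\ni v}p_t(\mathcal{P}')$ using the good-iteration cap, sum over the $O(d)$ nodes to bound the heavy iterations of $\mathcal{P}$, then invoke \Cref{clm:ShrinkageGrowthBounds} and the standard cancellation argument to conclude that $p_t(\mathcal{P})=1/K$ and $\mathcal{P}$ is light in $T-O(d^2H_{\mathcal{P}})$ iterations, which are good for every node of $\mathcal{P}$, contradicting the good-iteration cap. Your observation that $v$ heavy forces every path through $v$ to be heavy (so $S_v$ shrinks by $K^d$ directly, without the paper's light-fraction bookkeeping) is a small cleanup of the same step; the only substantive nit is the extra $d$ you carry into the $O(d\log_K\Delta)$ term of $h_0$ --- since each heavy iteration already shrinks $S_v$ by a $K^d$ factor and $S_v(0)\le\Delta^d/K$, the correct budget is $O(\log_K\Delta)$, as the paper has --- which is what inflates your $T$ to $O(d^5\log\Delta/\log\log\Delta)$ rather than the claimed $O(d^4\log\Delta/\log\log\Delta)$; this is immaterial for $d=O(1/\eps)$ constant but worth fixing to match the lemma exactly.
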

\begin{proof} We consider an arbitrary augmenting path $\mathcal{P}=(v_1, v_2, \dots, v_d)$ and prove that it cannot be the case that all of its nodes remain active for $T=\Theta(d^4K^{2d}\log 1/\delta + d^3\log_{K}{\Delta})$ iterations. Setting $K=\log^{1/(3d)} \Delta$, this is $O(d^4 \frac{\log \Delta}{\log\log \Delta})$ iterations for any $\delta\geq 2^{-\log^{0.3} \Delta}$. We assume that path $\mathcal{P}$ is not removed in the first $T=O(d^4 \frac{\log \Delta}{\log\log \Delta})$ iterations and we show that this leads to a contradiction, assuming a large enough constant in the asymptotic notation definition of $T$.  We emphasize that this is a deterministic guarantee, and it holds for every such augmenting path.

For a node $v \in \mathcal{P}$, if iteration $t$ is heavy but not good, then at most $1/(dK^{2d})$ weight in the summation can come from light paths going through $v$. That is at most a $\frac{1/(dK^{2d})}{1/d} =1/(K^{2d})$ fraction of the summation, as a heavy node has summation at least $1/d$.  Hence, ignoring the paths that their probability is stuck to the bottom, in every heavy but not-good iteration, the summation shrinks by a factor of $1/(K^{2d}) \cdot K^d + (1-1/(K^2)) \cdot (1/K)^d \leq 2/(K^d)$. At most $\Delta^{2/\eps} \cdot \Delta^{-20/\eps}$ weight can come from paths that their probability is stuck to the bottom bound of $\Delta^{-20/\eps}$. This is a total weight of $\Delta^{-18/\eps}$, which for a heavy node is at most $\frac{\Delta^{-18/\eps}}{1/d} \ll 1/K^d$ fraction of the overall weight.  Hence, even taking these paths into account, we see a shrinkage by a $3/K^d$ factor.

In each heavy and good iteration, the summation grows by at most a $K^d$ factor, which is in effect like canceling no more than $2$ of the shrinkage iterations. The number of good iterations for $v$ is capped to at most $\Theta(dK^{2d}\log 1/\delta)$, after which $v$ gets deactivated. Therefore, since $\sum_{e, v\in e} p_{t}(e)$ starts with a value of at most $\Delta^d/K$, and shrinks by a $3/K^{d}$ factor in every non-canceled heavy but not-good iteration, node $v$ can have at most $h=\Theta(dK^{2d}\log 1/\delta) + 3\log_{K}{\Delta}$ heavy iterations. Now this implies that the augmenting path $\mathcal{P}=(v_1, v_2, \dots, v_d)$ cannot have more than $h=\Theta(d^2K^{2d}\log 1/\delta) + 4d\log_{K}{\Delta}$ iterations in which it is heavy. This is because, in every heavy iteration for $\mathcal{P}$, at least one of its $d$ nodes must be heavy, and each of them has at most $\Theta(dK^{2d}\log 1/\delta) + 4\log_{K}{\Delta}$ heavy iterations.

As \Cref{clm:ShrinkageGrowthBounds} shows, during each iteration that path $\mathcal{P}$ is heavy, its probability shrinks by a $(\frac{1}{K})^{2d^2}$ factor, at worst (i.e., on the lower extreme). Each other iteration raises $p_{t}(\mathcal{P})$ by at least a $K$ factor, unless $p_{t}(e)$ is already equal to $1/K$. Since every $2d^2$ many of $K$-factor raises cancel one $(\frac{1}{K})^{2d^2}$-factor shrinkage, and as $p_{t}(e)$ starts at $1/K$, with the exception of $(2d^2+1)h$ iterations, all remaining iterations have $p_{t}(e)=1/K$. Among these, at most $h$ can be iterations in which the path is heavy. Thus, hyperedge $e$ has $T-(2d^2+2)h$ iterations in which $p_{t}(e)=1/K$ and it is not heavy. These are indeed good iterations for all of nodes $\{v_1, v_2, \dots, v_d\}$. But we capped the number of good iterations for each node to $\Theta(dK^{2d} \log 1/\delta)$. This is a contradiction, if we choose the constant in $T$ large enough.
\end{proof}

\begin{theorem} \label{thm:MCM1CONGEST} There is a distributed algorithm in the $\mathsf{CONGEST}$ model that computes a $(1+\eps)$-approximation of maximum unweighted matching in $O(\frac{\log\Delta} {\log\log \Delta})$ rounds, for any constant $\eps>0$.
\end{theorem}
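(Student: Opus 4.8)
The argument mirrors that of \Cref{thm:MCM1LOCAL}, but it replaces the (currently unavailable) exact maximal-independent-set-of-augmenting-paths subroutine by the on-the-fly, attenuation-based subroutine of \Cref{sec:1epsMMCONGEST}, and it adds a reduction from general graphs to bipartite ones. First I would package the results of \Cref{sec:1epsMMCONGEST} into a single building block: given a bipartite graph $H=(A,B,E)$, a matching $M$ whose shortest augmenting path has length at least $d$, and a target failure probability $\delta\geq 2^{-\log^{0.3}\Delta}$, the forward/backward traversals, the marking process (\Cref{clm:pathMarking}), the good-iteration count, and the deterministic ``no length-$d$ augmenting path survives among active nodes'' lemma together yield, in $O(d^4\log\Delta/\log\log\Delta)$ iterations of $\mathsf{CONGEST}$ with $K=\log^{1/(3d)}\Delta$, a set of vertex-disjoint length-$d$ augmenting paths that is maximal on the active subgraph, with each node deactivated with probability at most $\delta$. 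Each such iteration is a constant number of length-$d$ BFS-style sweeps, carrying an additional $\Theta(1/\eps^2)$-round overhead per sweep to fit the $O(\log\Delta/\eps^2)$-bit attenuation sums into standard messages, as justified in the precision remark.

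Next I would remove the bipartiteness assumption via the idea of Lotker et al.~\cite{lotkerMatchingImproved}: with the current matching $M$ fixed, randomly two-color the vertices and restrict attention to the bipartite subgraph induced by edges crossing the color classes. A fixed augmenting path of length $d$ is properly two-colored --- hence survives --- with probability $2^{-\Theta(d)}=2^{-\Theta(1/\eps)}$, a constant for constant $\eps$. Therefore, for each target length $d=1,3,5,\dots,L=2\lceil 1/\eps\rceil+1$, repeating ``draw a random two-coloring, run the bipartite building block, augment, and deactivate'' $2^{\Theta(1/\eps)}$ times guarantees that every length-$d$ augmenting path is destroyed (either augmented away or has one of its nodes deactivated) with the desired probability. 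Processing the lengths in increasing order, at the end no augmenting path of length $\leq L$ remains among active nodes, so by fact (1) of Hopcroft--Karp the current matching is a $(1+\eps/2)$-approximation of the maximum matching on the active subgraph.

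Finally I would close the accounting exactly as in \Cref{thm:MCM1LOCAL}. Over all $O(1/\eps)$ lengths and $2^{O(1/\eps)}$ trials, each node is deactivated with total probability $\delta'=2^{O(1/\eps)}\delta$; choosing $\delta=2^{-c/\eps}\eps^2$ for a sufficiently large constant $c$ makes $\delta'=\Theta(\eps^2)$, which still satisfies the $\delta\geq 2^{-\log^{0.3}\Delta}$ constraint inherited from \Cref{sec:1epsMMCONGEST} whenever $\eps$ is constant and $\Delta$ large. Deactivating nodes with probability $\delta'$ destroys at most $2\delta'|OPT|$ optimal-matching edges in expectation (with concentration when $\Delta\leq n^{\gamma}$), and on the active nodes we have a $(1+\eps/2)$-approximation, so the produced matching is a $(1+\eps/2)(1+2\delta')$-approximation, which is $(1+\eps)$. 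The round count is $O(1/\eps)$ lengths $\times$ $2^{O(1/\eps)}$ trials $\times$ $O(d^4\log\Delta/\log\log\Delta)$ iterations $\times$ $O(1/\eps^2)$ rounds per iteration $=2^{O(1/\eps)}\cdot \log\Delta/\log\log\Delta$, which is $O(\log\Delta/\log\log\Delta)$ for any constant $\eps$.

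I expect the main obstacle to be the general-to-bipartite reduction: one must check that the random bipartite subgraph simultaneously retains (with constant probability) every fixed augmenting path and, once restricted, is in a form where the forward/backward sweeps and the implicit, attenuation-based probability representation of paths remain well defined, and that processing the lengths in increasing order --- on random subgraphs rather than the full graph --- still lets one conclude that no short augmenting path survives among active nodes, despite having only probabilistic per-path and per-node guarantees. A secondary subtlety is pinning down the number $2^{O(1/\eps)}$ of trials tightly enough that $\delta'$ can be driven below $\eps$ while respecting the lower bound on $\delta$; this is a matter of choosing the constants in the exponents consistently.
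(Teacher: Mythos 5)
Your proposal assembles the right building blocks --- the bipartite attenuation-based subroutine of \Cref{sec:1epsMMCONGEST}, the random two-coloring reduction, the deactivation-probability accounting, and the round-count tally --- but it restructures the general-to-bipartite reduction in a way that the paper (and the Lotker et al.\ method it invokes) does not, and that restructuring is where the genuine gap lies. The paper iterates $2^{O(1/\eps)}$ \emph{stages} as the outer loop; in each stage it draws one random bipartite subgraph and, within that single subgraph, sweeps through all lengths $d=1,3,\dots,2\lceil 1/\eps\rceil-1$. The approximation guarantee is then cited as coming from Lotker et al.'s analysis of exactly this staged process, which reasons globally about how the symmetric difference with OPT shrinks stage to stage; the paper only adds on top the bookkeeping for deactivated nodes (each node deactivated with probability $\delta'=\delta\cdot 2^{O(1/\eps)}$, costing a $(1+2\delta')$ factor). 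You instead put the lengths on the outside and repeat $2^{O(1/\eps)}$ fresh random subgraphs per length, and try to argue per-path: each fixed length-$d$ augmenting path survives a coloring with probability $2^{-\Theta(d)}$, so $2^{\Theta(1/\eps)}$ trials destroy it. That gives only a \emph{constant} per-path destruction probability, and a node can have $\Delta^d$ length-$d$ augmenting paths through it; you cannot union-bound from ``each path is hit with constant probability'' to ``no path through any node survives,'' which is the statement you then use to invoke Hopcroft--Karp and advance to the next length. You correctly flag this yourself as the main obstacle, and it is: it would require a new analysis (e.g., a charging argument over the symmetric difference with OPT, as in Lotker et al.) rather than the per-path bound you propose. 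The paper sidesteps this entirely by not re-deriving the staged guarantee, only accounting for the added loss of the nearly-maximal (rather than exactly maximal) subroutine.

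Two minor discrepancies, neither fatal: you take $L=2\lceil 1/\eps\rceil+1$ where the paper uses $2\lceil 1/\eps\rceil-1$, and your per-iteration cost should be $O(d/\eps^2)$ (you need $d$ rounds just to traverse the path) rather than $O(1/\eps^2)$; both wash out in the $\mathsf{poly}(1/\eps)\cdot\log\Delta/\log\log\Delta$ asymptotics for constant $\eps$.
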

\begin{proof}
As mentioned before, we utilize a method of Lotker et al.\cite{lotkerMatchingImproved} to compute a $1+\eps$ by finding (nearly-)maximal independent sets of short augmenting paths in bipartite graphs.

Lotker et al.\cite{lotkerMatchingImproved} explain a clever method for $(1+\eps)$-approximation of Maximum Cardinality Matching in general graphs by, in a sense, randomly transforming the problem into bipartite graphs. Concretely, there are $2^{O(1/\eps)}$ stages. In each stage, we get a bipartite graph and we need to find a (nearly-)maximal independent set of augmenting paths of length at most $2\lceil{1/\eps}\rceil-1$ in this bipartite graph. Then, we would augment the current matching with the found set of augmenting paths, and repeat. As Lotker et al. show, at the end, the found matching would be a $(1+\eps)$-approximation.

To find augmenting paths in biparite graphs, we use the method devised and explained above. Particularly, given the bipartite graph, we work on length values $d$= $1, 3, \dots, 2\lceil{1/\eps}\rceil-1$ one by one, in each length, we deactivate a small $\delta$ fraction of nodes and find an independent set of augmenting paths of length $d$ that is maximal among active nodes. This step takes $T=O(\frac{d^4\log \Delta}{\log\log \Delta})$ iterations, and each iteration can be implemented in $O(d/\eps^2)$ rounds, where the $d$-factor comes from the length of the path and the fact that we have to traverse it, and the $1/\eps^2$ factor comes from the fact that we need to send around messages that need $O(\log n/\eps^2)$ bits. Overall, this is at most $O(\frac{d^5\log \Delta}{\eps^2\log\log \Delta})$ rounds per iteration and thus at most $O(\frac{d^6\log \Delta}{\eps^2\log\log \Delta})$ rounds per stage, for going through all the possible length values. That is still $O(\mathsf{poly}(1/\eps) \cdot \frac{\log \Delta}{\log\log \Delta})$ rounds.

The method for generating the bipartite graphs is quite clean: color each node randomly red or blue, each with probability $1/2$. Then, keep each node in the bipartite graph if it is unmatched, or if it is matched but its matching edge becomes bi-chromatic. Also, keep all bi-chromatic edges supported on these nodes. This is clearly a bipartite graph and moreover, the step of creating this graph can be easily performed distributedly.

Each time that we use the nearly-maximal independent set algorithm, the guarantee is that, except for a negligible portion of at most $\delta |OPT|$ nodes which we excuse and deactivate, for a small $\delta \ll \eps$, the found augmenting paths are maximal in that bipartite graph among the remaining active nodes. Overall, we will deactivate each node with probability at most $\delta'=\delta 2^{O(1/\eps)}$. By choosing $\delta = 2^{-\Omega(1/\eps)}$, which increases the round complexity of the near-maximal augmenting path algorithm only by an $O(1/\eps)$ factor, we can ensure that $\delta'\ll \eps$. Thus, the overall approximation would remain at most $(1+\eps)(1+\delta') \leq 1+2\eps$.  Moreover, the round complexity over all stages is $O(2^{O(1/\eps)} \cdot \frac{\log \Delta}{\log\log \Delta})$.
\end{proof}

\subsection{Alternative Fast Method for $2+\eps$ Approximation of Unweighted Maximum Matching}
Here, we explain an alternative method for computing a $2+\eps$-approximation of maximum unweighted matching. This approach might be interesting especially because of its simplicity, but perhaps also because of the better probability concentration that it provides on the approximation.
\subsubsection{The Algorithm for Bipartite Unweighted Graphs}

\label{sec:2epsMMbipartite}
\paragraph{The Algorithm.} In each round, each node $v$ on the left side of the bipartite graph sends a \emph{matching proposal} on a randomly chosen one of its remaining edges to right side nodes. Each right side node accepts the proposal from the highest id, if there is one.

\begin{lemma} For any $K$, the algorithm produces a matching with approximation factor $2+\eps$, in $O(K\log 1/\eps+ \log \Delta/\log K)$ rounds, with high probability. Particularly, each left-node in the optimal matching remains unmatched but non-isolated with probability at most $\eps/2$. Optimizing over $K$ leads to round complexity $O(\log \Delta/\log(\log \Delta/\log(1/\eps)))$.
\end{lemma}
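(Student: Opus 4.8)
The plan is to dispatch the easy parts first and then reduce everything to a single per-node probabilistic estimate. That the output is a matching is immediate: in each round every left node sends at most one proposal and every right node accepts at most one, and a node and its incident edges are deleted as soon as it is matched, so the accepted proposals form a set of pairwise vertex-disjoint edges. Fix a maximum matching $\mathrm{OPT}$ and set $T=\Theta(K\log(1/\eps)+\log\Delta/\log K)$. The core claim I would establish is: for every left-$\mathrm{OPT}$ node $v$, after $T$ rounds the probability that $v$ is simultaneously unmatched and still has a surviving neighbor is at most $\eps/2$. Granting this, the $(2+\eps)$-approximation follows by a charging argument: every left-$\mathrm{OPT}$ node that ends matched is charged to its own matching edge, every left-$\mathrm{OPT}$ node that ends isolated is charged to the matching edge incident to its $\mathrm{OPT}$-partner on the right, and every found matching edge is charged at most twice; hence $|M|\ge(|\mathrm{OPT}|-X)/2$ with $X$ the number of ``bad'' left-$\mathrm{OPT}$ nodes and $\mathbb{E}[X]\le(\eps/2)|\mathrm{OPT}|$. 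Since the matching status of $v$ after $T$ rounds is a function of the coin tosses in a radius-$O(T)$ ball around $v$ (one hop of dependence per round), the events $\{v\text{ bad}\}$ have a bounded dependency structure, and a concentration bound exploiting this locality (in the spirit of the footnote to \Cref{thm:MCM2CONGEST}) upgrades the expectation bound to $X\le\eps|\mathrm{OPT}|$ with high probability, giving a $(2+\eps)$-approximation.

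So the task is the per-node bound, which I would prove by a golden-round argument modelled on \Cref{lem:goldCount} and \Cref{lem:removalPerRound}. For a left node $w$ write $p_t(w)=1/\deg_t(w)$ (its chance of proposing along any fixed surviving incident edge in round $t$), and for a right node $u$ write $c_t(u)=\sum_{w\in N_t(u)}p_t(w)$ for its expected number of proposals. Call round $t$ \emph{golden} for $v$ if $v$ is still unmatched and non-isolated, $\deg_t(v)\le K$, and $v$ has a surviving neighbor $u$ with $c_t(u)<2$ (minor technicalities involving left nodes of current degree $1$, which are resolved within one further round, are handled by a refinement of this definition). In a golden round $v$ proposes to such a $u$ with probability at least $1/\deg_t(v)\ge1/K$, and conditioned on that, with probability at least $\prod_{w\in N_t(u)}(1-p_t(w))\ge4^{-c_t(u)}\ge1/16$ no higher-id neighbor of $u$ also proposes to $u$, so $v$ is accepted; hence $v$ gets matched in a golden round with probability at least $1/(16K)$. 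Revealing the randomness round by round, this chance holds regardless of the past (indeed regardless of coins outside $N^{+}_2(v)$), so after $\Theta(K\log(1/\eps))$ golden rounds the probability $v$ is still unmatched is at most $\eps/2$. It remains to show that, up to the round $v$ gets matched, removed, or isolated, at most $O(\log\Delta/\log K)$ rounds are not golden.

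Bounding the non-golden rounds is where the real work lies and is the main obstacle. A non-golden round (with $v$ still unresolved) is one in which either $\deg_t(v)>K$ or every surviving neighbor $u$ of $v$ has $c_t(u)\ge2$; in either case $v$'s surviving neighborhood is ``congested'', and a right node with $c_t(u)\ge2$ receives a proposal — hence is matched and removed — with probability at least $1-e^{-2}$. The plan is to attach to $v$ a potential $\Phi_t(v)$ capturing the congestion mass in its two-hop neighborhood (a natural candidate is $\Phi_t(v)=\sum_{u\in N_t(v)}c_t(u)=\sum_{w}p_t(w)\,|N_t(v)\cap N_t(w)|$), which is $\mathrm{poly}(\Delta)$ at the start and is bounded below by a fixed constant as long as $v$ is unresolved, and to prove that in every non-golden round $\Phi_t(v)$ drops by a $\Theta(K)$ factor in expectation; this is exactly what makes the count of non-golden rounds $O(\log_K\Delta)=O(\log\Delta/\log K)$. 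Two points make this delicate and are the crux: (i) one must extract a $K$-factor drop rather than a constant-factor one, which seems to require a multi-scale accounting that balances the ``low-congestion/golden'' regime against the ``congested'' regime instead of using a single congestion threshold; and (ii) because $p_t(w)=1/\deg_t(w)$ \emph{rises} whenever a left node loses an edge, $c_t(u)$ and hence $\Phi_t(v)$ are not monotone and can occasionally grow, so a net-effect bookkeeping in the style of \Cref{lem:goldCount} is needed to argue that each growth round offsets only $O(1)$ shrink rounds. Converting these expectation statements into a high-probability bound over the $T$ rounds is then routine, via the round-by-round conditionally independent view. Combining the golden and non-golden counts gives $T=O(K\log(1/\eps)+\log\Delta/\log K)$.

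Finally, to reach the stated closed form I would optimize $K$ in $T(K)=\Theta(K\log(1/\eps)+\log\Delta/\log K)$: choosing $K$ with $K\log K=\log\Delta/\log(1/\eps)$, i.e. $K=\Theta\!\left(\dfrac{\log\Delta}{\log(1/\eps)\,\log(\log\Delta/\log(1/\eps))}\right)$, balances the two terms at $\Theta\!\left(\dfrac{\log\Delta}{\log(\log\Delta/\log(1/\eps))}\right)$, which is the claimed round complexity. Since each node only ever needs sums and comparisons over its current neighborhood, the whole procedure is a local aggregation algorithm and runs within the same number of rounds in the $\mathsf{CONGEST}$ model.
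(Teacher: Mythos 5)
Your proposal takes a genuinely different, and considerably harder, route than the paper, and it leaves the central step incomplete. You correctly reduce to the per-node claim and set up a golden/non-golden round dichotomy in the spirit of \Cref{lem:goldCount} and \Cref{lem:removalPerRound}, using $p_t(w)=1/\deg_t(w)$ and the right-side load $c_t(u)$. The golden-round success probability $\Omega(1/K)$ is fine (modulo the degree-one caveat you flag). But you openly concede that the non-golden count --- the claim that $v$ survives at most $O(\log\Delta/\log K)$ non-golden rounds --- is ``the crux,'' and what you offer there is only a plan: introduce a two-hop potential $\Phi_t(v)$ and prove it drops by a $\Theta(K)$ factor per non-golden round. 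That is a genuine gap. You yourself identify the two reasons it is not obviously closable: getting a $K$-factor rather than a constant-factor drop, and the non-monotonicity of $c_t(u)$ as left degrees shrink. Neither is resolved, and resolving them looks at least as hard as the original lemma. So as written the proposal does not establish the round bound.

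The paper's argument avoids all of this. There is no $p_t$, no $c_t$, no potential function, and no golden-round bookkeeping. Fix the round's coins and reveal the left nodes' proposals in \emph{descending id order}. When $v$'s turn comes, look at the fraction of $v$'s current right-neighbors that are still free (i.e.\ have not yet been grabbed by a higher-id proposer this round). If that fraction is at least $1/K$, then $v$'s uniformly random proposal lands on a still-free neighbor with probability at least $1/K$, and since all remaining proposers have lower ids, $v$'s proposal is accepted: $v$ is matched with probability at least $1/K$. Otherwise, fewer than a $1/K$ fraction of $v$'s current neighbors are free, so after this round (in which all matched right-nodes are removed) $v$'s degree has dropped by at least a $K$ factor. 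The degree can drop by a $K$ factor at most $\log_K\Delta$ times; in every other round before resolution $v$ is matched with probability at least $1/K$, independent of history. Running $\Theta(K\log(1/\eps))$ such rounds makes the failure probability at most $\eps/2$, giving $T=O(K\log(1/\eps)+\log\Delta/\log K)$. Your charging argument and your optimization of $K$ match the paper's. The lesson is that the id-priority acceptance rule, combined with revealing the round in descending-id order, collapses exactly the part you were trying to control via a potential function into a one-line dichotomy; you should replace the $\Phi_t$ plan with this argument.
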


\begin{proof} Let us call a left node unlucky if it remains unmatched but non-isolated after $O(K\log 1/\eps+ \log \Delta/\log K)$ rounds. We first show that for each left node $v$, the probability that $v$ is unlucky is at most $\eps/2$. Moreover, this depends only on the randomness of $v$ and it holds regardless of the randomness used by other nodes.

The key part in that is to show that in each round, either $v$'s degree falls by a $K$ factor or its proposal succeeds with probability at least $1/K$. To prove this intermediate claim, let's examine the proposals of the left nodes in descending order of ids. When we reach the turn of node $v$, either less than $1/K$ fraction of $v$'s right neighbors remain unmatched, in which case $v$'s degree has fallen by a $K$ factor, or otherwise $v$ has a chance of at least $1/K$ that its proposal was to a a currently unmatched neighbor, in which case $v$ would be successfully matched.

After $O(K\log 1/\eps+ \log \Delta/\log K)$ rounds, the probability that $v$ remains unmatched but non-isolated is at most $\eps/2$. This is because, there can be at most  $\log \Delta/\log K$ rounds where degree falls by a $K$ factor and the probability of failing in each other round is at most $1-1/K$.

Let us now examine the approximation factor. Consider the OPT matching. Each left-node of OPT gets unlucky with probability at most $\eps/2$. Hence, at most $\eps$ fraction of its left nodes get unlucky, with probability\footnote{This is similar to the probability concentration obtained by Lotker et al.\cite{lotkerMatchingImproved}, which they call \emph{high probability}.} at least $1-e^{-\Omega(\eps |OPT|)}$. In the remainder, each edge of the found matching can kill at most $2$ edges of the OPT matching. So the found matching is a $(2+\eps)$-approximation, with high probability. 
\end{proof}

\subsubsection{The Algorithm for General Unweighted Graphs}
We solve the general case by randomly transforming it to the bipartite case. In each iteration, randomly call each node left or right with probability half. This produces a bipartite graph which \emph{preserves} each edge of the OPT matching with probability $1/2$. Run the bipartite algorithm on this graph and then remove the found matching edges, and all $G$-edges incident on their endpoints. Repeat this experiment $O(\log 1/\eps)$ times.

\begin{lemma} The algorithm finds a $(2+\eps)$-approximation of the maximum matching in general graphs in $O(\log 1/\eps \cdot \frac{\log \Delta}{\log(\log \Delta/\log(1/\eps))})$ rounds.
\end{lemma}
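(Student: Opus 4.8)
The plan is to use the bipartite subroutine as a black box and run the standard reduction‑to‑bipartite argument of Lotker et al., doing the accounting edge by edge against a fixed optimal matching. The round complexity is essentially immediate: the algorithm consists of $r=\Theta(\log 1/\eps)$ stages, each of which draws a random $2$‑coloring ($O(1)$ rounds), runs the bipartite algorithm once on a graph of maximum degree at most $\Delta$, and then deletes the matched nodes together with their incident $G$‑edges ($O(1)$ rounds). By the bipartite lemma, with $K$ optimized, one bipartite run costs $O\!\big(\tfrac{\log\Delta}{\log(\log\Delta/\log(1/\eps))}\big)$ rounds, so the total is $O\!\big(\log(1/\eps)\cdot\tfrac{\log\Delta}{\log(\log\Delta/\log(1/\eps))}\big)$ rounds, as claimed.

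For the approximation guarantee I would fix a maximum matching $M^*$ and let $M$ be the matching returned by the algorithm. Call an edge $e=\{u,v\}\in M^*$ \emph{bad} if at the end neither $u$ nor $v$ is matched in $M$, and let $B$ be the number of bad edges. Every non‑bad edge of $M^*$ has an $M$‑matched endpoint, and, since $M^*$ is a matching, distinct non‑bad edges can be charged to distinct $M$‑matched vertices; hence $|M^*|-B\le 2|M|$, i.e. $|M^*|\le 2|M|+B$ (a deterministic inequality). Therefore it suffices to prove $\Pr[e\text{ is bad}]\le\eps$ for every $e\in M^*$: then $\mathbb E[B]\le\eps|M^*|$, so $|M^*|\le 2\,\mathbb E|M|+\eps|M^*|$ and $\mathbb E|M|\ge\tfrac{1-\eps}{2}|M^*|$, which after rescaling $\eps$ by a constant is a $(2+\eps)$‑approximation in expectation.

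The heart of the matter is a per‑stage progress bound, and this is the step I expect to require the most care. Note first that an edge $e\in M^*$ is deleted from the residual graph during a stage precisely when one of its endpoints becomes matched, and matched edges are never later removed from $M$; hence $e$ is bad iff it survives all $r$ stages. I would then show: for every stage $i$ and every outcome of stages $1,\dots,i-1$, if $e=\{u,v\}\in M^*$ is still present then it is deleted during stage $i$ with probability at least $\tfrac14$. Indeed, $e$ becomes bichromatic — hence an edge of the stage‑$i$ bipartite graph — with probability exactly $\tfrac12$, independently of the past; conditioned on that, say with $u$ on the left, $u$ is a non‑isolated left node of that bipartite graph. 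The bipartite lemma's per‑node bound is derived solely from the worst‑case ``degree drops by a $K$ factor, or the proposal succeeds with probability $\ge 1/K$'' claim, so it applies to \emph{every} non‑isolated left node, irrespective of the other nodes' coins and of how the adversarial residual graph arose; thus $u$ ends the stage unmatched‑but‑non‑isolated with probability $\le\eps/2$. Finally, if neither $u$ nor $v$ gets matched during the stage, then $v$ is never removed, so $u$ stays non‑isolated, and ``$e$ survives stage $i$'' is contained in ``$u$ unmatched‑but‑non‑isolated''. Combining, $\Pr[e\text{ survives stage }i\mid\text{past}]\le\tfrac12+\tfrac12\cdot\tfrac\eps2\le\tfrac34$, and multiplying over the $r$ stages gives $\Pr[e\text{ bad}]\le(3/4)^r\le\eps$ once $r=\lceil\log_{4/3}(1/\eps)\rceil=\Theta(\log 1/\eps)$.

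The obstacles I foresee are all in this last step: making the conditioning rigorous so the bipartite lemma can be invoked inside an adversarially chosen, history‑dependent residual graph and for a node $u$ that need not belong to a maximum matching of that bipartite graph, and justifying the reduction of the ``both endpoints unmatched'' event to the ``unmatched‑but‑non‑isolated'' event that the lemma controls (handled above via the observation that an unmatched $v$ keeps $u$ non‑isolated). To upgrade the expectation statement to high probability I would argue, exactly as Lotker et al. do in the bipartite case, that $B=\sum_{e\in M^*}\mathbf{1}[e\text{ bad}]$ is a sum of indicators whose dependencies are confined to bounded‑radius balls, so a standard concentration bound yields $B\le 2\eps|M^*|$ with probability $1-e^{-\Omega(|M^*|)}$, hence a $(2+\eps)$‑approximation with high probability.
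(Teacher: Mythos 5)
Your proof is correct and follows essentially the same route as the paper: charge each OPT edge to an incident found-matching edge (giving $|M^*|\le 2|M|+B$), invoke the bipartite lemma's per-node bound---which, as you correctly note, is a worst-case guarantee relying only on that node's own coins and hence survives conditioning on the adversarial history---to show a constant per-stage deletion probability for each surviving OPT edge, and compound over $\Theta(\log 1/\eps)$ stages. The paper organizes the per-edge accounting slightly differently (it conditions on the first stage in which $e$ becomes bichromatic rather than compounding a $3/4$ survival bound per stage), but this is cosmetic, and your reduction of ``$e$ survives'' to ``the left endpoint is unmatched-but-non-isolated'' is the same observation the paper makes implicitly. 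One small remark on your concentration step: since the event ``$e$ bad'' in fact depends only on the coin tosses of $e$'s two endpoints (their colorings and the left endpoint's proposal coins), and OPT edges are vertex-disjoint, the indicators are genuinely mutually independent; you can therefore apply a plain Chernoff bound, exactly as the paper does to get the $1-e^{-\Omega(\eps|OPT|)}$ guarantee, rather than falling back to a bounded-dependency concentration inequality, whose dependency degree here would be $\mathrm{poly}(\Delta)$ and would give a weaker tail.
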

\begin{proof}
The time complexity is immediate as the algorithm is made of $O(\log 1/\eps)$ repetitions of the bipartite algorithm. We next present the approximation analysis.
Consider the OPT matching $\mathcal{M}$, and let us examine what happens to one edge $e=\{v, u\}$ where $e \in\mathcal{M}$. In the first iteration that either of nodes $v$ and $u$ is matched (not necessarily to each other), we charge the blame of $e$ not being in our matching to that found matching edge incident on $v$ or $u$ (to exactly one of them if there are two). This way, each found matching edge $e'$ might be blamed for at most $2$ matching edges of OPT, one from each endpoint of $e'$. Now, in each of the first $O(\log 1/\eps)$ iterations, either edge $e$ is already removed from the graph because at least one of its endpoints was matched before, or with probability $1/2$ edge $e$ gets preserved in the bipartite graph, in which case there are only two ways it is not taken into the matching: (1) one of its endpoints gets matched to a different node, (2) the left node endpoint got unlucky in the bipartite algorithm. Note that the latter happens with probability at most $\eps/4$, as the bipartite algorithm guarantees. Throughout the $O(\log 1/\eps)$ iterations, with probability $1-\eps/4$, edge $e\in \mathcal{M}$ is preserved in at least one iteration, unless it was already removed. Hence, we can say that with probability $1-\eps/2$, edge $e \in \mathcal{M}$ has put its blame on some other edge added to the matching. Note that we also have independence between different matching edges edge in $\mathcal{M}$, as (1) whether they get unlucky in the bipartite algorithm is independent, and (2) whether they get preserved in a bipartite transformation is independent. Thus, by Chernoff bound, at most $\eps$ fraction of OPT edges have not put their blame on some edge in our found matching, with probability at least $1-e^{-\Omega(\eps|OPT|)}$. Since each found matching edge is blamed at most twice, the found matching is a $(2+\eps)$-approximation, with high probability.
\end{proof}

\end{document}